

\documentclass[jcs]{iosart1c}

\usepackage[T1]{fontenc}
\usepackage{times}%

\usepackage[numbers]{natbib}
\usepackage{dcolumn}
\usepackage{graphics}

\newcolumntype{d}[1]{D{.}{.}{#1}}

\firstpage{1} \lastpage{7} \volume{1} \pubyear{2014}

\usepackage{proof}
\usepackage{amssymb}
\usepackage{amsmath}
\usepackage{amsthm}
\usepackage{stmaryrd}
\usepackage[all]{xy}
\RequirePackage{txfonts}
\DeclareMathAlphabet{\mathsl}{OT1}{cmr}{m}{sl}

\usepackage{color}
\usepackage{url}
\usepackage{ifpdf}
\usepackage{lineno}
\usepackage{balance}
\usepackage{subcaption}
\usepackage{lipsum}
\usepackage{wrapfig}
\usepackage{mathtools}
\ifpdf
\usepackage[colorlinks=true,linkcolor=blue,citecolor=blue]{hyperref}
\fi

\usepackage[multiple]{footmisc}

\long\def\comment#1{}
\newcommand{\eg}{{\em e.g.}}
\newcommand{\ie}{{\em i.e.}}
\newcommand{\etal}{\emph{et al.}}

\newcommand\uSize{\ensuremath{k}}
\newcommand\iSize{\ensuremath{m}}

\newcommand{\Ascr}{\mathcal{A}}
\newcommand{\Pscr}{\mathcal{P}}
\newcommand{\Uscr}{\mathcal{U}}

\newcommand{\Cscr}{\mathcal{C}}

\newcommand{\Sscr}{\mathcal{S}}
\newcommand{\Rscr}{\mathcal{R}}
\newcommand{\Tscr}{\mathcal{T}}

\newcommand{\Zscr}{\mathcal{Z}}
\newcommand{\Gscr}{\mathcal{G}}

\newcommand{\bluend}{ {\color{blue} $\blacktriangleleft$ }}
\newcommand{\blue}{ {\color{blue} $\blacktriangleright$ }}

\newcommand{\tup}[1]{\langle#1\rangle}

\newcommand{\arr}{\rotatebox[origin=c]{-90}{$\leftrightsquigarrow$}}

 \newtheorem{theorem}{\bf Theorem}[section]
\newtheorem{definition}[theorem]{\bf Definition}
\newtheorem{lemma}[theorem]{\bf Lemma}
 \newtheorem{proposition}[theorem]{\bf Proposition}
\newtheorem{remark}[theorem]{\bf Remark}
\newtheorem{corollary}[theorem]{\bf Corollary}

\usepackage{tweaklist}

\newcommand\lra{\longrightarrow}

\newcommand\Int[1]{int\,(#1)}
\newcommand\dec[1]{dec\,(#1)}
\newcommand\Dmax{D_{max}}


\begin{document}
\begin{frontmatter}                           

%


\title{Time, Computational Complexity, and Probability in the Analysis of Distance-Bounding Protocols}

\runningtitle{Time, Computational Complexity, and Probability in the Analysis of Distance-Bounding Protocols}

\author[B,C]{\fnms{Max} \snm{Kanovich}},
\author[D]{\fnms{Tajana} \snm{Ban Kirigin}},
\author[E,H]{\fnms{Vivek} \snm{Nigam}},
\author[F,C]{\fnms{Andre} \snm{Scedrov}}
and
\author[G]{\fnms{Carolyn} \snm{Talcott}}
\runningauthor{M. Kanovich et al.}
\address[B]{Department of Computer Science (UCL-CS),  University College London, London, UK\\ E-mail: m.kanovich@ucl.ac.uk}
\address[C]{Faculty of Computer Science, National Research University Higher School of Economics, Moscow,  Russian Federation}
\address[D]{Department of Mathematics, University of Rijeka, Rijeka, Croatia \\ E-mail:bank@math.uniri.hr}
\address[E]{Computer Science Department, Federal University of Para\'iba, Jo\~ao Pessoa, Brazil \\ E-mail: vivek@ci.ufpb.br}
\address[F]{Department of Mathematics, University of Pennsylvania, Philadelphia, PA, USA \\ E-mail: scedrov@math.upenn.edu}
\address[G]{Computer Science Laboratory, SRI International,  Menlo Park, CA, USA \\ E-mail: clt@csl.sri.com}
\address[H]{fortiss, Munich, Germany}

\begin{abstract}
Many security protocols rely on the assumptions on the physical properties in which its protocol sessions will be carried out. For instance, Distance Bounding Protocols take into account the round trip time of messages and the transmission velocity to infer an upper bound of the distance between two agents. We classify such security protocols as Cyber-Physical. 
Time plays a key role in design and analysis of many of these protocols. This paper investigates the foundational differences and the impacts on the analysis when using models with discrete time and models with dense time. We show that there are attacks that can be found by models using dense time, but not when using discrete time. We illustrate this with an attack that can be carried out on most Distance Bounding Protocols. In this attack, one exploits the execution delay of instructions during one clock cycle to convince a verifier that he is in a location different from his actual position. We additionally present a probabilistic analysis of this novel attack. As a formal model for representing and analyzing Cyber-Physical properties, we propose a Multiset Rewriting model with dense time suitable for specifying cyber-physical security protocols. 
We introduce Circle-Configurations and show that they can be used to symbolically solve the reachability problem for our model, and  show that for the important class of balanced theories the reachability problem is PSPACE-complete.
We also  show how our model can be implemented using the computational rewriting tool Maude,  the machinery that automatically searches for such attacks. 
\end{abstract}

\begin{keyword}
Multiset Rewrite Systems \sep Cyber-Physical Security Protocols \sep  Protocol Security \sep Computational Complexity \sep Maude
\end{keyword}

\end{frontmatter}





\section{Introduction}
\label{sec:intro}

With the development of pervasive cyber-physical systems and consequent security
issues, it is often necessary to specify protocols that not only make use of cryptographic keys and nonces,\footnote{ In protocol security literature~\cite{cervesato99csfw,durgin04jcs} fresh values are usually called \emph{nonces. }} but also take into account the physical properties of the environment where its protocol sessions are carried out. We call such protocols \emph{Cyber-Physical Security Protocols}. For instance, Distance Bounding Protocols~\cite{brands93eurocrypt} is a class of cyber-physical security protocols which infers an upper bound on the distance between two agents from the round trip time of messages.
In a distance bounding protocol session, the verifier ($V$) and the prover ($P$) exchange messages:
\begin{equation} \label{eq: DBP}
 \begin{array}{l}
  V \lra P : m \\
  P \lra V : m'
 \end{array}
\end{equation}
where $m$ is a challenge and $m'$ is a response message (constructed using $m$'s components). To infer the distance to the prover, the verifier remembers the time, $t_0$, when the message $m$ was sent, and  the time, $t_1$, when the message $m'$ returns. From the difference $t_1 - t_0$ and the assumptions on the speed of the transmission medium, $v$, the verifier can compute an upper bound on the distance to the prover, namely $(t_1 - t_0) \times v$. 
Typically, the verifier grants the access to the prover if the inferred upper bound on the distance  does not exceed some
pre-established fixed \emph{time response bound}, R, given by the protocol specification.

Distance bounding protocol sessions are used in a number of cyber-physical security protocols to infer an upper-bound on the distance of participants. Examples include Secure Neighbor Discovery, Secure Localization Protocols~\cite{tippenhauer09esorics, capkun06sacom,shmatikov07asian}, and Secure Time Synchronization Protocols~\cite{sun06ccs,ganeriwal08iss}. The common feature in most cyber-physical security protocols is that they mention cryptographic keys, nonces and time. (For more examples, see \cite{basin11iss,meadows07booktitle,cremers12oakland} and references therein.)

A major problem of using the traditional protocol notation for the description of distance bounding protocols,  as in (\ref{eq: DBP}), is that many assumptions about time, such as the time requirements for the fulfillment of a protocol session, are not formally specified. 
It is only informally described that the verifier remembers the time $t_0$ and $t_1$ and which exact moments these correspond to. Moreover, from the above description, it is not clear which assumptions about the network are used, such as the transmission medium used by the participants. Furthermore, it is not formally specified which properties does the above protocol ensure, and in which conditions and against which intruders.

It is easy to check that the above protocol is not safe against the standard Dolev-Yao intruder~\cite{DY83} who is capable of intercepting and sending messages anywhere at anytime. The Dolev-Yao intruder can easily convince $V$ that $P$ is closer than he actually is. The intruder first intercepts the message $m$ and with zero transmission time sends it to $P$. Then he intercepts the message $m'$ and instantaneously sends it to $V$, reducing the round-trip-time ($t_1 - t_0$).
 Thus, $V$ will believe that $P$ is much closer than he actually is. Such an attack does not occur in practice as messages take time to travel from one point to another. Indeed, the standard Dolev-Yao intruder model is not a suitable model for the 
analysis of cyber-physical protocols. Since such an intruder is able to intercept and send messages anywhere at anytime, he appears faster than the speed of light. In fact, a major difference between cyber-physical protocols and traditional security protocols is that there is not necessarily a network in the traditional sense, as any transmission medium used is the part of the network. 

Existing works have proposed and used models with time for the analysis of distance bounding protocols where the attacker is constrained by some physical properties of the system. Some models have considered dense time~\cite{basin11iss}, while others have used discrete time~\cite{boureanu13iacr}. However, although these models have included time, the foundational differences between  these models and the impacts to analysis has not been investigated 
in more detail. For example, these models have not investigated the fact that provers, verifiers, and attackers may have different clock rates, \ie, processing speeds, which can affect security. 
 Indeed, already in the original paper on Distance Bounding Protocols, Brands and Chaum~\cite{brands93eurocrypt} suspected that a verifier may be subject to attack as it uses discrete clock ticking, \ie, a processor, and thus measures time in discrete units, while the environment and the powerful attacker is not limited by a particular clock. 
However, until now, no careful analysis of such attacks has been carried out.
This paper addresses this gap.

A key observation of this paper is that models with dense time abstract the fact that attacker clocks may tick at any rate.
The attacker can mask his location by exploiting the fact that a message may be sent at any point between two clock ticks of the verifier's clock, while the verifier believes that it was sent at a particular time. Depending on the speed of the verifier, \ie, its clock rate, the attacker can \emph{in principle} convince the verifier that he is very close to the verifier (less than a meter) even though he is very far away (many meters away). We call this attack \emph{attack in-between-ticks}.

Interestingly, from a foundational point of view, in our formalization there is no complexity increase when using a model with dense time when compared to a model with discrete time. In our previous work~\cite{kanovich12rta, kanovich15mscs}, we proposed a rewriting framework which assumed discrete time. We showed that the reachability problem is PSPACE-complete. Here we show how to formalize systems with dense time, and that, if we extend the model with dense time, the reachability problem is still PSPACE-complete. For this result we introduce a novel machinery called Circle-Configurations. Moreover, we show that it is possible to automate verification of whether a protocol written in our model is subject to an instance of an attack in-between-ticks. 

However, our symbolic analysis only provides an yes or no answer, that is, it is only possible to infer whether a system is or not subject to an attack in-between-ticks. For a more accurate analysis, we construct a probabilistic model for analysing Distance Bounding Protocols. This model provides precise measure of how probable a system is subject to an attack. A difference, however, to the proposed symbolic model is that it seems less likely to carry out analysis in an automated fashion. We leave this investigation to future work.

\vspace{2mm}
The paper is organised as follows.
Section~\ref{sec:motivation} contains the novel attack in-between-ticks. In Section~\ref{sec:msr} we introduce a formal model based on Multiset Rewriting (MSR) which includes dense time. We also show how to specify distance bounding protocols in this language. 
In Section \ref{s-prob-main}  we provide a probabilistic analysis related to distance bounding protocols and the attack in-between-ticks. 
In Section \ref{sec:Maude} we show how our model can be implemented using the computational rewriting tool Maude. In particular, Maude  is able to automatically find the  novel attack in-between-ticks.
Section~\ref{sec:circle} introduces a novel machinery, called circle-configurations, that allows one to symbolically represent configurations that mention dense time.  In Section~\ref{sec:complex} we prove that the reachability problem for timed bounded memory protocols~\cite{kanovich13esorics} is PSPACE-complete. Finally, in Section~\ref{sec:related}, we comment on related and future work. 

This paper considerably extends the conference paper~\cite{kanovich15post}. Besides adding the proofs of our complexity results, Section~\ref{s-prob-main} and \ref{sec:Maude} contain new material.

\section{Motivating Examples}
\label{sec:motivation}

In this section we point to some subtleties of cyber-physical protocol analysis and verification. 
We  present examples of protocols that serve to illustrate the differences between models with discrete and dense time.

One such an example is a timed version of the classical  \emph{Needham-Schroeder} protocol~\cite{needham78cacm}  which we already presented  in our conference paper \cite{kanovich15post}.
We have shown that the timed Needham-Schroeder protocol
may be safe in the discrete time model, but an attack (similar to Lowe attack~\cite{lowe96tacas}) is possible when using models with dense time. 
This phenomena shows that some attacks on  cyber-physical security protocols  may only be found when using models with dense time.

\subsection{Time-Bounding Needham-Schroeder Protocol}
We first show some  subtleties of cyber-physical security protocol analysis by re-examining the original  \emph{Needham-Schroeder} public key protocol \cite{needham78cacm} (NS), presented in Figure~\ref{fig: NS}. Although this protocol is well known to be insecure \cite{lowe96tacas}, we look at it from another dimension, the dimension of time.
We check whether Needham and Schroeder were right after all, in the sense that their protocol can be considered secure under some time requirements. In other words, we investigate whether NS can be fixed by means of time.

We timestamp each event in the protocol execution, that is, we explicitly mark the time of sending and receiving messages by a participant. We then propose a timed version of this protocol, called {\em Time-Bounding Needham-Schroeder Protocol} (Timed-NS), as depicted in Figure \ref{fig: timed NS}.
The protocol exchanges the same messages as in the original version, but the last protocol message, \ie,~the confirmation message 
\mbox{$ \{N_B\}_{K_{B}}$},  is sent by $A$ only if the time difference  \ $t_3 - t_0$ is smaller or equal to the given {\em response bounding time} $R$.

\begin{figure}[t]
\centering
\begin{subfigure}{.48\textwidth}
  \centering
\includegraphics[width={ 0.40\textwidth}]{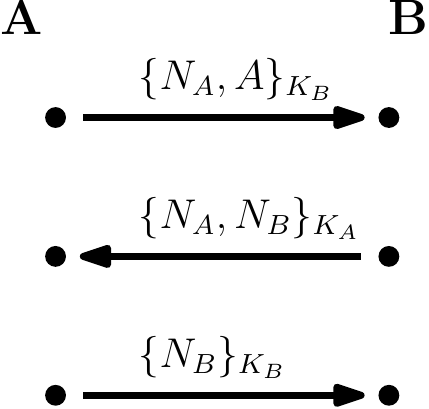}\medskip \bigskip
\caption{Needham-Schroeder protocol }
\label{fig: NS}
\end{subfigure}%
\quad
\begin{subfigure}{.48\textwidth}
  \centering
\includegraphics[width={ 0.40\textwidth}]{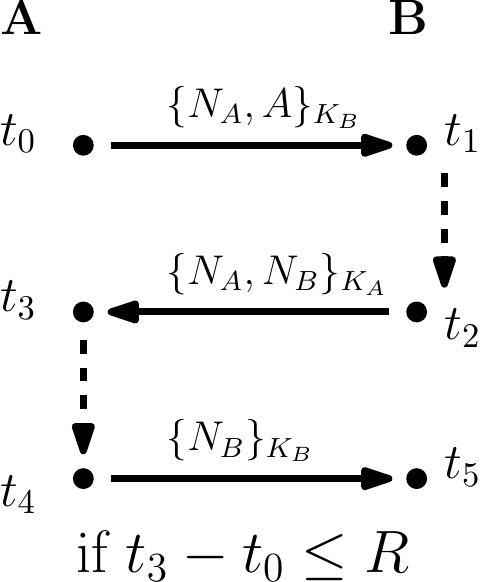} 
\caption{Timed Needham-Schroeder Protocol}
\label{fig: timed NS}
\end{subfigure}
\caption{Adding time to  Needham-Schroeder Protocol}
\label{fig:both NS}
\bigskip \medskip
\begin{subfigure}{.47\textwidth}
  \centering
\includegraphics[width={ 0,7\textwidth}]{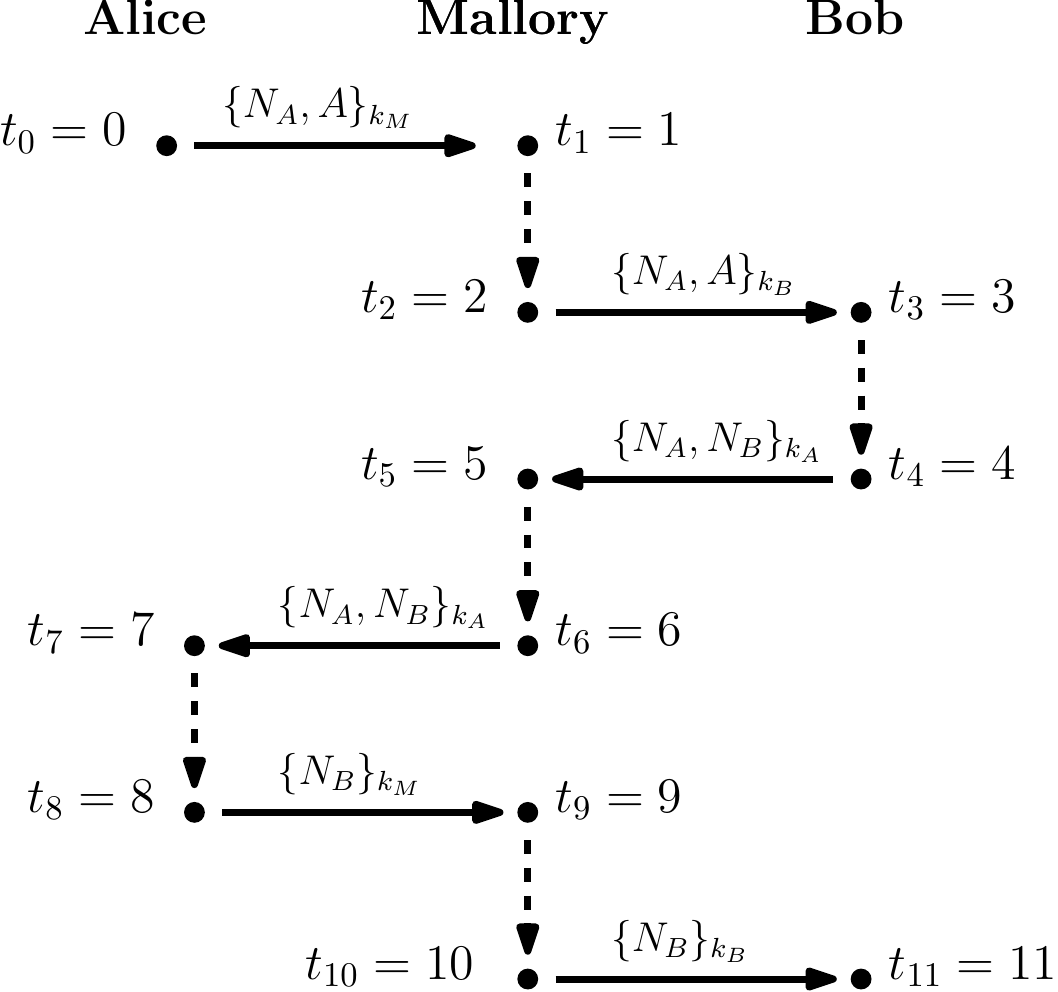}
\medskip
\caption{Discrete Time Model}
\label{fig: Lowe discrete}
\end{subfigure}%
\quad
\begin{subfigure}{.47\textwidth}
  \centering
\includegraphics[width={ 0,7\textwidth}]{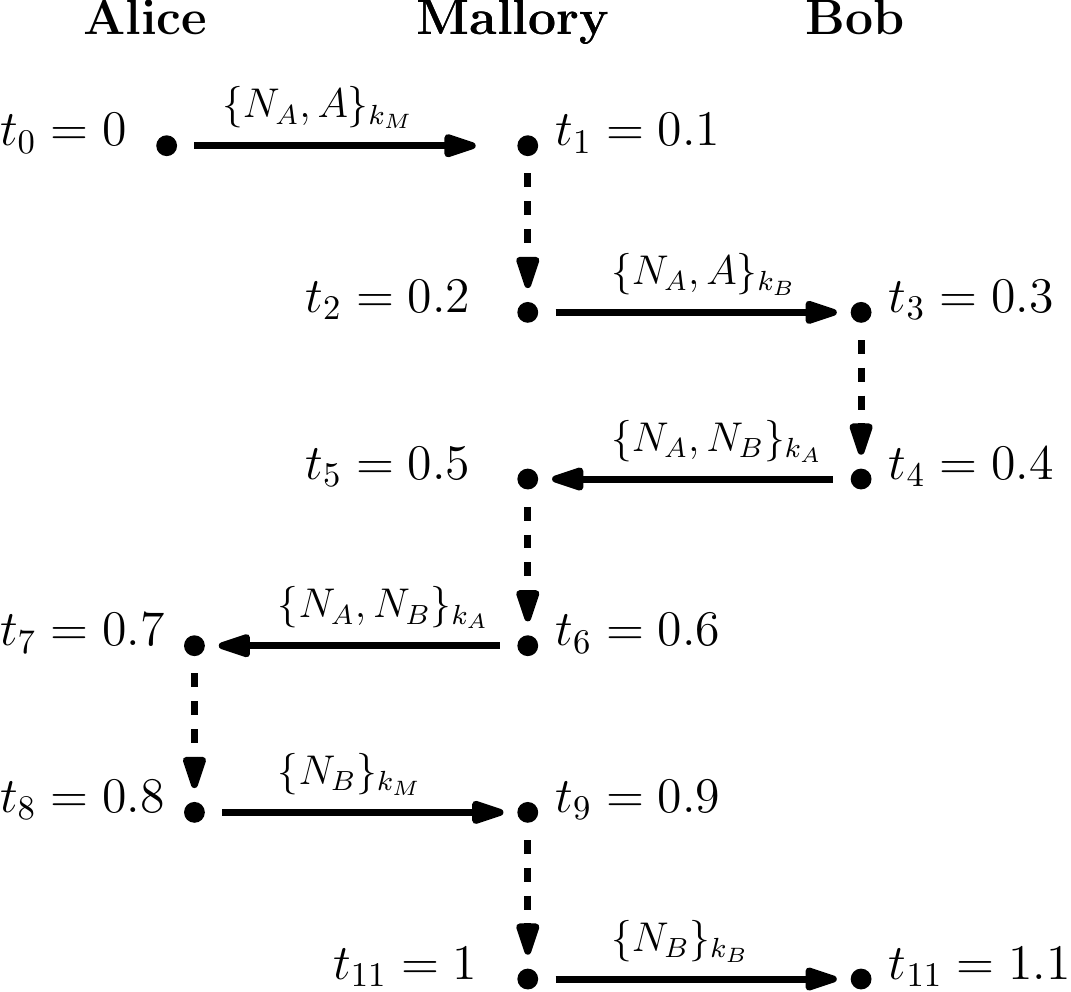} 
\medskip
\caption{Dense Time Model}
\label{fig: Lowe dense}
\end{subfigure}
\caption{Timed Version of Lowe Attack}
\vspace{2mm}
\label{fig:both Lowe}
\end{figure}

 The protocol is considered \emph{secure} in the standard way, that is, if  the ``accepted'' $N_A$ and $N_B$  may never be revealed to anybody else except Alice and Bob. Recall that the well known Lowe attack on NS~\cite{lowe96tacas} involves a third party, Mallory
who is able to learn Bob's nonce. At the same time Bob believes that he communicated with Alice and that only Alice learned his nonce.  

\vspace{2mm}
 The intriguing result of the analysis of Timed-NS is that one may not  find an attack in the discrete time model, but can find one in the dense time model:
 Figure \ref{fig:both Lowe} depicts the Lowe attack scenario in Timed-NS. 
In particular, the attack requires that  events marked with $t_0, \dots, t_7$ take place and that the round trip time of messages, that is  $t_7 - t_0$,  does not exceed the given response bounding time $R$. Assuming that both network delay and processing time are non-zero,
in the discrete time model the
 attack could be modeled only for response bounding time $R \geq 7$, see Figure \ref{fig: Lowe discrete}. 
In the discrete model, the protocol would seem safe for response bounding time $R < 7$.
However, in the dense time model the attack is possible for any positive response bounding time $R$,  see Figure \ref{fig: Lowe dense}. Indeed, assuming continuous times, the attacker can, in principle, be as fast as needed to satisfy the given response bounding time. Notice additionally, that if $R$ is set to be too low, then it may also turn the protocol unusable as legitimate players may not be able to satisfy the response time requirement. The same is true with Distance Bounding Protocols. If the distance bound is set to be too low, legitimate players may not be able to get access to resources. Therefore, response bounds should be set in such a way that makes the protocol secure and still grant resource to legitimate users. 

\vspace{2mm}
This simple example already illustrates the challenges of timed models for 
cyber-physical security protocol analysis and verification. No rescaling of discrete time units removes the presented difference between the models. For any discretization of time, such as seconds or any other infinitesimal time unit, there is a protocol for which there is an
attack with continuous time and no attack is possible in the discrete case.

We further illustrate  the challenges of timed models for cyber-physical security protocol analysis and verification 
 by a more realistic example of a distance bounding protocol.
The novel attack in-between-ticks appears and  illustrates that for the analysis of distance bounding protocols it is necessary to consider time assumptions of the players involved.

\subsection{Attack In-Between-Ticks}
\label{sec: attack-in-b-t}

Regardless of the design details of a specific distance bounding protocol, we identify a new type of anomaly. We call it {\em Attack In-Between-Ticks}. This attack is particularly harmful when the verifier and the prover exchange messages using radio-frequency (RF), where the speed of transmission is the speed of light. In this case an error of a $1$ nanosecond ($ns$) already results in a distance error of $30cm$. This is the case with a number of cyber-physical security protocols which try to establish the property that participants are \emph{physically close} to each other, \eg, Secure Neighbor Discovery, Secure Localization Protocols~\cite{tippenhauer09esorics, capkun06sacom,shmatikov07asian}, Secure Time Synchronization Protocols~\cite{sun06ccs,ganeriwal08iss}, Protocols used by RFID payment devices~\cite{chothia15fcds,chothia10fcds}.

\begin{figure}[t]
\centering
\vspace{1mm}
\begin{subfigure}{.48\textwidth}
  \centering
\includegraphics[width={ 1\textwidth}]{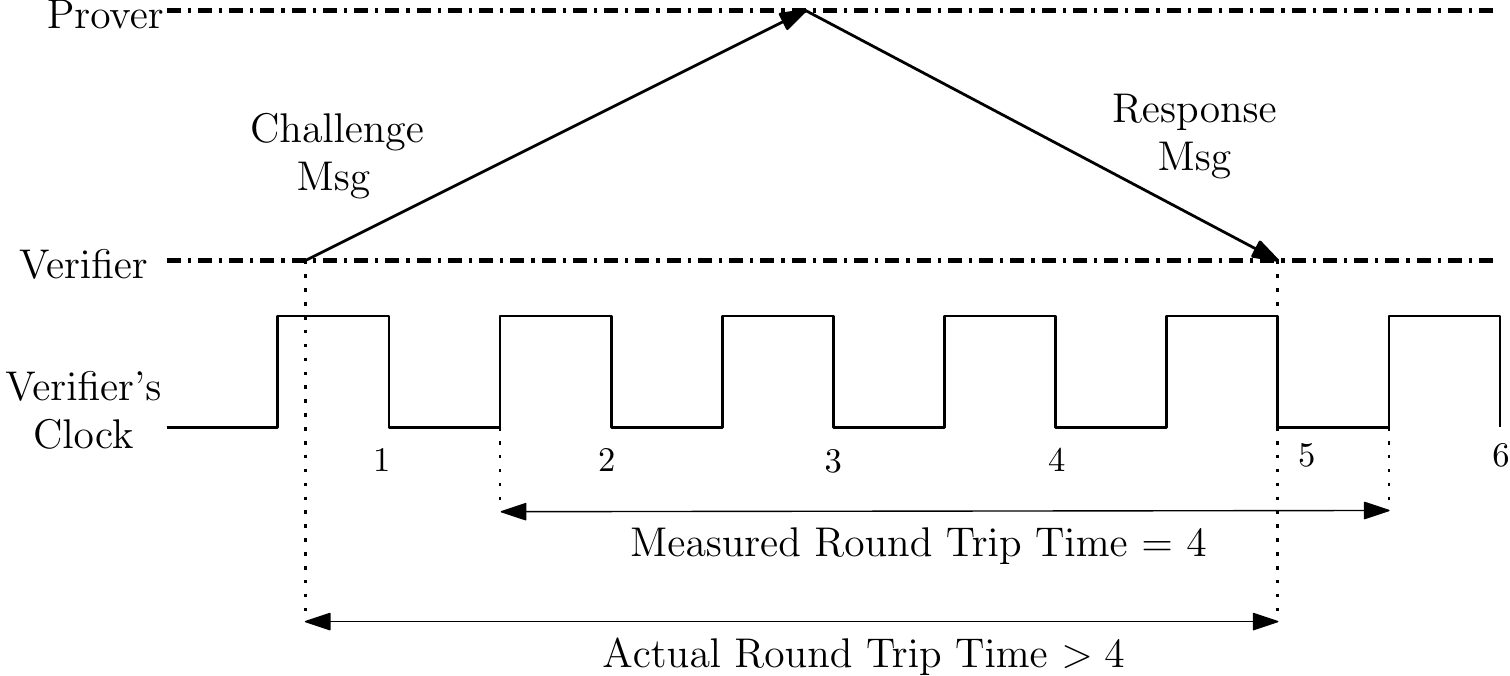}
\vspace{0,5mm}
\caption{In different ticks (Sequential Execution)}
\label{fig:in-btw-clock-attack1}
\end{subfigure}%
\quad
\begin{subfigure}{.47\textwidth}
  \centering
\includegraphics[width={ 1\textwidth}]{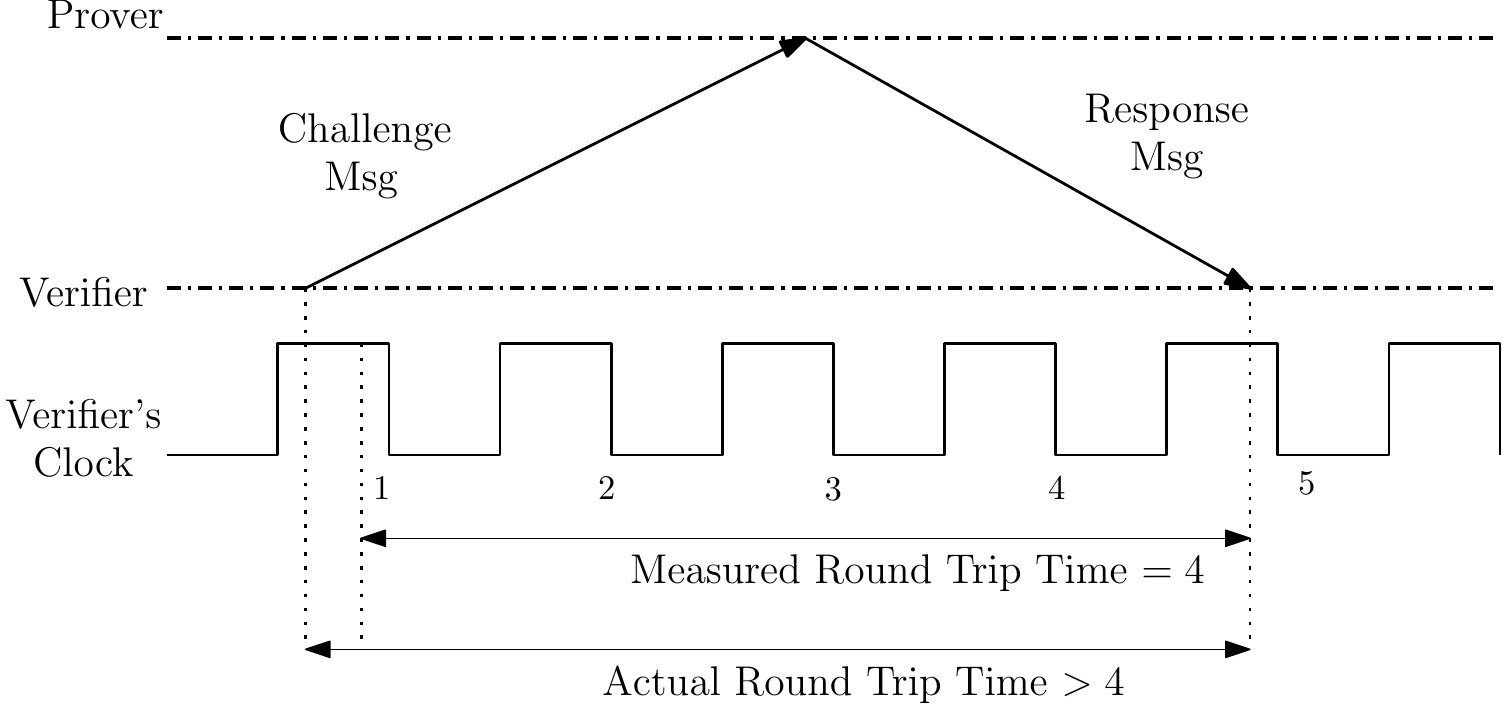}
\vspace{0,5mm}
\caption{In the same tick (Parallel Execution)}
\label{fig:in-btw-clock-attack2}
\end{subfigure}
\vspace{1mm}
\caption{Attack In-Between-Ticks. Here time response bound $R = 4$ ticks. 
{\color{blue}  }
}
\label{fig: in-btw-clock-attack}
\vspace{3mm}
\end{figure}

Consider the illustrations in Figure \ref{fig: in-btw-clock-attack}. They depict the execution of instructions by the verifier in a distance bounding protocol with the time response bound $R=4$. The verifier has to execute two instructions: (1) the instruction that sends the signal to the prover and (2) the instruction that measures the time when this message is sent. Figure~\ref{fig:in-btw-clock-attack1} illustrates the case when the verifier is running a sequential machine (that is, a single processor), which is the typical case as the verifier is usually a not very powerful device, \eg, door opening device. Here we optimistically assume that an instruction can be executed in one cycle. When the first instruction is executed, it means that the signal is sent somewhere when the clock is up, say at time $0.6$. In the following clock cycle, the verifier remembers the time when the message is sent. Say that this was already done at time $1.5$. If the response message is received at time $5$  it triggers an interruption so that the verifier measures the response time in the following cycle, \ie, at time $5.5$. Thus the measured round time is $4 = 5.5 - 1.5 = R$ ticks. Therefore, the verifier grants access to the prover although the actual round trip is $5 - 0.6 = 4.4 > R = 4$ ticks. This means that the verifier is granting access to the prover although the prover's distance to the verifier may not satisfy the distance bound and thus this is a security flaw.\footnote{Notice that inverting the order of the instructions, \ie, first collecting the time and then sending the signal, would imply errors of measurement but in the opposite direction turning the system impractical.} 

\vspace{2mm}
Depending on the speed of verifier's processor, the difference of $0.4$ tick results in a huge error. Many of these devices use very weak processors.\footnote{High precision verifiers are expensive and normally used only in high security applications.} The one proposed in~\cite{srinivasa13iitee}, for example, executes at a frequency of at most $24MHz$. This means a tick is equal to $41ns$ (in the best case). Thus, an error of $0.4$ tick corresponds to an error of $16ns$ or an error of \emph{$2 \times 2.4$ meters} when using RF. In the worst case, the error can be of $1.0$ tick when the signal is sent at the beginning of the cycle, \ie, at time $0.5$ tick, and the measurements at the end of the corresponding cycles, \ie, at times $2$ and $6$ ticks. An error of $1.0$ tick ($41ns$) corresponds to an error greater than \emph{$2 \times 6.15$ meters}.    

Consider now the case when the verifier can execute both instructions in the same cycle. Even in this case there might be errors in measurement as illustrated in Figure~\ref{fig:in-btw-clock-attack2}. It may happen that the signal is sent before the measurement is taken thus leading to errors of at most $0.5$ ticks (not as great as in the sequential case). (Here we are again optimistically assuming that an instruction can be executed in one cycle.)

\subsection{Vulnerability of Cyber-Physical Security Protocols to Attack In-Between-Ticks}
\label{sec: DB-example}

The attack in-between-ticks is based on the foundational difference between real-time in nature and time management by discrete time processors, irrespective of their physical-layer implementations and distance bounding protocol design.
The  attack  in-between-ticks appears because of the discrepancy between actual real-time distance and the upper bound on  time distance that is calculated by a discrete verifier from the time taken as the time of sending and receiving challenge and response messages, respectively.

Distance bounding protocol sessions are used in a number of cyber-physical security protocols, such as Secure Neighbor Discovery, Secure Localization Protocols~\cite{tippenhauer09esorics, capkun06sacom,shmatikov07asian}, and Secure Time Synchronization Protocols~\cite{sun06ccs,ganeriwal08iss}. Therefore, all these protocols can be vulnerable to the attack in-between-ticks.

Such a discrepancy has been suspected in Brands and Chaum's original paper introducing distance bounding protocols~\cite{brands93eurocrypt}. However, no study on the vulnerabilities of this discrepancy has been investigated until now. The attack in-between-ticks represents a \emph{new category of  attacks that needs to be considered in the analysis of cyber-physical protocols.} 

From the well-known types of attacks, such as Distance Fraud, Mafia Fraud, Terrorist Fraud and Distance Hijacking \cite{cremers12oakland}, the distance fraud, or the corresponding classification of Lone Distance Fraud from~\cite{cremers12oakland} is the closest to the attack in-between-ticks. It also leads the verifier to believe that a prover is closer than he actually is, and it does not involve additional participants, only a single prover and a verifier.
However, in (lone) distance fraud, the prover is assumed to be dishonest, trying to change his distance and appear closer to the verifier than he actually is. This can be achieved in some cases by, \eg, prover sending the response message(s) too soon, before receiving the challenge message(s), and can be fixed by changes in the  protocol design, so that the response messages are dependant of challenge messages sent by the verifier.
On the other hand, in the attack in-between-ticks, even an honest prover 
can result closer than he actually is, not by his intent, but because of the discrepancy between real time and discrete processor clocks.

\vspace{0.5em}
In order to obtain high-resolution timing information about the arrival of individual data bits, distance bounding protocol, as well as secure-positioning protocols,  are tightly integrated into the physical layer of the communication protocol, providing  sub-microsecond timing information. In this way these protocols rely directly on the laws of physics, \ie, on the assumption that the communication is bounded by the speed of light. Nevertheless, as we have shown above, substantial errors in measurement, up to several meters, may appear.
Verifier may allow the access to a prover that is considerably  outside the distance bound specified by the protocol.
This would equally appear in any distance bounding protocol, \eg, in Brands and Chaum protocol~\cite{brands93eurocrypt} or Hancke-Kuhn protocol~\cite{hancke2005rfid}.

\vspace{0.5em}
Formal frameworks for reasoning about distance bounding protocols that are based on discrete time models, such as \cite{boureanu13iacr}, clearly cannot capture the attack in-between-ticks. This is also the case with the real-time formalisms that do not take into account the way verifiers with discrete clocks operate. None of the real-time formalisms that we are aware of formalizes this treatment of time that is typical of processors.

\vspace{0.5em}
Furthermore, we observe that these security flaws may happen \emph{in principle}. In practice, distance bounding protocols carry out a large number of challenge and response rounds. This is generally  believed to  mitigate the chances of  attacks occurring. In the future, we intend to investigate the challenge-response approach in providing security of distance bounding protocols. 
In particular, we are planning to analyze  whether the effects of the attack  in-between-ticks
 can be reduced by repeated challenge-response rounds of protocols. 

Finally, we also point out that these attacks have been inspired by similar issues in the analysis of digital circuits~\cite{alur04sfm}.

\section{A Multiset Rewriting Framework with Dense Time}
\label{sec:msr}

For our  multiset rewriting framework we assume a finite first-order typed alphabet, $\Sigma$, with variables, constants, function and predicate symbols. Terms and facts are constructed as usual (see~\cite{enderton}) by applying symbols with correct type (or sort). For instance,  if $P$ is a predicate of type $\tau_1 \times \tau_2 \times \cdots \times \tau_n \rightarrow o$, where $o$ is the type for propositions, and $u_1, \ldots, u_n$ are terms of types $\tau_1, \ldots, \tau_n$, respectively, then $P(u_1, \ldots, u_n)$ is a \emph{fact}. A fact is grounded if it does not contain any variables. 

In order to specify systems that explicitly mention time, we use \emph{timestamped facts} of the form $F@T$, 
where $F$ is a fact and $T$ is its timestamp. In our previous work~\cite{kanovich15mscs}, timestamps 
were only allowed to be natural numbers. Here, on the other hand, in order to express dense time, timestamps are allowed to be non-negative real numbers. 

We assume that there is a special predicate symbol $Time$ with arity zero, which will be used to represent
the global time. A \emph{configuration} is a multiset of ground timestamped facts, 
$$\{~Time@t, ~F_1@t_1, \ldots, ~F_n@t_n~\},$$
with a single occurrence of a $Time$ fact. 
Configurations are to be interpreted as states of the system. For example, the following configuration
\begin{equation}\label{eq: conf1}
 \{~Time@7.5, ~Deadline@10.3, ~Task~(1, \textrm{done})@5.3, ~Task~(2, \textrm{pending})@2.13~\}
\end{equation}
specifies that the current global time is $7.5$, the Task 1 was performed at time 5.3, Task 2  was issued at time 2.13 and is still pending, and that the deadline to perform all tasks is 10.3.

We may sometimes denote the timestamp of a fact $F$ in a given configuration as $T_F$.

\subsection{Actions and Constraints} 
\label{subsec:actions}
Actions are multiset rewrite rules and are either the
time advancement action or instantaneous actions.
The action representing the advancement of time,
called \emph{Tick Action}, is the following:
\begin{equation}
\label{eq:time-advancement}
 Time@T \lra Time@(T + \varepsilon)
\end{equation}
Here $\varepsilon$ can be instantiated by any positive real number specifying that the global 
time of a configuration can advance by any positive number. For example,
if we apply this action with $\varepsilon = 0.6$ to the configuration (\ref{eq: conf1}) we obtain the  configuration 
\begin{equation}\label{eq: conf2}
\{~Time@8.1, ~Deadline@10.3, ~Task~(1, \textrm{done})@5.3, ~Task~(2, \textrm{pending})@2.13~\}
\end{equation}
where the global time advanced from $7.5$ to $8.1$.

Clearly such an action is a source of unboundedness as time can always advance by any positive real number. 
In particular we will need to deal with issues such as Zeno Paradoxes~\cite{alur15CPS} when considering how time should advance.

The remaining actions are the Instantaneous Actions, which do not affect the global time, 
but may rewrite the remaining facts. 

\vspace{2mm}
\begin{definition}
\emph{ Instantaneous Actions} are actions of the form:
\begin{equation}
\label{eq:instantaneous}
\begin{array}{l}
 Time@T, ~W_1@T_1, \ldots, ~W_k@T_k, ~F_1@T_1', \ldots, ~F_n@T_n' ~\mid ~\Cscr ~\lra \\
 \quad \exists~\vec{X}.~[~Time@T, ~W_1@T_1, \ldots, ~W_k@T_k, ~Q_1@(T + D_1), \ldots, ~Q_m@(T + D_m)~]
\end{array}
\end{equation}
where $D_1, \ldots, D_m$ are natural numbers,   existentially quantified variables $\vec{X}$ denote fresh values that are created by the rule,  and  $\Cscr$ is the guard of the action which is a set of constraints
involving the time variables appearing in the pre-condition, \ie,~the variables $T, T_1, \ldots, T_k, T_1', \ldots, T_n'$. 
\\
\emph{Time Constraints} are expressions of the form:
\begin{equation}
\label{eq:constraints}
  T' > T'' \pm D \quad \textrm{ and } \quad  T' = T'' \pm D
\end{equation}
where $T'$ and $T''$ are time variables, and $D$ is a natural number. \footnote{Here, and in the rest of the paper,  the symbol $\pm$ stands for either $+$ or $-$, that is,  constraints may involve addition or subtraction.} 
\end{definition}

\vspace{2mm}
We use $T' \geq T'' \pm D$ to denote the disjunction of $T' > T'' \pm D$ and $T' = T'' \pm D$.

An instantaneous action can only be applied if all 
the constraints in its guard  are satisfied. 

We say that facts $F_i@T_i'$ are \emph{consumed} \ and that the facts $Q_i@(T+D_i)$ are \emph{created }by the rule (\ref{eq:instantaneous}).

\vspace{3mm}
Notice that we allow only natural numbers for constants $D$s and $D_i$s that appear in time constraints and timestamps of created facts. We impose such conditions because of the relevant computational complexity issues. However, this is not as restrictive w.r.t. expressivness of the model as one might think. We will address this issue in more detail later on in Section~\ref{sec:complex} when we investigate the complexity of related computational problems.

\vspace{3mm}
Also, notice that the global time does not change when applying an instantaneous action. Moreover, 
the timestamps of the facts that are created  by the action, 
namely the facts $Q_1, \ldots, Q_m$, are of the form 
$T + D_i$, where $D_i$ is a natural number and $T$ is the global time. That is, their timestamps are in the present or the future. 
For example, the following is an instantaneous action
$$
\begin{array}{l}
 Time@T, ~Task~(1, \textrm{done})@T_1, ~Deadline@T_2, ~Task~(2, \textrm{pending})@T_3 ~\mid ~\{~T_2 \geq T + 2~\}  \\
 \qquad \lra ~Time@T, ~Task~(1, \textrm{done})@T_1, ~Deadline@T_2, ~Task~(2, \textrm{done})@(T + 1)
\end{array}
$$
\noindent
which specifies that one should complete Task 2, if Task 1 is already completed, and moreover, if the Deadline is 
at least 2 units ahead of the current time. If these conditions are satisfied, then the Task 2 will be completed
in one time unit. Applying this action to the configuration (\ref{eq: conf2}) yields
$$
\{ ~Time@8.1, ~Deadline@10.3, ~Task~(1, \textrm{done})@5.3, ~Task~(2, \textrm{done})@9.1 ~\}
$$
  \noindent
where Task 2 will be completed by the time $9.1$.

Finally, the variables $\vec{X}$ that are existentially quantified in (\ref{eq:instantaneous}) 
are to be replaced by fresh values, also called \emph{nonces} in protocol security literature~\cite{cervesato99csfw,durgin04jcs}. 
For example, the following action specifies the creation of a new task with a fresh identifier $Id$, 
which should be completed by time $T + D$:
$$
\hspace{-3mm} Time@T  \lra  \exists~Id.~[~Time@T, ~Task~(Id, \textrm{pending})@(T + D)~]
$$
\noindent
Whenever this action is applied to a configuration, the variable $Id$ is instantiated by a 
fresh value. In this way we are able to specify that the identifier assigned to the new task is different to the 
identifiers of all other existing tasks. In the same way it is possible to specify the 
use of nonces in Protocol Security~\cite{cervesato99csfw,durgin04jcs}.

\vspace{2mm}
Formally, a rule $W \mid \Cscr \lra~\exists~{\vec{X}}.~W'$ can be applied to a configuration $\Sscr$ if there is a ground substitution $\sigma$, where the variables in $\vec{X}$ are fresh, such that $W\sigma \subseteq \Sscr$ and $\Cscr\sigma$ is true. The resulting configuration is $(\Sscr \setminus W) \cup W'\sigma$.

\vspace{2mm}
Notice that by the nature of multiset rewriting there are various aspects of 
non-determinism in the model. For example, different actions and even different instantiations of the 
same rule may be applicable to the same configuration $\Sscr$, which may 
lead to different resulting configurations $\Sscr'$.

More precisely, an instance of an action is obtained by substituting all variables appearing in the pre- and post-condition of the
action with constants. That applies to variables appearing in terms inside facts, variables representing fresh values, as well as time variables used in specifying timestamps of facts. For example, consider the following action
\begin{equation}
\label{eq:example-instance}
\begin{array}{l}
 Time@T,  ~Finaltask~(x,\textrm{done})@T, ~Deadline@ T_1 \ \mid \ \{~ T_1 \geq T + 3~\}\\
 \qquad   \lra  \exists~n.~[~Time@T, ~File~(n,x, \textrm{pending})@(T + 2), ~Deadline@T_1~]
\end{array}
\end{equation}
\noindent
specifying that a case record for a completed process should be filed under a unique record label within $2$ time units.
An instance of above action is obtained by substituting constants for variables $x$, $T$, $T_1$ and $n$. 
For example, 
$$
\begin{array}{l}
 Time@3.2,  ~Finaltask~(\textrm{C},\textrm{done})@3.2, ~Deadline@10\ \mid \ \{~10 \geq 3.2+ 3~\}\\
 \qquad   \lra  ~Time@3.2, ~File~(\textrm{N},\textrm{C}, \textrm{pending})@5.2, ~Deadline@10
\end{array}
$$
is an instance of action (\ref{eq:example-instance}).
Recall that an instance of an  action can only be applied to a configuration containing all the facts from its precondition if all the corresponding time constraints, \ie,~all the time constraints in its guard, are satisfied. For example, since $ 15.3 \geq 8.1+3$,  action (\ref{eq:example-instance}) is applicable to configuration
$$
\{\ Time@8.1, ~  Finaltask~(\textrm{C}_0,\textrm{done})@8.1, ~Deadline@15.3 \ \} \ , 
$$
resulting in configuration \ 
$$
\{\ Time@8.1, ~ File~(\textrm{N}_1,\textrm{C}_0, \textrm{pending})@10.1, ~Deadline@15.3 \ \} \ , 
$$
but it is not applicable to 
 the following configuration
$$
\{\ Time@13.1, ~  Finaltask~(\textrm{C}_0,\textrm{done})@13.1, ~Deadline@15.3 \ \} \ .
$$


\subsection{Initial, Goal Configurations and The Reachability Problem}

We write \ $\Sscr \lra_r \Sscr_1$ \ for the one-step relation where configuration $\Sscr$ is 
rewritten to $\Sscr_1$ using an instance of action $r$. For a set of actions $\Rscr$, we 
define \ $\Sscr \lra_\Rscr^* \Sscr_1$ \ as the transitive reflexive closure of the one-step relation on all actions in $\Rscr$.
We elide the subscript \ $\Rscr$, when it is clear from the context. 

\vspace{2mm}
\begin{definition}
A \emph{goal} $\Sscr_G$ is a pair of a multiset of facts and a set of constraints, written
$$
 \{~F_1@T_1, \ldots, ~F_n@T_n~\}~ \mid ~\Cscr
$$
\noindent
where \ $T_1, \ldots, T_n$ \ are time variables, \ $F_1, \ldots, F_n$ \ are facts and 
$\Cscr$ is a set of constraints involving only \ $T_1, \ldots, T_n$. 
We call a configuration $\Sscr_1$ a \emph{goal configuration w.r.t. goal} $\Sscr_G$  if there is a grounding substitution $\sigma$ replacing term variables by ground terms and time variables \ by real numbers such that \ $\Sscr_G \sigma \subseteq \Sscr_1$
and all the constraints in \ $\Cscr\sigma$ \ are satisfied. 
\end{definition}

For simplicity, since goal will usually be clear from the context, we will use terminology {\em goal configuration} eliding the goal w.r.t.  which it is defined.

\vspace{2mm}
The reachability problem  is then defined for a given initial configuration,
a goal and a set of actions.

\vspace{2mm}
\begin{definition}
Given an initial configuration $\Sscr_I$, a goal $\Sscr_G$ and a set of actions $\Rscr$, the \emph{reachability problem } $\Tscr$ is the problem of establishing whether there is a goal configuration $\Sscr_1$, 
such that ~$\Sscr_I \lra_\Rscr^* \Sscr_1$. 
Such a sequence of actions leading from an initial to a goal configuration is called a \emph{plan}.
\end{definition}

We assume that goals are invariant to  renaming of fresh values, that is, a goal configuration ~$\Sscr_G$ is equivalent to the goal configuration ~$\Sscr_G'$ if  they only differ in the nonce names (see \cite{kanovich13ic} for more discussion on this).

Although the reachability  problem is stated as a
decision problem, we follow \cite{kanovich11jar} and are able to prove more than just  existence of a plan.
Namely,  by using the notion of ``scheduling'' a plan
 we are also able to generate a plan when there is a solution.
This is useful for the complexity of the plan generation,
since the number of actions in the plan
may be very large (see \cite{kanovich13ic} for more details on this).

An algorithm is said to \emph{schedule a plan} if it finds a plan if one exists, and 
on input $i$, if the plan contains at least $i$ actions,
then it outputs the $i^{th}$ action of the plan, otherwise
it outputs \emph{no}.

\vspace{2mm}
\begin{remark}
Notice that the feature of time constraints being attached to the rules increases expressivitiy of the model. Constraints may or may not be attached to a rule or to a configuration. With no constraints attached to  rules and configurations, we deal with the reachability problem that does not make use of the dimension of time to its potential.
In other words, adding constraints to rules and  configurations is not a restriction of the model. Quite the contrary,
using constraints, we are able to express time properties both for application of rules, and states of the system. In that way we are able to formalize time-sensitive actions and system configurations, and hence we can consider problems that involve explicit time requirements. 
\end{remark}

\vspace{2mm}

For the complexity results of the reachability problem we will consider actions that are \emph{balanced}, \ie,~actions that have the same number of facts in the pre-condition as in the post-condition. 
Balanced systems that contain only balanced actions have the special property that  all configurations in their plans have the same number of facts, given by the number of facts in the initial configuration.
This is because when applying a balanced action consumed facts from an enabling configuration are  replaced with the same number of created facts  in the resulting configuration.

Indeed, balanced systems can be conceived as systems with bounded memory~\cite{kanovich13ic}. More precisely, if we additionally impose a bound on the number of symbols that can be contained in a fact, such balanced systems involve configurations with a fixed number of facts of a bounded storage capacity, \ie,~they can only store a bounded number of symbols at a time.

\begin{remark}
Notice that any un-balanced rule can be made balanced by adding additional facts where needed, either in the pre-condition or in the post-conditions of the rule. For that purpose we use so called \emph{empty facts}, $E@T$,
that simply denote available memory slots.
Notice that  in such a way we do not get the equivalent system w.r.t. the reachability problem, as some rules may not be applied in the obtained balanced system unless there is a sufficient number of empty facts in the configuration, even in cases when the corresponding rule is applicable in the original system.
For example, the following unbalanced rule \ 
$$Time@ T, ~F_1@T' ~\lra ~Time@T, ~F_1@T', ~F_1@T, ~F_2@T $$ 
 is applicable to configuration \ $Time@0, F_1@0, E@0$, \ but its corresponding balanced rule  \ 
$$Time@ T, ~F_1@T', ~E@T, ~E@T ~\lra ~Time@T, ~F_1@T', ~F_1@T, ~F_2@T \ $$ 
 is not. Hence, the obtained balanced system may not have a solution to the given reachability problem, although the original system does.
However, there is no a priori bound on the number of facts in the initial configuration, \ie, we could add any number of empty facts to the given initial configuration.
\end{remark}

\vspace{2mm}

For our complexity results (Section~\ref{sec:complex})  we necessarily consider only systems with balanced actions,  since it has been shown in~\cite{kanovich09csf} 
that the reachability problem is undecidable if actions are allowed to be un-balanced. 
Multiset rewriting models considered in~\cite{kanovich09csf} were untimed, and the undecidability of the reachability problem for those models implies undecidability of the reachability problem for timed models considered in this paper. 

Additionally, we will impose a bound on the size of facts, that is, a bound on a total number of symbols contained in a fact.
Namely, the reachability problem is undecidable even for (un-timed) balanced systems when the size of facts is unbounded~\cite{cervesato99csfw,durgin04jcs}.

Furthermore, in~\cite{kanovich15mscs} we show that by  relaxing any
of the main conditions on instantaneous rules, 
same as conditions given in Equation (\ref{eq:instantaneous}), leads to the undecidability of the reachability
problem for multiset rewriting models with discrete time, and thus lead as well to the undecidability of the reachability problems considered in this paper.
For example,   we get undecidability  for systems with  time constraints that involve three or more time variables.

\subsection{Equivalence Between Configurations}

\vspace{2mm}
Extending the model to include dense time leads to additional challenges with respect to the complexity of the corresponding problems such as the reachability problem we address in this paper. Our solution proposed in \cite{kanovich15mscs} for the model with discrete time is not suitable for the dense time case. In particular it does not address the unboundedness caused by the Tick action which allows time to advance for any positive real number.
In order to tackle this source of unboundedness, we define an equivalence relation among configurations defined below.

\vspace{2mm}
 Many formal definitions and results in this paper mention $\Dmax$, an
upper bound on the numeric values of a reachability problem. This value is 
computed from the given problem: we set $\Dmax$ to be a
 natural number such that $\Dmax > n + 1$ for any number $n$ (both real or natural) appearing in the timestamps of the 
initial configuration, or the $D$s and $D_i$s in constraints or actions of the reachability problem. 

The following definition establishes the equivalence of 
configurations.

\vspace{2mm}
\begin{definition}
\label{def:equivalence}
 Given a reachability problem $\Tscr$,
 let $\Dmax$ be an upper bound on the numeric values
 appearing in $\Tscr$.
Let
\begin{equation}
\label{eq:two-configurations}
\begin{array}{l}
\Sscr = \{~Q_1@t_1, ~Q_2@t_2, \ldots, ~Q_n@t_n~\} \qquad \textrm{and} \qquad
\widetilde{\Sscr} = \{~\widetilde{Q}_1@\widetilde{t}_1, ~\widetilde{Q}_2@\widetilde{t}_2, \ldots,
~\widetilde{Q}_n@\widetilde{t}_n ~\}
\end{array} 
\end{equation}
be two configurations written in canonical way where the two sequences of
timestamps \ $t_1, \ldots, t_n$ \ and \ $\widetilde{t}_1, \ldots, \widetilde{t}_n$ \ are non-decreasing. 
 (For the case of equal timestamps, we sort the facts in alphabetical order, if necessary.) 
We say that  configurations  $\Sscr$
and $\widetilde{\Sscr}$ are \emph{equivalent configurations} if the following conditions hold:
\begin{itemize}
\item[(i)]
there is a bijection $\sigma$ that maps the set of all nonce names appearing in 
configuration $\Sscr$ to the set of all nonce names appearing in
configuration $\widetilde{\Sscr}$,
such that \ $Q_i \sigma = \widetilde{Q}_i$, \ for each $i \in \{1,\dots, n\}$;  and
\item[(ii)]  configurations ${\Sscr}$ and $\widetilde{\Sscr}$ satisfy the same constraints, that is:
$$
\begin{array}{l}
t_i > t_j \pm D \quad  \textrm{ iff } \quad \widetilde{t_i} > \widetilde{t_j} \pm D \quad  \ \text{and} \\
 t_i = t_j \pm D \quad  \textrm{ iff } \quad \widetilde{t_i} = \widetilde{t_j} \pm D ,
\end{array}
$$
for all \ $1\leq i \leq n$, $1\leq j \leq n$  \  and \ $D \leq \Dmax$.
\end{itemize}
When    $\Sscr$ and $\widetilde{\Sscr}$ are equivalent we  write\  $\Sscr \sim_{\Dmax} \widetilde{\Sscr}$, or simply  $\Sscr \sim \widetilde{\Sscr}$.
\end{definition}

\vspace{2mm}
Notice that equivalent configurations contain the same (untimed) facts up to renaming of fresh values.
Notice as well that by increasing or decreasing all the timestamps of a configuration $\Sscr$ by the same value $\Delta$
the obtained configuration  $\widetilde{\Sscr}$ satisfies the same constraints as  ${\Sscr}$. That is because time constraints are relative, \ie, involve exactly two time variables, and hence for configurations ${\Sscr}$ and $\widetilde{\Sscr}$ given in  (\ref{eq:two-configurations}) the following holds:
$$ 
\begin{array}{l}
t_i > t_j \pm D \ \ \textrm{ iff } \ \ (\widetilde{t_i}+\Delta) > (\widetilde{t_j}+\Delta) \pm D \quad  \ \text{and} \\
t_i = t_j \pm D \ \ \textrm{ iff } \ \ (\widetilde{t_i}+\Delta) = (\widetilde{t_j}+\Delta) \pm D 
\ .
\end{array}
 $$

When compared, facts of equivalent configurations satisfy the same order. 
Although facts follow the same order they need not lay at the exact same points of the real line.

For example, with $\Dmax = 3$,  configurations:
\[
\begin{array}{l}
	\Sscr_1 = \{~Time@0.2, ~F_1(n_1)@1.5,~F_2(n_1,b)@2.3, ~F_3(b)@2.5 ~\} \quad \textrm{and}
	\\[5pt]
	 \Sscr_2 =\{~Time@1.1,~F_1(n_2)@2.7,~F_2(n_2,b)@3.2, ~F_3(b)@3.7~\}
\end{array}
\]
are equivalent.
However, above configurations are not equivalent to configuration:
\[
\Sscr_3=  \{~Time@0.2,~F_1(n_1)@1.5,~F_2(n_1,b)@2.3, ~F_3(b)@2.52~\} \ .
\]
since both  $\Sscr_1$ and $\Sscr_2$ satisfy the constraint \ $T_3 - T_1 = 1$, where $T_1$ is the timestamp of the fact $F_1$ and 
 $T_3$ is the timestamp of the fact $F_3$, 
while $\Sscr_3$ does not.

As shown by the above example, same order of facts by itself  is not enough for the equivalence of configurations because all constraints of the type $T_i = T_j \pm D $ \ need to be satisfied simultaneously by both configurations.
Therefore,   the  facts corresponding to $T_i$ and $T_j$ from constraints   $T_i = T_j \pm D $, when placed on the real line,  
 need to lay at the points with the same decimal parts. 
Moreover,  matching of corresponding facts at integer points needs to hold  when placing any of the corresponding pairs of facts at point 0 on the real line.

Moreover,  because of  constraints of the type $T_i > T_j \pm D $, when placing any pair of corresponding facts at 0, all remaining  pairs of   corresponding facts need to lay either at the same integer point or  inbetween same consecutive integers.

In Section \ref{sec:circle} we introduce another, more illustrative, representation of the above equivalence relation.

The following proposition captures our intuition that the notion of equivalence defined above 
is coarse enough to identify applicable actions and thus the reachability problem.

\begin{proposition}
\label{th:equiv_action}
Let $\Sscr$ and $\Sscr'$ be two equivalent configurations for a given reachability problem $\Tscr$ and the
 upper bound $\Dmax$.
Let an  action $r$ transform $\Sscr$ into $\Sscr_1$. Then there is an instance of the action $r$ such that \  $\Sscr'  \lra_r \Sscr_1'$ \ and  that configurations $\Sscr_1$ and  $\Sscr_1'$ are also equivalent.
\end{proposition}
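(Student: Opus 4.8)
The plan is to argue by cases on the action $r$, which is by definition either the Tick Action~(\ref{eq:time-advancement}) or an instantaneous action~(\ref{eq:instantaneous}). In each case I would exhibit a concrete instance of $r$ that is applicable to $\Sscr'$, run it to obtain some $\Sscr_1'$, and then verify conditions~(i) and~(ii) of Definition~\ref{def:equivalence} for $\Sscr_1$ and $\Sscr_1'$. Throughout I would fix, once and for all, the nonce bijection $\sigma$ and the induced position-by-position matching of facts (hence of their timestamps) between $\Sscr$ and $\Sscr'$ supplied by $\Sscr\sim\Sscr'$.

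Suppose first that $r=W\mid\Cscr\lra\exists\vec X.\,W'$ is instantaneous and is applied to $\Sscr$ (whose global time I call $t$) via a grounding $\theta$ with $W\theta\subseteq\Sscr$ and $\Cscr\theta$ true. I would transport $\theta$ along the matching to a substitution $\theta'$ for $\Sscr'$: term variables go to their $\sigma$-images, and a time variable occurring as the timestamp of the precondition fact matched to position $i$ goes to the corresponding timestamp in $\Sscr'$; this makes $W\theta'\subseteq\Sscr'$ immediate. The guard $\Cscr$ contains only atoms $T'>T''\pm D$ and $T'=T''\pm D$ whose constants satisfy $D\le\Dmax$ by the definition of $\Dmax$, so under $\theta$ and $\theta'$ they become comparisons of matched timestamps and $\Sscr\sim\Sscr'$ yields $\Cscr\theta'$ true; hence $r$ fires on $\Sscr'$, giving $\Sscr_1'$. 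For $\Sscr_1\sim\Sscr_1'$ I would extend $\sigma$ by any bijection between the fresh values created in the two applications (harmless, as fresh values occur nowhere else), which gives~(i); for~(ii) I would note that the surviving facts are the matched facts of $\Sscr,\Sscr'$, that a created fact receives the timestamp $t+D_j$, that a comparison between two created facts is a comparison of the fixed integers $D_j,D_k$ and is identical on both sides, and that a comparison between a created and a surviving fact reduces to a comparison of $t$ against a surviving timestamp; using the invariant that no fact sits more than $\Dmax$ ahead of the global time, each such comparison either has an offset that $\sim$ tracks directly or is trivially consistent between the two configurations (both making it false, respectively both true), so it transfers.

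Now suppose $r$ is the Tick Action, so $\Sscr$ and $\Sscr_1$ differ only in the $Time$ fact, moved from $t$ to $t+\varepsilon$, and $\Sscr_1'$ will differ from $\Sscr'$ only in $Time$, moved from $t'$ to some $t'+\varepsilon'$ that I must choose. The task is thus to pick $\varepsilon'>0$ so that, against every non-$Time$ fact, the advanced $Time$ fact realizes the same comparison outcome at every offset $D\le\Dmax$ as it does in $\Sscr_1$. I would first dispose of the facts lying far in the past of $t$: there the outcome is simply ``$Time$ exceeds it at every offset'', $\Sscr\sim\Sscr'$ places the matched fact correspondingly far below $t'$, and then any $\varepsilon'>0$ works for it. For the remaining facts — which by the same invariant lie within $\Dmax$ of $t$ — $\Sscr\sim\Sscr'$ pins down their integer offsets from $Time$ and the integer-distance equalities among them, so I would argue that the finite family of ``landmarks'' $t_i+d$ (with $|d|\le\Dmax$) that $t+\varepsilon$ meets exactly, or that it falls strictly between, corresponds via the matching to a consistent family of landmarks on the $\Sscr'$ side, and then take $\varepsilon'$ placing $t'+\varepsilon'$ on, respectively strictly between, exactly those landmarks and large enough to clear the far-past ones. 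Such an $\varepsilon'>0$ exists because $t+\varepsilon>t$ forces every targeted landmark above $t$, hence (transferring the corresponding strict inequalities) its counterpart above $t'$. With this choice $\Sscr_1\sim\Sscr_1'$ holds, and $\sigma$ is left untouched since no nonces are created.

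I expect the Tick case to be the main obstacle, and for the very reason that motivates the paper: because $Time$ may be advanced to an arbitrary real point one cannot merely reuse $\varepsilon$, and one must show that the ``cell'' of the real line into which the advanced $Time$ fact falls — coded by which offset-landmarks it meets and which it separates — can be matched \emph{simultaneously} against every fact of the configuration rather than one fact at a time. The delicate sub-point is that when the advanced $Time$ fact lands at integer distance from two distinct facts at once it forces an integer coincidence between those two facts, and the argument closes only if that coincidence is one the equivalence already records; checking that this is always so — which amounts to keeping every timestamp offset that arises within the window that $\sim$ actually controls — is where the real care is needed, and it is convenient here to carry along the invariant that no fact's timestamp ever runs more than $\Dmax$ ahead of the global time.
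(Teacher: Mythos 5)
Your skeleton is the same as the paper's proof: the instantaneous case is handled by transporting the grounding substitution along the fact-matching and noting that guard constants are below $\Dmax$, and the Tick case by choosing a new $\varepsilon'$ that places the advanced $Time$ fact in the same ``cell'' relative to the other timestamps (the paper does this by a case split: if $t+\varepsilon=t_j\pm B$ it sets $\varepsilon'=t_j'\pm B-t'$, and otherwise it picks $\varepsilon'$ so as to preserve the ordering of the decimal parts, reducing to $\varepsilon<1$ by splitting and induction).

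The problem is that the sub-point you yourself flag as ``where the real care is needed'' --- the advanced clock landing at integer distance at most $\Dmax$ from two facts simultaneously --- is exactly the step you do not supply, and the patch you propose (the invariant that no fact is more than $\Dmax$ ahead of the global time) does not close it. Take $\Dmax=4$, $\Sscr=\{\,Time@10,\ A@11.5,\ B@6.5\,\}$ and $\Sscr'=\{\,Time@10,\ A@11.5,\ B@6.3\,\}$. Every clause of condition (ii) of Definition~\ref{def:equivalence} with $D\le 4$ holds in one configuration iff in the other (the only pair more than $\Dmax$ apart is $A,B$, and no constraint with $D\le 4$ sees the difference), and both configurations satisfy your invariant, since $A$ is only $1.5$ ahead of the clock and $B$ lies in the past. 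Now tick $\Sscr$ by $\varepsilon=0.5$: in $\Sscr_1$ both $t_{Time}=t_B+4$ and $t_A=t_{Time}+1$ hold, with constants $\le\Dmax$, so an equivalent $\Sscr_1'$ would need $10+\varepsilon'=6.3+4=10.3$ and $10+\varepsilon'=11.5-1=10.5$ at once; no $\varepsilon'$ exists. The coincidence forced here is $t_A=t_B+5$, which lies outside the window that condition (ii) tracks, and the invariant cannot prevent a past fact and a near-future fact from both entering the $\Dmax$-window of the advanced clock. What actually makes the Tick case work is the stronger reading of equivalence sketched informally right after Definition~\ref{def:equivalence} --- corresponding facts must realize the same pattern of coincidences and the same relative order of \emph{decimal parts for all pairs}, irrespective of how far apart their integer parts are --- which is precisely what the circle-configurations of Section~\ref{sec:circle} record; under that strengthening your landmark bookkeeping (and the paper's choice of $\varepsilon'$) does go through, and to be fair the paper's own write-up also treats this point only briskly. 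A similar caveat applies to your instantaneous case: comparing a created fact $Q@(t+D_1)$ with an old fact at offset $D$ needs the relation of $t_i$ to $t$ at offset $D_1+D$, which may exceed $\Dmax$; your invariant does settle that, but it is not among the hypotheses of the proposition, so it must be added and shown to be preserved (or the equivalence strengthened). As written, the proposal reproduces the paper's strategy but stops short at, rather than resolves, the one step on which the proof really turns.
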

\begin{proof}
Let $\Sscr$ and  $\Sscr'$  be two equivalent configurations, namely
\[
\begin{array}{lc}
\, \Sscr \ = \{~Time@t, \  Q_1@t_1, \, Q_2@t_2, \ldots, ~Q_n@t_n ~\} \qquad \textrm{and} \\
{\Sscr'} = \{~Time@t',  Q_1'@{t}_1', \,Q_2'@{t}_2', \ldots, ~Q_n'@{t}_n' ~\}\ . 
\end{array}		
\]
Assume that $\Sscr$ is transformed into $\Sscr_1$ by means of
 an action~$r$.
 By definition of equivalence between configurations, \mbox{Definition~\ref{def:equivalence}},  configuration~ $\Sscr'$ contains the same (untimed) facts as $\Sscr$ up to nonce renaming, \ie,~there is a bijection $\sigma$ such that \ $Q_i \sigma = Q_i'$,  for all $i=1,\dots,n$.
Also, per \mbox{Definition~\ref{def:equivalence}}, configuration $\Sscr'$  satisfies  all the time constraints corresponding to the action~$r$, if any. 
Hence the action~$r$ is indeed applicable to  configuration~$\Sscr'$ and  will transform 
 $\Sscr'$  into  some  $\Sscr_1'$, as depicted in the following diagram:
\[
 \begin{array}{cccc}
\Sscr & \to_{r}&   \Sscr_1\\[5pt]
\wr &     &  \\[5pt]
\Sscr' & \to_{r} \ & \ \Sscr_1'
\end{array}
\]
Since configurations $\Sscr$ and $\Sscr'$ may differ in the actual values of the timestamps attached to their facts, possibly different instances of the action $r$ may be applied to each of the configurations. We consider both cases for the action $r$, namely  time advance action \ie,~action of type (\ref{eq:time-advancement}) and  instantaneous actions \ie,~action of type (\ref{eq:instantaneous}), and we need to show that in both cases $\Sscr_1$  is equivalent to $\Sscr_1'$. 

\vspace{5pt}
Assume that $r$ is an instantaneous action. Recall that the action $r$ does not change the global time, so $t$ and $t'$  denote the global time both in $\Sscr$ and  $\Sscr_1$, and $\Sscr'$ and  $\Sscr_1'$, respectively. 
\\
Hence, in showing that $\Sscr_1$  is equivalent to $\Sscr_1'$, only constraints involving facts created by the action $r$ are interesting, \ie,~facts of the form
~$F_1@(t+D_1)$ \ and \ $F_2'@(t'+D_2)$, respectively. 
Notice that the fact $P@(t+D)$, crated by an instance of $r$, appears in $\Sscr_1$ \ iff \ the fact $P'@(t'+D)$, created by an instance of $r$, appears in $\Sscr_1'$, where ~$P'=P\sigma $ ~for some nonce renaming bijection $\sigma$.
\\
Hence, the relative time difference to the global time is exactly $D$, both for \ $P@(t+D)$ \ and for \ $P'@(t'+D)$.
This implies that any given time constraint attached to $r$ involving a created fact and the global time is satisfied in  $\Sscr_1$ iff it is satisfied in $\Sscr_1'$.\\
Similarly, time constraints involving two created facts, $P_1@(t+D_1)$, $P_2@(t+D_2)$ and  $P_1'@(t'+D_1)$, $P_2'@(t'+D_2)$, respectively, involve the same relative difference $D_1-D_2$. Hence, they are concurrently satisfied.
\\
Finally, a time constraint $c$ involving a created fact $P_1@(t+D_1)$ and a fact $Q_i@t_i$ that appears  in $\Sscr$,
and, respectively, facts  $P_1'@(t'+D_1)$ and  $Q_i'@t'_i$ in $\Sscr_1'$, 
 can be associated to a time constraint involving the global time.
Clearly,  for all $D$
$$
\begin{array}{rcl}
t_i > t \pm D \ & \textrm{ iff } & \ {t_i}' > t' \pm D \quad  \textrm{and} \\ 
t_i = t \pm D \ & \textrm{ iff } & \ {t_i}' = t' \pm D
\end{array}
$$
 is equivalent to 
$$
\begin{array}{rcl}
t_i > (t+D_1) \pm D \ & \textrm{ iff } & \ {t_i}' > (t' + D_1) \pm D \quad \textrm{and} \\
t_i = (t+D_1) \pm D \ & \textrm{ iff } & \ {t_i}' = (t' + D_1) \pm D \ .
\end{array}
$$
Above equivalence states that $\Sscr_1$ and $\Sscr'_1$ concurrently satisfy  constraint $c$ iff they concurrently satisfy some constraint 
 involving a created fact and the global time.
Since the later has already  been shown, we can conclude that both 
 $\Sscr$ and $\Sscr'$ satisfy  constraint $c$. 

Having considered all the relevant types of constraints, we can conclude that, since $\Sscr \sim \Sscr'$,  it follows that ~${\Sscr_1} \sim \Sscr_1'$. 

\vspace{5pt}
Now, assume  $\Sscr$  is transformed into $\Sscr_1$  by means of a time advancing action~$r$:
$$
\begin{array}{lc}
\Sscr_1  = \, \{~ Time@(t+\varepsilon), \  Q_1@t_1, \, Q_2@t_2, \ldots, ~Q_n@t_n  ~\} \ .
\end{array}
$$
Depending on the  actual value $\varepsilon$ in  $r$ we will  find the value $\varepsilon'$ for the instance of time advancement action that will transform $\Sscr'$ into $\Sscr_1'$ in such a way that equivalence  $\Sscr_1 \sim \Sscr_1'$  holds.
Recall that time advancement action is  applicable to any configuration. 

With time advancement only the timestamp denoting the global time is increased while the rest of the configuration remains unchanged. Therefore, only time constraints involving the global time are to be considered for equivalence of $\Sscr_1$ and  $\Sscr_1'$. 

Firstly, assume that the new global time \ $t+\varepsilon$ \ in $\Sscr_1$  is equal to some timestamp $t_j \pm B$, where $B$ is an integer. Then we set \ 
$$\varepsilon'=t_j' \pm B - t' \ .
$$
 Then the new global time in $\Sscr'_1$ is \ $t' + \varepsilon' = t_j' \pm B $. Clearly, for any integer $D$ it holds that
$$
t_i \lesseqqgtr  (t+\varepsilon) \pm D = t_j \pm B \pm D\ \ \ \textrm{ iff } \ \ \ {t_i}' \lesseqqgtr (t' + \varepsilon) \pm D= t_j' \pm B \pm D
$$
since $t_i, t_i'$ and $t_j, t_j'$ are corresponding  timestamps from $\Sscr$ and $\Sscr'$ which are equivalent, \ie,~satisfy the same constraints.

Next, we consider the remaining case when the  decimal part of the new global time \ $t+ \varepsilon$ \ is different from the decimal part of any $t_i$s in $\Sscr_1$. More precisely, if we arrange the facts in $\Sscr_1$ according only to their decimal parts, then the decimal part of \ $ t+ \varepsilon$ \ either lays directly in between decimal parts of  $t_i$ and $t_j$, or it is greater than the decimal part of any $t_i$.
In order to get the configuration $\Sscr_1'$ that is equivalent to $\Sscr_1$, we need to achieve the same ordering of facts.
We therefore set \
$$\varepsilon' = \Int{ \varepsilon } + \delta, \ $$
where $\delta \in \langle 0,1 \rangle$ is any number such that 
\ 
$$
\begin{array}{rcl}
& \quad \dec{ t_i'}< \dec{t' + \varepsilon'} < \dec{ t_j'}  & \quad \text{in case when } \  \dec{ t_i}< \dec{t + \varepsilon} < \dec{ t_j}, \ 
\ \text{for some } \ i,j 
\\
   \text{ or } \quad & \quad  \dec{ t_i'}< \dec{t' + \varepsilon'}, & \quad   \text{in case that  } \ \ \ \dec{ t_i}< \dec{t + \varepsilon} , \ 
\ \forall  \  i=1, \dots, n \ .
\end{array}
$$
This way we obtain the same ordering of $\dec{t' + \varepsilon'}$ in $\Sscr_1'$ as for \ $\dec{t + \varepsilon}$ in $\Sscr_1$.\\
Here \
$\Int{x}$ \  is the integer part of $x$ and \ $\dec x$ \ is the decimal part of $x$. Advancing time in $\Sscr'$ for such an $\varepsilon'$  results in configuration  $\Sscr_1'$ that is equivalent to $\Sscr_1$.

Indeed, none of the time constraints involving global time and equality, such as constraint  \ \mbox{$t_i = t \pm D $} ~\ie,~  \mbox{$t_i' = t' \pm D $}, is satisfied since $D$ is an integer and the decimal parts of both global times 
$t$ and $t'$ are different from the decimal part of any other fact in the configuration. Therefore, both \ $t_i - t$~and~ $t_i' - t'$ \ are  not  integers.

For the  time constraints involving the global time and inequality, we consider the constraint of type ~$t_k > t \pm D_i $. 
The proof for the case of constraint of type ~$t_k < t \pm D_i $ \ is analogous. 
\\
From \ $\Sscr \sim \Sscr'$  \ for all integers \ $B\leq \Dmax$ \ we know that for all $k \in \{1,\dots,n\}$
\begin{equation}
\label{eq: constr-eq}
t_k > t \pm B \ \ \textrm{ iff } \ \ {t_k}' > t' \pm B \ .
\end{equation}
We need to  prove that
\begin{equation}
\label{eq: constr-eq-claim}
t_k > (t+ \varepsilon) \pm D \ \ \textrm{ iff } \ \ {t_k}' > (t' + \varepsilon') \pm D 
\end{equation}
 for all integers $D<\Dmax$. Notice that  ~$t+ \varepsilon \pm D = t + \dec{\varepsilon} + \Int{\varepsilon} \pm D$
\ involves possibly an integer $\Int{\varepsilon} + D$~ greater than $\Dmax$, for which (\ref{eq: constr-eq}) may not hold. Similarly, it is the case for integer
$\Int{\varepsilon'} + D$.

We will, therefore, assume that $\varepsilon <1$, \ie, ~ $\Int{\varepsilon} =0$. Equivalently, we could split an arbitrary $\varepsilon$ into a finite number of ~$\varepsilon_i$~ such that ~$\sum \varepsilon_i= \varepsilon$, and prove the result for $\varepsilon$ by induction.
\\
From  (\ref{eq: constr-eq}) and from 
$$
\Int{t} \leq \Int{t+ \varepsilon} \leq \Int{t} + 
1
$$
we can conclude that 
\begin{equation}
\label{eq: constr-eq-int}
\Int{t_k} \geq  \Int{t+ \varepsilon} \pm D= \Int{t+ \varepsilon \pm D} \ \ \ \textrm{ iff } \ \ \ \Int{t_k}' \geq \Int{t' + \varepsilon'} \pm D =  \Int{t'+ \varepsilon' \pm D } \ .
\end{equation}
Additionally,
\begin{equation}
\label{eq: constr-eq-dec}
 \dec{t_k} > \dec{t+\varepsilon} \ \ \textrm{ iff } \ \  \dec{t_k'}> \dec{t' + \varepsilon'}  
\end{equation}
 holds because $\varepsilon'$ is chosen so that the ordering of decimal parts of timestamps in  $\Sscr_1$ and $\Sscr_1'$ is the same. That is, either 
$$
 \dec {t_i} < \dec{t+\varepsilon}< \dec{ t_j}  \ \ \ \textrm{ and } \ \ \ \dec{ t_i'}< \dec{t' + \varepsilon'} < \dec{ t_j'}   
$$
for some \ $Q_i@t_i,Q_j@t_j$ \ that immediately precede and follow the fact \  $Time@(t+\varepsilon)$ \ when sorting the facts in $\Sscr$ only by the decimal parts of their timestamps, or it is the case that
$$
 \dec {t_i} < \dec{t+\varepsilon}  \ \ \ \textrm{ and } \ \ \ \dec{ t_i'}< \dec{t' + \varepsilon'} \ , \quad  \forall i \in \{1,\dots,n \}  \ .
$$
From  (\ref{eq: constr-eq-int}) and  (\ref{eq: constr-eq-dec}) we obtain the claim  (\ref{eq: constr-eq-claim}).
\ \end{proof}

Following the above Proposition, we now relate  the equivalence of configurations given by Definition~\ref{def:equivalence} and the reachability problem. 

\vspace{2mm}
\begin{theorem}
\label{th:equiv_reach}
 Let $\Sscr_I$ and $\Sscr_I'$ be two equivalent initial configurations, $\Sscr_G$ be a goal 
 and $\Rscr$ a set of actions. Let $\Dmax$ be an 
 upper bound on the numbers in $\Rscr$, $\Sscr_I$, $\Sscr_I'$ and $\Sscr_G$. Then the reachability problem with $\Sscr_I, \Sscr_G$ and $\Rscr$ is solvable if and only if 
 the reachability problem with $\Sscr_I', \Sscr_G$ and $\Rscr$ is solvable.
\end{theorem}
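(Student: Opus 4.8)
The plan is to lift a plan solving one reachability problem to a plan solving the other, action by action, using Proposition~\ref{th:equiv_action}, and then to check that the equivalence relation $\sim_{\Dmax}$ respects the notion of goal configuration. Since $\sim_{\Dmax}$ is symmetric, it suffices to prove one implication: assuming the reachability problem with $\Sscr_I, \Sscr_G, \Rscr$ is solvable, I build a solution for the one with $\Sscr_I', \Sscr_G, \Rscr$; the converse is obtained by swapping the roles of $\Sscr_I$ and $\Sscr_I'$.

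So let $\Sscr_I = \Pscr_0 \lra_{r_1} \Pscr_1 \lra_{r_2} \cdots \lra_{r_N} \Pscr_N$ be a plan with $\Pscr_N$ a goal configuration w.r.t.\ $\Sscr_G$. I would argue by induction on $i$ that there are configurations $\Pscr_0' = \Sscr_I', \Pscr_1', \ldots, \Pscr_N'$ and instances of $r_1, \ldots, r_N$ with $\Pscr_{i-1}' \lra_{r_i} \Pscr_i'$ and $\Pscr_i \sim_{\Dmax} \Pscr_i'$ for every $i$. The base case is the hypothesis $\Sscr_I \sim_{\Dmax} \Sscr_I'$. For the inductive step, from $\Pscr_{i-1} \sim_{\Dmax} \Pscr_{i-1}'$ and $\Pscr_{i-1} \lra_{r_i} \Pscr_i$, Proposition~\ref{th:equiv_action} furnishes an instance of $r_i$ with $\Pscr_{i-1}' \lra_{r_i} \Pscr_i'$ and $\Pscr_i \sim_{\Dmax} \Pscr_i'$; this applies uniformly to instantaneous actions and to the Tick action, both cases being covered by the Proposition. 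The same $\Dmax$ works at every step: it was chosen as an upper bound on the numeric data of $\Rscr$ and of both initial configurations, which is exactly what Proposition~\ref{th:equiv_action} requires, and the (possibly large) timestamps occurring in the intermediate $\Pscr_i$ are irrelevant because $\sim_{\Dmax}$ only compares time constraints whose constant $D$ is $\le \Dmax$, and every constant appearing in an action of $\Rscr$ is $\le \Dmax$ by construction.

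It remains to verify that $\Pscr_N'$ is a goal configuration w.r.t.\ $\Sscr_G$, i.e.\ that the goal property transfers along $\sim_{\Dmax}$. Write $\Sscr_G = \{F_1@T_1, \ldots, F_m@T_m\} \mid \Cscr$, and let $\sigma$ be a grounding substitution witnessing that $\Pscr_N$ is a goal configuration, so $\Sscr_G\sigma \subseteq \Pscr_N$ and all constraints in $\Cscr\sigma$ hold. List $\Pscr_N$ and $\Pscr_N'$ in canonical order as in Definition~\ref{def:equivalence}, and let $\rho$ be the nonce-renaming bijection, so the $k$-th fact of $\Pscr_N'$ is obtained from the $k$-th fact of $\Pscr_N$ by applying $\rho$ to its untimed part, and the two timestamp lists satisfy exactly the same constraints with constant up to $\Dmax$. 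Choose distinct indices $k_1, \ldots, k_m$ with $F_j\sigma@(T_j\sigma) = Q_{k_j}@t_{k_j}$ in $\Pscr_N$, and define $\sigma'$ to agree with $\rho\circ\sigma$ on term variables and to send the time variable $T_j$ to $\widetilde t_{k_j}$. This is well defined, since if $T_j$ and $T_l$ are the same variable then $t_{k_j} = t_{k_l}$, hence $\widetilde t_{k_j} = \widetilde t_{k_l}$ by clause~(ii) of Definition~\ref{def:equivalence} with $D = 0$. Then $\Sscr_G\sigma' \subseteq \Pscr_N'$, because the image of each $F_j\sigma@(T_j\sigma)$ is the corresponding fact $Q_{k_j}\rho@\widetilde t_{k_j}$ of $\Pscr_N'$ and the $k_j$ are distinct. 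Finally, every constraint in $\Cscr$ involves only $T_1, \ldots, T_m$ and a constant $D \le \Dmax$ (here I use the hypothesis that $\Dmax$ bounds the numbers of $\Sscr_G$), and since the corresponding instance under $\sigma$ holds between the $t_{k_j}$'s, clause~(ii) of Definition~\ref{def:equivalence} gives that it holds between the $\widetilde t_{k_j}$'s, i.e.\ $\Cscr\sigma'$ holds. Hence $\Pscr_N'$ is a goal configuration and $\Sscr_I' \lra_\Rscr^* \Pscr_N'$ solves the reachability problem with $\Sscr_I', \Sscr_G, \Rscr$.

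The genuinely delicate point is this last step: the goal property transfers only because $\Dmax$ is assumed to bound the numbers appearing in $\Sscr_G$ as well; without this, a constraint in $\Cscr$ with a large constant could be satisfied by $\Pscr_N$ but not by the equivalent $\Pscr_N'$, since $\sim_{\Dmax}$ guarantees agreement only for constants up to $\Dmax$. Everything else is a routine induction, the one bookkeeping care being that the single $\Dmax$ fixed in the statement is simultaneously admissible for every invocation of Proposition~\ref{th:equiv_action} and for the final goal-matching argument.
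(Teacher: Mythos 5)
Your proof is correct and follows essentially the same route as the paper: induction on the length of the plan, lifting each step via Proposition~\ref{th:equiv_action}, and then transferring the goal property across the equivalence $\sim_{\Dmax}$. The only difference is that you spell out the goal-transfer step in detail (the construction of $\sigma'$, its well-definedness, and the use of clause~(ii) of Definition~\ref{def:equivalence} together with the hypothesis that $\Dmax$ bounds the constants of $\Sscr_G$), whereas the paper asserts this in a single sentence; that is added care, not a divergence.
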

\begin{proof} \blue
Suppose  the reachability problem with $\Sscr_I, \Sscr_G$ has a solution.
We prove the existence of the solution to the reachability problem with $\Sscr_I', \Sscr_G$  by induction on the length of the given plan from $\Sscr_I$ to $\Sscr_G$. 
From  Proposition \ref{th:equiv_action} it follows  that  the plan from $\Sscr_I $ to $\Sscr_G$ and the plan from $\Sscr_I'$ to $\Sscr_G'$ contain exactly the same actions, possibly using different instances.
It also follows that  $\Sscr_i \sim \Sscr_i'$  for each configuration $\Sscr_i$ along the plan. This is depicted in the following diagram:
\[
\begin{array}{cccccccc}
\Sscr_I & \to_{r_1} \dots \to_{r_{i-1}} &  \Sscr_{i-1} & \to_{r_i} \  & \ \Sscr_{i}  & \to_{r_{i+1}}  \dots \to_{r_{n}} & \Sscr_G
\\
\wr & &   \wr &     &  \wr &     &  \wr
\\
\Sscr_I' &\to_{r_1} \dots \to_{r_{i-1}} & \Sscr_{i-1}' & \to_{r_i}&   \Sscr_i' & \to_{r_{i+1}} \dots \to_{r_{n}} & \Sscr_G'
\end{array}
\]
Since $\Sscr_G \sim \Sscr_G'$, these configurations both satisfy the same set of constraints, therefore $\Sscr_G$ is a goal configuration iff $\Sscr_G'$ is a goal configuration. Hence,  the reachability problem with $\Sscr_I, \Sscr_G$ and $\Rscr$ has a solution  if and only if 
 the reachability problem with $\Sscr_I', \Sscr_G$ and $\Rscr$ has a solution. 
\bluend
\end{proof}

\vspace{-3mm}
\subsection{Distance Bounding Protocol Formalization}
\label{sec: formal}

To demonstrate how our model can capture the attack in-between-ticks, consider the following protocol, called DB.
 This protocol captures the time challenge of distance bounding protocols.\footnote{Another specification that includes an intruder model, keys, and the specification of the attack described in \cite{basin11iss} can be found in our workshop paper~\cite{kanovich14fccfcs}.}  Verifier should allow the access to his resources only if the measured round trip time of messages in the distance-bounding phase of the protocol does not exceed the given bounding time $R$. We assume that the verifier and the prover have already exchanged nonces $n_P$ and $n_V$:
\[
 \begin{array}{l}
  V \lra P : n_P      \qquad  \text{at time \ } t_0 \\
  P \lra V : n_V  \qquad \text{at time \ } t_1\\
 V \lra P : OK(P)  \quad  \ \, %
\ \text{iff} \ t_1-t_0 \leq R
 \end{array}
\]

\paragraph{Encoding of verifier's clock}
The fact $Clock_V@T$ denotes the  local clock of the verifier \ie,~the discrete time clock that verifier uses to measure the response time in the distance bounding phase of the protocol. 

We encode ticking of verifier's clock  in \emph{discrete units of time}.
Action (\ref{eq:time-verifier}) represents the ticking of verifier's clock:
\begin{equation}
\label{eq:time-verifier}
 Time@T ,\  Clock_V@T_1 \ | \ \{~T=T_1+1~\} \ \lra Time@T,\  Clock_V@T 
\end{equation}
Notice that if this action is not executed and $T$ advances too much, \ie, $T > T_1$, it means that the verifier clock stopped as it no longer advances.

\paragraph{Network}
Let $D(X,Y)=D(Y,X)$ be the integer representing the minimum time needed for a message to reach $Y$ from $X$. We also assume that participants do not move. Rule (\ref{eq-network-XY-not consumed})
models network transmission from $X$ to some $Y$:
\begin{equation}
\begin{array}{@{}l@{}}%
 Time@T,\, \mathcal{N}_X^S~(m)@T_1, \, E@T_2 \, \mid \, \{~T\geq T_1 + D(X,Y)~\}  ~\lra~ 
 Time@T,\, \mathcal{N}_X^S~(m)@T_1, \, \mathcal{N}_Y^R~(m)@T
\end{array}
           \label{eq-network-XY-not consumed}
\end{equation}%
Facts ${\cal N}_X^S(m) $ and ${\cal N}_X^R(m) $ specify that the participant $X$ 
has sent and may receive the message $m$, respectively.  
Once $X$ has sent the message $m$, that message can only be received by $Y$ once it traveled from $X$ to $Y$. The fact $E$ is an empty fact which can be interpreted as a slot of resource. 
This is a technical device used to turn a theory balanced. 
See~\cite{kanovich13esorics} for more details.

\begin{remark}
Notice that in the rule (\ref{eq-network-XY-not consumed}) the fact ~$\mathcal{N}_X^S(m)@T_1$  is not consumed. 
This models the transmission media, such as radio frequency, where messages are not consumed by recipients. In such media, and as modelled by the rule (\ref{eq-network-XY-not consumed}), 
it is possible for multiple participants to receive the same message $m$. Alternatively, as in the classical (wire) network communication, the messages are removed from the network. In our formal model we are able to represent such transmission media as well, \eg, using the following rule
$$
\begin{array}{@{}l@{}}%
 Time@T,\, \mathcal{N}_X^S~(m)@T_1  \, \mid \, \{~T\geq T_1 + D(X,Y)~\} ~\lra~ 
 Time@T,\,  \mathcal{N}_Y^R~(m)@T \ .
\end{array}
        $$
\end{remark}

\paragraph{Measuring the round trip time of messages}
A protocol run creates facts denoting times when messages of the distance bounding phase are sent and received by the verifier.
Predicates  $\mathit{Start}$ and $\mathit{Stop}$  denote the actual (real) time of these events so that the round trip time of messages is
$T_2 - T_1$  for timestamps $T_1, T_2$ in
 $\mathit{Start}(m)@T_1$,  $\mathit{Stop}(m)@T_2$.
On the other hand  predicates $\mathit{Start}_V$ and $\mathit{Stop}_V$  model the verifier's view of time: $ T_2 - T_1$, for $T_1, T_2$ in
$\mathit{Start}_V(m)@T_1$, $\mathit{Stop}_V(m)@T_2
$.

\begin{figure*}[t]
\begin{equation*}
\begin{array}{ll@{\qquad\qquad}ll}
& Time@T, ~V_0~(P,N_P,N_V)@T_1, ~E@T_2, ~E@T_3 \lra \\
& \qquad \qquad \qquad \qquad
         Time@T, ~V_1~(\textrm{pending},P,N_P,N_V)@T, ~\mathcal{N}_V^S~(N_P)@T, ~\mathit{Start}~(P,N_P,N_V)@T\\[3pt]
      
& Time@T, ~V_1~(\textrm{pending},P,N_P,N_V)@T_1, ~Clock_V@T, ~P@T_2 \ ~  |~ \ ~\{ ~T \geq T_1~\} ~ \lra \\
& \qquad \qquad \qquad \qquad
 Time@T, ~V_1~(start,P,N_P,N_V)@T, ~Clock_V@T, ~\mathit{Start}_V~(P,N_P,N_V)@T\\[3pt]

& Time@T, ~P_0~(V,N_V,N_P)@T_1, ~\mathcal{N}_P^R~(N_P)@T_2 ~ \ | ~  \ \{~ T  \geq T_2 ~\} ~ \lra \\
&  \qquad \qquad \qquad \qquad Time@T, ~P_1~(V,N_V,N_P)@T, ~\mathcal{N}_P^S~(N_V)@T \\[2pt]

&Time@T, ~V_1~(\textrm{start},P,N_P,N_V)@T_1,~\mathcal{N}_V^R~(N_V)@T_2 \lra \\  
& \qquad \qquad \qquad \qquad
Time@T, ~V_2~(\textrm{pending},P,N_P,N_V)@T, ~\mathit{Stop}~(P,N_P,N_V)@T\\[3pt]

& Time@T, ~V_2~(\textrm{pending},P,N_P,N_V)@T_1, ~Clock_V@T, ~E@T_2 \ \ | \ \ \{~T \geq T_1 ~\} ~ \lra \\
& \qquad \qquad \qquad \qquad
Time@T, ~V_2~(\textrm{stop},P,N_P,N_V)@T, ~Clock_V@T, ~\mathit{Stop}_V~(P,N_P,N_V)@T\\[3pt]

& Time@T, ~\mathit{Start}_V~(P,N_P,N_V)@T_1, ~\mathit{Stop}_V~(P,N_P,N_V)@T_2, ~V_2~(\textrm{stop},P,N_P,N_V)@T_3 \ \ |\\
&  \qquad \qquad \qquad \qquad \ \ \{~ T_2-T_1 \leq R, ~ T \geq T_3 ~\} ~ \lra ~ 
Time@T, ~V_3~(P)@T, ~\mathcal{N}_V^S~(Ok(P))@T, ~E@T       
\end{array}
\end{equation*}
\caption{Protocol Rules for DB protocol}
\label{fig:DB}
 \vspace{2mm}
\end{figure*}

\paragraph{Protocol Theory} Our example protocol $DB$ is formalized in Figure \ref{fig:DB}.
The first rule specifies that the verifier has sent a nonce and still needs to mark the time, specified by the fact ~$V_1(\textrm{pending},P,N_P,N_V)@T$. The second rule specifies verifier's instruction of remembering the current time. The third rule specifies prover's response to the verifier's challenge. The fourth and fifth rules are similar to the first two, specifying when verifier actually receives  prover's response and when he executes the instruction to remember the time. Finally, the sixth rule specifies that the verifier grants access to the prover if he believes that the distance to the prover is under the given bound.

\vspace{2mm}
\paragraph{Attack In-Between-Ticks}

We now show how attack in-between-ticks is detected in our formalization.

The initial configuration contains facts $Time@0$, $Clock_V@0$ denoting that global time and time on verifier's discrete time are initially set to $0$.

Given the protocol specification in Figure \ref{fig:DB}, attack in-between-ticks is  represented with the following configuration:
$$
\{~ \mathit{Start}~(P,N_P,N_V)@T_1, ~\mathit{Stop}~(P,N_P,N_V)@T_2,
~{\cal N}_V^S~(Ok(P))@T_3 ~\} \
 \ | \ \ \{~ T_2-T_1 > R~\}
$$
It denotes that in the session involving nonces $N_P,N_V$ the verifier $V$ has allowed the access to prover $P$ although the distance requirement has been violated.

Notice that such an anomaly is really  possible in this specification. Consider the following example: verifiers actually sends the first message at time 1.7 while the prover  responds at time 4.9. 
Between moments 1.7 and 4.9, there would be 3 ticks on the verifier's clock.  The verifier would consider starting time of 2 and finishing time of 5, and confirm with  the time bound $R=3$. Actually,  the real round trip time is greater than the time bound, namely $4.9-1.7=3.2$.
Following facts would appear in the configuration: 
$$\mathit{Start}_V~(n)@2, ~\mathit{Stop}_V~(n)@5, ~\mathit{Start}~(n)@1.7, ~\mathit{Stop}~(n)@4.9 \ .$$
Since ~$5 - 2 = 3$~ the last rule from Figure \ref{fig:DB}, the accepting rule, would apply resulting in the configuration containing the facts:
$$
 \ \mathit{Start}~(p,n_P,n_V)@1.7, \ \mathit{Stop}~(p,n_P,n_V)@4.9, \ {\cal N}_V^S~(Ok(p))@5 \ .
$$
Since ~$4.9-1.7=3.2$ is greater than ~$R=3$, this configuration constitutes an attack.

\medskip
Finally, notice as well that in our  formalization  in Figure \ref{fig:DB}, prover immediately responds to the received challenge message, see the third rule in  Figure \ref{fig:DB}.
In reality there is some non-zero time of processing of messages.
This could cause additional discrepancy between the actual and the measured round trip time. 
Namely, once the prover receives the challenge message he needs to check whether the nonce received is the agreed value, previously exchanged with the verifier. Then, he needs to compose the response message for sending, using the stored nonce value. This process would take some additional time which may be calculated in the design of the distance bounding protocol itself, \ie,~in the established bounding time of the protocol, $R$. The difference between expected and actual processing time can  be the source of further inaccuracy.

\long\def\comment#1{}

\newcount \WIDTH  
\newcount \HEIGHT 
\newcount \Xcur   
\newcount \Ycur   
\newcount \HALF   
\newcount \DBL    
\newcount \QUA    
\newcount \TreeH
\newcount \TreeW

\def\mtrue{\mbox{{\bf I}}}
\def\une{\mbox{${\mathchoice{\rm 1\mskip-4mu l}{\rm 1\mskip-4mu l}%
               {\rm 1\mskip-4.5mu l}{\rm 1\mskip-5mu l}}$}}
\def\mtrue{\une}

\long\def\comment#1{}

\newcount \WIDTH  
\newcount \HEIGHT 
\newcount \Xcur   
\newcount \Ycur   
\newcount \HALF   
\newcount \DBL    
\newcount \QUA    
\newcount \TreeH
\newcount \TreeW

\def\mtrue{\mbox{{\bf I}}}
\def\une{\mbox{${\mathchoice{\rm 1\mskip-4mu l}{\rm 1\mskip-4mu l}%
               {\rm 1\mskip-4.5mu l}{\rm 1\mskip-5mu l}}$}}
\def\mtrue{\une}

\def\hbrace#1#2%
{\begin{picture}(0,0)\thicklines
 \put(0,0){\makebox(0,0)[c]{$\overbrace{\hspace*{#1 pt}}^{#2}$}}
\end{picture}}

\long\def\probOne#1#2#3#4#5#6#7#8%
{\setlength{\unitlength}{1pt}\HEIGHT=#1\multiply\HEIGHT by 2%
\advance\HEIGHT by 28%
\begin{picture}(0,\HEIGHT)%
  \put(0,\HEIGHT){%
\begin{picture}(0,0)\thicklines\linethickness{2pt}%
\Ycur=0
\Xcur=#1
\Ycur=14
 \put(-\Xcur,-\Ycur){{}{\line(-1,-2){#1}}}
 \put(-\Xcur,-\Ycur){{}{\line( 1,-2){#1}}}
 \put(-\Xcur,-\Ycur){\makebox(0,0)[bl]  {#3}}

\advance\Ycur by #1
\advance\Ycur by #1

\Xcur=#1\multiply\Xcur by 2%
 \put(\Xcur,-\Ycur){{}{\vector(1,0){#1}}}
 \put(0,-\Ycur){{}{\line(1,0){\Xcur}}}
 \put(-\Xcur,-\Ycur){{}{\line(-1,0){#1}}}
\linethickness{0.3pt}%
\Xcur=#1\multiply\Xcur by 2%
 \put(0,-\Ycur){\line(-1,0){\Xcur}}
 \put(0,-\Ycur){\line( 1,0){\Xcur}}
\thicklines
 \put(-\Xcur,-\Ycur){\makebox(0,0)[c]{$\bullet$}}
 \put(-#1,-\Ycur){\makebox(0,0)[c]{$\bullet$}}
 \put(0,-\Ycur){\makebox(0,0)[c]{$\bullet$}}
 \put( #1,-\Ycur){\makebox(0,0)[c]{$\bullet$}}
 \put( \Xcur,-\Ycur){\makebox(0,0)[c]{$\bullet$}}
\advance\Ycur by 5
 \put(-\Xcur,-\Ycur){\makebox(0,0)[t]{#4}}
 \put(-#1,-\Ycur){\makebox(0,0)[t]{#5}}
 \put(0,-\Ycur){\makebox(0,0)[t]{#6}}
 \put( #1,-\Ycur){\makebox(0,0)[t]{#7}}
 \put( \Xcur,-\Ycur){\makebox(0,0)[t]{#8}}

{}{%
\Xcur=#1\divide\Xcur by 4%
 \put(-\Xcur,-\Ycur){\makebox(0,0)[t]{$\boldmath{\ell}$}}
 \put(-\Xcur,-\Ycur){\makebox(0,0)[b]{$\raisebox{2pt}{\line(0,1){6}}$}}
}
\end{picture}%
}\end{picture}}



\long\def\probOneh#1#2#3#4#5#6#7#8%
{\setlength{\unitlength}{1pt}\HEIGHT=#1\multiply\HEIGHT by 2%
\advance\HEIGHT by 28%
\begin{picture}(0,\HEIGHT)%
  \put(0,\HEIGHT){%
\begin{picture}(0,0)\thicklines\linethickness{2pt}%
\Ycur=0
\Xcur=#1
\Ycur=14
 \put(-\Xcur,-\Ycur){{}{\line(-1,-2){#1}}}
 \put(-\Xcur,-\Ycur){{}{\line( 1,-2){#1}}}
 \put(-\Xcur,-\Ycur){\makebox(0,0)[bl]  {#3}}

\advance\Ycur by #1
\advance\Ycur by #1

\Xcur=#1\multiply\Xcur by 2%
 \put(\Xcur,-\Ycur){{}{\vector(1,0){#1}}}
 \put(0,-\Ycur){{}{\line(1,0){\Xcur}}}
 \put(-\Xcur,-\Ycur){{}{\line(-1,0){#1}}}
\linethickness{0.3pt}%
\Xcur=#1\multiply\Xcur by 2%
 \put(0,-\Ycur){\line(-1,0){\Xcur}}
 \put(0,-\Ycur){\line( 1,0){\Xcur}}
\thicklines
 \put(-\Xcur,-\Ycur){\makebox(0,0)[c]{$\bullet$}}
 \put(-#1,-\Ycur){\makebox(0,0)[c]{$\bullet$}}
 \put(0,-\Ycur){\makebox(0,0)[c]{$\bullet$}}
 \put( #1,-\Ycur){\makebox(0,0)[c]{$\bullet$}}
 \put( \Xcur,-\Ycur){\makebox(0,0)[c]{$\bullet$}}
\advance\Ycur by 5
 \put(-\Xcur,-\Ycur){\makebox(0,0)[t]{#4}}
 \put(-#1,-\Ycur){\makebox(0,0)[t]{#5}}
 \put(0,-\Ycur){\makebox(0,0)[t]{#6}}
 \put( #1,-\Ycur){\makebox(0,0)[t]{#7}}
 \put( \Xcur,-\Ycur){\makebox(0,0)[t]{#8}}

{}{%
 \put(-#1,-\Ycur){\makebox(0,0)[b]{\raisebox{2ex}{$\boldmath{R}$}}}
 \put(-#1,-\Ycur){\makebox(0,0)[b]{$\raisebox{2pt}{\line(0,1){6}}$}}
\Xcur=#1\divide\Xcur by 4%
 \put(-\Xcur,-\Ycur){%
     \makebox(0,0)[b]{\raisebox{2ex}{$\boldmath{\ell}$}}}
 \put(-\Xcur,-\Ycur){\makebox(0,0)[b]{$\raisebox{2pt}{\line(0,1){6}}$}}
\advance\Ycur by -27 
\Xcur=#1\multiply\Xcur by 5\divide\Xcur by 8%
\HALF=#1\multiply\HALF by 3\divide\HALF by 4%
 \put(-\Xcur,-\Ycur){\hbrace{\HALF}{h}}
}
\end{picture}%
}\end{picture}}

\long\def\prob#1#2#3#4#5#6#7#8%
{\setlength{\unitlength}{1pt}\HEIGHT=#1\multiply\HEIGHT by 2%
\advance\HEIGHT by 28%
\begin{picture}(0,\HEIGHT)%
  \put(0,\HEIGHT){%
\begin{picture}(0,0)\thicklines\linethickness{2pt}%
\Ycur=0
\Xcur=#1
\Ycur=14
 \put( \Xcur,-\Ycur){{}{\line(-1,-2){#1}}}
 \put( \Xcur,-\Ycur){{}{\line( 1,-2){#1}}}
 \put( \Xcur,-\Ycur){\makebox(0,0)[bl]  {#3}}

\advance\Ycur by #1
 \put(-\Xcur,-\Ycur){{}{\line(-1,-1){#1}}}
 \put(-\Xcur,-\Ycur){{}{\line( 1,-1){#1}}}
 \put(-\Xcur,-\Ycur){\makebox(0,0)[bl]  {#2}}
\advance\Ycur by #1
\Xcur=#1\multiply\Xcur by 2%
 \put(-\Xcur,-\Ycur){{}{\line(-1,0){#1}}}
 \put( \Xcur,-\Ycur){{}{\vector(1,0){#1}}}
\linethickness{0.3pt}%
\Xcur=#1\multiply\Xcur by 2%
 \put(0,-\Ycur){\line(-1,0){\Xcur}}
 \put(0,-\Ycur){\line( 1,0){\Xcur}}
\thicklines
 \put(-\Xcur,-\Ycur){\makebox(0,0)[c]{$\bullet$}}
 \put(-#1,-\Ycur){\makebox(0,0)[c]{$\bullet$}}
 \put(0,-\Ycur){\makebox(0,0)[c]{$\bullet$}}
 \put( #1,-\Ycur){\makebox(0,0)[c]{$\bullet$}}
 \put( \Xcur,-\Ycur){\makebox(0,0)[c]{$\bullet$}}
\advance\Ycur by 5
 \put(-\Xcur,-\Ycur){\makebox(0,0)[t]{#4}}
 \put(-#1,-\Ycur){\makebox(0,0)[t]{#5}}
 \put(0,-\Ycur){\makebox(0,0)[t]{#6}}
 \put( #1,-\Ycur){\makebox(0,0)[t]{#7}}
 \put( \Xcur,-\Ycur){\makebox(0,0)[t]{#8}}

{}{%
\Xcur=#1\multiply\Xcur by 5\divide\Xcur by 8%
 \put(\Xcur,-\Ycur){\makebox(0,0)[t]{$\boldmath{\ell}$}}
 \put(\Xcur,-\Ycur){\makebox(0,0)[b]{\raisebox{2pt}{\line(0,1){6}}}}
}
\end{picture}%
}\end{picture}}

\long\def\probh#1#2#3#4#5#6#7#8%
{\setlength{\unitlength}{1pt}\HEIGHT=#1\multiply\HEIGHT by 2%
\advance\HEIGHT by 28%
\begin{picture}(0,\HEIGHT)%
  \put(0,\HEIGHT){%
\begin{picture}(0,0)\thicklines\linethickness{2pt}%
\Ycur=0
\Xcur=#1
\Ycur=14
 \put( \Xcur,-\Ycur){{}{\line(-1,-2){#1}}}
 \put( \Xcur,-\Ycur){{}{\line( 1,-2){#1}}}
 \put( \Xcur,-\Ycur){\makebox(0,0)[bl]  {#3}}

\advance\Ycur by #1
 \put(-\Xcur,-\Ycur){{}{\line(-1,-1){#1}}}
 \put(-\Xcur,-\Ycur){{}{\line( 1,-1){#1}}}
 \put(-\Xcur,-\Ycur){\makebox(0,0)[bl]  {#2}}
\advance\Ycur by #1
\Xcur=#1\multiply\Xcur by 2%
 \put(-\Xcur,-\Ycur){{}{\line(-1,0){#1}}}
 \put( \Xcur,-\Ycur){{}{\vector(1,0){#1}}}
\linethickness{0.3pt}%
\Xcur=#1\multiply\Xcur by 2%
 \put(0,-\Ycur){\line(-1,0){\Xcur}}
 \put(0,-\Ycur){\line( 1,0){\Xcur}}
\thicklines
 \put(-\Xcur,-\Ycur){\makebox(0,0)[c]{$\bullet$}}
 \put(-#1,-\Ycur){\makebox(0,0)[c]{$\bullet$}}
 \put(0,-\Ycur){\makebox(0,0)[c]{$\bullet$}}
 \put( #1,-\Ycur){\makebox(0,0)[c]{$\bullet$}}
 \put( \Xcur,-\Ycur){\makebox(0,0)[c]{$\bullet$}}
\advance\Ycur by 5
 \put(-\Xcur,-\Ycur){\makebox(0,0)[t]{#4}}
 \put(-#1,-\Ycur){\makebox(0,0)[t]{#5}}
 \put(0,-\Ycur){\makebox(0,0)[t]{#6}}
 \put( #1,-\Ycur){\makebox(0,0)[t]{#7}}
 \put( \Xcur,-\Ycur){\makebox(0,0)[t]{#8}}

{}{%
\Xcur=#1\multiply\Xcur by 5\divide\Xcur by 8%
 \put(\Xcur,-\Ycur){%
     \makebox(0,0)[b]{\raisebox{2ex}{$\boldmath{\ell}$}}}
 \put(\Xcur,-\Ycur){\makebox(0,0)[b]{\raisebox{2pt}{\line(0,1){6}}}}

 \put(-#1,-\Ycur){\makebox(0,0)[b]{\raisebox{2ex}{$\boldmath{R}$}}}
 \put(-#1,-\Ycur){\makebox(0,0)[b]{$\raisebox{2pt}{\line(0,1){6}}$}}
\advance\Ycur by -27 
\Xcur=#1\multiply\Xcur by 3\divide\Xcur by 16%
\HALF=#1\multiply\HALF by 13\divide\HALF by 8%
 \put(-\Xcur,-\Ycur){\hbrace{\HALF}{h}}
}
\end{picture}%
}\end{picture}}

\newcommand{\CLOCK}[9]{
\begin{picture}(0,0)\thicklines%
 \put(0,0){\makebox(0,0)[c]{$\bullet$}}%
 \DBL=#1\multiply\DBL by 7\divide\DBL by 10
 \HALF=#1\multiply\HALF by 2\divide\HALF by 5
 $\bezier{#1}(0,#1)       (\HALF,#1)  (\DBL,\DBL)$%
 $\bezier{#1}(\DBL,\DBL)  (#1,\HALF)  (#1,0)$%
 $\bezier{#1}(#1,0)       (#1,-\HALF) (\DBL,-\DBL)$%
 $\bezier{#1}(\DBL,-\DBL) (\HALF,-#1) ( 0,-#1)$%
 $\bezier{#1}(0,-#1)      (-\HALF,-#1)(-\DBL,-\DBL)$%
 $\bezier{#1}(-\DBL,-\DBL)(-#1,-\HALF)(-#1,0)$%
 $\bezier{#1}(-#1,0)      (-#1, \HALF)(-\DBL,\DBL)$%
 $\bezier{#1}(-\DBL,\DBL) (-\HALF, #1)(0,#1)$%
 \put(0,#1){\line(0,-1){4}}
 \put(#1,0){\line(-1,0){4}}%
 \put(0,-#1){\line(0,1){4}}%
 \put(-#1,0){\line(1,0){4}}%
%
%
 \HALF=#1 \advance\HALF by 5%
 \DBL=\HALF\multiply\DBL by 7\divide\DBL by 10
 \put(0,\HALF){\makebox(0,0)[b]{#2}}%
 \put(\DBL,\DBL){\makebox(0,0)[bl]{#3}}%
 \put(\HALF,0){\makebox(0,0)[l]{#4}}%
 \put(\DBL,-\DBL){\makebox(0,0)[tl]{#5}}%
 \put(0,-\HALF){\makebox(0,0)[t]{#6}}%
 \put(-\DBL,-\DBL){\makebox(0,0)[tr]{#7}}%
 \put(-\HALF,0){\makebox(0,0)[r]{#8}}%
 \put(-\DBL,\DBL){\makebox(0,0)[br]{#9}}%
\end{picture}}

\long\def\zone#1#2#3#4#5#6#7#8%
{\setlength{\unitlength}{1pt}\HEIGHT=#1\multiply\HEIGHT by 6%
\begin{picture}(0,\HEIGHT)%
  \put(0,\HEIGHT){%
\begin{picture}(0,0)\thicklines
\Ycur=0
\advance\Ycur by #1
\advance\Ycur by #1
\advance\Ycur by #1
\Xcur=#1\multiply\Xcur by 2%
 \put(0,-\Ycur){{}{%
    \CLOCK{\Xcur}{{}{``{}{$p_{error}(R,h)>0$}''}} {} {}{}
  {$R/2$} {}
{}{}{} {} {}{} {} {}
\put(0,0){\vector(0,-1){\Xcur}}
}}%
\Xcur=#1\multiply\Xcur by 3%
 \put(0,-\Ycur){{}{%
    \CLOCK{\Xcur}{}
{{}{``\mbox{$p_{error}(R,h)=0$}''}}{} {$\ell/2$} {} {}{}{}{}{}{}
\Xcur=#1\multiply\Xcur by 2\advance\Xcur by 6%
 \put(0,0){\vector(1,-1){\Xcur}}
}}%
 \put(0,-\Ycur){\makebox(0,0)[c]{\qquad \qquad Verifier}}
\advance\Ycur by -7 
\Xcur=#1\multiply\Xcur by 5\divide\Xcur by 2%
 \put(\Xcur,-\Ycur){\hbrace{#1}{1/2}}
\end{picture}%
}\end{picture}}


\section{Attack in Between Ticks - 
 A Full Probabilistic Analysis}
\label{s-prob-main}

\def\theenumi{{\rm\bf(\arabic{enumi})}}

In Section \ref{sec: attack-in-b-t} we presented the novel attack which we believe can be carried out on most distance bounding protocols. Our symbolic model described in Section~\ref{sec:msr}  formally demonstrates that the attack can, in principle, happen. In this section we investigate how likely it is for an attacker to carry out such an attack. We explicitly calculate the probability of such an erroneous    ``acceptance event''  happening based on a single challenge/response time measurement. Such a  ``measurement phase''  is an indispensable part of any distance bounding protocol and a basis for the ultimate verifier's decision of whether to grant the access or not. 

Our main result of this section (Theorem~\ref{t-h}) is to show that the attack is not so unlikely when the prover is beyond the established perimeter  up to a distance corresponding to a half-tick of the verifier, \ie, 3 meters in the scenario discussed in Section~\ref{sec:motivation}. The probability of distance measurement error is of 1/2. This probability, however, reduces to zero once the prover is further away.  That is, probability of error is  zero when the prover exceeds the perimeter  at the distance corresponding to one tick or more. 

\vspace{0,5em}
The attack in-between-ticks  is based on the discrepancy between the
 {\em observable\/} time interval\/\ \mbox{$t_1-t_0$},\ (between the moment $t_0$ when the time of sending the challenge message is recorded, and the moment $t_1$ when the time of receiving the response message is recorded)
 and the {\em actual\/} time interval\/\ \mbox{$s_1-s_0$}.
Here,  $s_0$, $s_1$, $t_0$ and $t_1$ respectively  denote the actual time when the challenge message is sent, and the actual time of receiving the response, the recorded time of sending the challenge message, and  the recorded  time of receiving the response message.

\vspace{0,5em}
For our probabilistic analysis we consider the {\em challenge-response protocol\/}
 in which the verifier reacts as quick as possible
 but within the time constraints that
 {\em only one operation can be executed in one clock cycle}.%
\footnote{%
 From the performance point of view, the difference between
 discrete time and dense time is that, in contrast with dense time,
 only {\em a fixed finite number of events may occur
 within a bounded time interval\/} in the case of discrete time.
 Without loss of generality,
 we allow here no more than one action
 be performed in one clock cycle.
} 

In a round of the challenge-response protocol the verifier performs the  following actions:

\begin{enumerate}
\item[(1)] 
 At a moment\/~$s_0$ within an initial clock cycle~$1$, say
  \  \mbox{$s_0=1 + X$},\ 
 verifier sends a challenge message\/~$m$.
 Here $X$ is a random variable
 distributed on the interval\/~\mbox{$[0,\frac{1}{2}]$}
 with its probability density\/~$f_{X}$.
\item[(2)] 
 {\em Just after that\/} - that is,
 at a moment\/~$t_0$ within the next clock cycle~$2$,
 say\ \mbox{$t_0=2 + Y$},\ 
 verifier records the fact that $m$ has been sent.
 Here $Y$ is a random variable
 distributed on the interval\/~\mbox{$[0,\frac{1}{2}]$}
 with its probability density\/~$f_{Y}$.
\item[(3)] 
 At a moment\/~$s_1$ within the corresponding
 clock cycle~\mbox{$\lfloor s_1\rfloor$},
 say\ \mbox{$s_1= s_0+\ell$},\
 verifier receives a response message\/~$m'$.
\item[(4)] 
 {\em Just after that\/}, that is,
 at a moment\/~$t_1$ within the next
 clock cycle~\mbox{$\lfloor s_1\rfloor+1$},
 say\ \mbox{$t_1 = (\lfloor s_1 \rfloor +1)+Z$},\ 
 verifier records the fact that $m'$ has been received.
 Here $Z$ is a random variable
 distributed on the interval\/~\mbox{$[0,\frac{1}{2}]$}
 with its probability density\/~$f_{Z}$.
\end{enumerate}

\vspace{1em}
 For the sake of perspicuity, we assume 
  $X$, $Y$, and\/ $Z$ be independent random variables
 uniformly distributed on the interval\/~\mbox{$[0,\frac{1}{2}]$}.

\noindent
 Thus, we are dealing with the model given by the system as illustrated in Figure~\ref{fig:in-btw-clock-attack3}:
{}{%
\begin{equation}%
 \begin{array}{lcllcl}
 s_0 & = & 1+X, &\qquad  s_1 & = & s_0+\ell,
\\[1ex]{}%
 t_0 & = & 2+Y, &\qquad  t_1 & = & \lfloor s_1\rfloor +1+Z.
   \end{array}%
          \label{eq-main}
\end{equation}%
}%

\begin{figure*}[t]
\centering
\vspace{2mm}
\includegraphics[width={ 0.6\textwidth}]{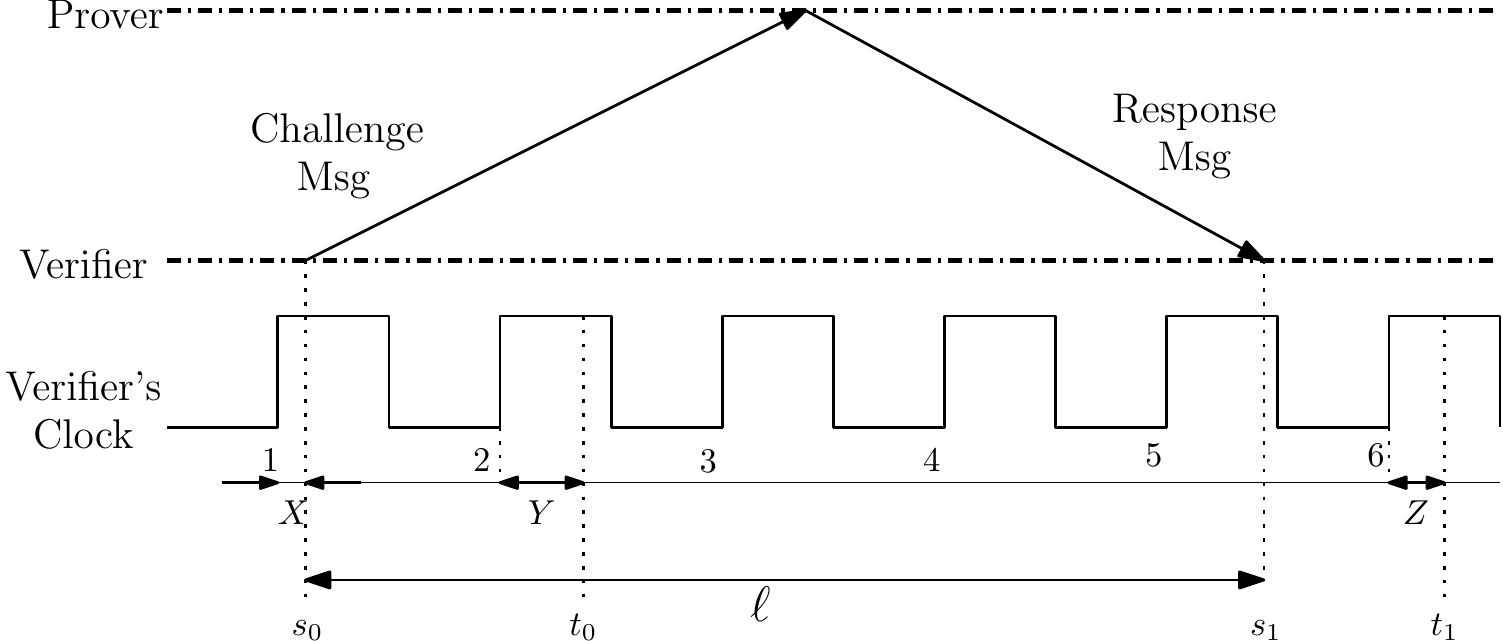}
\vspace{1mm}
\caption{In different ticks (Sequential Execution)}
\vspace{3mm}
\label{fig:in-btw-clock-attack3}
\end{figure*} 

\noindent
 The decision rule applied by the verifier is described bellow.

\vspace{1em}
\begin{definition}\label{d-yes}
 For a fixed time response bound, an integer\/~$R$,
 {}{verifier decides to grant the access to its resources}
 if and only if 
 the following holds
 for the\/ {\em measured\/} time interval\/\ \ \mbox{$t_1-t_0$}:
         $$\mbox{$ t_1-t_0 \leq R $}.$$%
\end{definition}

 Thus, the ``Yes'' decision taken by the verifier is erroneous if in reality
 the actual distance between verifier and prover,  \ \mbox{$s_1-s_0$}, \ turns out to be larger
 than\/~$R$, say by some positive value~$h$. 

\vspace{2em}
We now investigate the probability of such an event actually occurring. Firstly, we define the required probability.

\vspace{1em}
\begin{definition}\label{d-error}
 For a fixed time response bound, an integer\/~$R$,
 and an extra, a positive\/~$h$, we define
 the {\em probability of the erroneous decision  to grant the access \/},
\ \ \mbox{$p_{error}(R,h)$},
\begin{equation}%
 p_{error}(R,h) = Prob\,\{\ t_1-t_0\leq R\ /\ s_1-s_0 = R+h\ \}
      \label{eq-error-00}
\end{equation}%
 as the conditional probability of an ``acceptance event''
 of the form \
      $ t_1-t_0 \leq R,$%
\  given that
   \  $ \mbox{$s_1-s_0 = R+h $}.$%
\end{definition}

\vspace{1em}
 We calculate the probability of the {\em erroneous decision\/},
\ \mbox{$p_{error}(R,h)$}, and obtain the explicit values of probabilities as stated in the theorem bellow.
Recall that the protocol time response bound $R$ and the actual challenge-response time $\ell$ denote the respective round trip time of messages, and that in case of an erroneous ``Yes'' decision taken by the verifier, the value of ~$h$,  $~h= \ell - R $, is  positive.
In such a case prover that is outside of the perimeter specified by the value $\frac{R}{2}$ appears within the perimeter from the point of view of the verifier.

\vspace{01em}
The obtained  results are visualized in Figure~\ref{f-zone} which  shows how the conditional probability of erroneous decision, \ $ p_{error}(R,h)$, is classified w.r.t. time distances  between the verifier and the prover.
This probability is non-zero when the prover is in the zone close to the perimeter, on the outskirts up to $\frac{1}{2}$ tick time distance which is $1$ tick in the round trip time.

\vspace{1em}
\begin{theorem}\label{t-h}
 Let\/ $X$, $Y$ and $Z$ be independent random variables
 uniformly distributed on\/~\mbox{$[0,\frac{1}{2}]$}.
 Then, for a fixed time response bound, an integer\/~$R$,
 and an extra, a positive\/~$h$,
the probability of the erroneous decision  to grant the access, $p_{error}(R,h) $,  is given by
\begin{equation}%
  p_{error}(R,h) =
\left\{\begin{array}{ll}{}%
       \ \frac{1}{2}, &\ \ \mbox{if\/\ \ $0 < h \leq \frac{1}{2}$},
\\[1ex]%
    \    1-h,
 &\ \ \mbox{if\/\ \ $\frac{1}{2}<h<1$},
\\[1ex]%
  \      0, &\ \ \mbox{if\/\ \ $h \geq 1$}.
\end{array}\right.
               \label{eq-h-2}
\end{equation}%

\end{theorem}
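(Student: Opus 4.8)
The plan is to reduce the conditional event $\{t_1-t_0\le R\}$, given $s_1-s_0=\ell=R+h$, to an elementary inequality among $X,Y,Z$ and then integrate. First I would substitute the definitions in~(\ref{eq-main}): since $s_1=s_0+\ell=1+X+R+h$ and $R$ is an integer, $\lfloor s_1\rfloor=R+1+\lfloor X+h\rfloor$, so that
\[
 t_1-t_0=\bigl(\lfloor s_1\rfloor+1+Z\bigr)-\bigl(2+Y\bigr)=R+\lfloor X+h\rfloor+Z-Y .
\]
Hence the ``acceptance event'' $t_1-t_0\le R$ is exactly $\lfloor X+h\rfloor\le Y-Z$. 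Since $X,Y,Z$ take values in $[0,\tfrac12]$ we have $Y-Z\in[-\tfrac12,\tfrac12]$, so this can hold only when $\lfloor X+h\rfloor\le 0$, i.e. when $\lfloor X+h\rfloor=0$, which for $h>0$ forces $X+h<1$.

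Next I would split according to the stated ranges of $h$. If $h\ge1$ then $X+h\ge1$ always, so $\lfloor X+h\rfloor\ge1$ and the acceptance event is impossible, giving $p_{error}(R,h)=0$. If $0<h<1$, then on the event $X\ge1-h$ we have $\lfloor X+h\rfloor=1$ (again impossible), while on the complementary event $X<1-h$ we have $\lfloor X+h\rfloor=0$ and the acceptance event reduces to $Y\ge Z$. Using that $X$ is independent of $(Y,Z)$ and that $Y,Z$ are i.i.d.\ with a continuous law, so $\Pr[Y\ge Z]=\tfrac12$, I obtain
\[
 p_{error}(R,h)=\Pr[\,X<1-h\,]\cdot\tfrac12 .
\]

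Finally I would evaluate $\Pr[X<1-h]$ for $X$ uniform on $[0,\tfrac12]$: it equals $1$ when $1-h\ge\tfrac12$, i.e. $h\le\tfrac12$, and equals $2(1-h)$ when $\tfrac12<h<1$. Multiplying by $\tfrac12$ yields $p_{error}(R,h)=\tfrac12$ for $0<h\le\tfrac12$ and $p_{error}(R,h)=1-h$ for $\tfrac12<h<1$, which together with the case $h\ge1$ gives exactly~(\ref{eq-h-2}).

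I do not expect a real obstacle here; the only points that need care are the bookkeeping with the floor term $\lfloor X+h\rfloor$ (in particular that the boundary events $X+h\in\mathbb{Z}$ have probability zero and can be ignored), and making explicit that conditioning on $s_1-s_0=R+h$ in this model amounts simply to fixing the parameter $\ell=R+h$, so that $p_{error}(R,h)$ is an ordinary probability over $X,Y,Z$ and no limiting/regular-conditioning argument is required.
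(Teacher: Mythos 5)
Your proposal is correct and follows essentially the same route as the paper: your identity $t_1-t_0=R+\lfloor X+h\rfloor+Z-Y$ is exactly the paper's Lemma~\ref{l-t1-t0}, your split on $X+h<1$ versus $X+h\ge 1$ with weight $\Pr[X<1-h]=2(1-h)$ is exactly Lemma~\ref{l-F-ell}, and the final step uses $\Pr[Y\ge Z]=\tfrac12$ in both cases. The only difference is presentational — the paper packages the argument through the full conditional distribution function $F_{\ell}(x)$ and its density before evaluating at $x=R$, whereas you evaluate the single acceptance probability directly.
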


\begin{figure*}[b]
\vspace{1em}
\begin{center}%
 \zone{28} 
 {$\raisebox{1.1ex}{$2-4\widetilde{\ell}$}$}
 {$\raisebox{1.1ex}{$4\widetilde{\ell}$}$}
 {$\lfloor\ell\rfloor-\frac{1}{2}\ \ \ $} {$\lfloor\ell\rfloor$}
 {$\lfloor\ell\rfloor+\frac{1}{2}$} {$\lfloor\ell\rfloor+1$}
 {$\lfloor\ell\rfloor+\frac{3}{2}$}
\end{center}
\caption{Towards Theorem~\ref{t-h}.
\ Conditional probability of erroneous decision \ $ p_{error}(R,h)$
\ classified w.r.t.  round trip time distances, $R$ and $\ell$,  between verifier and  prover. The Verifier is at the center of the circles. The inner circle represents the distance bounding area. The outer circle is the actual area where Prover can be granted access with probability greater than zero.
}
\label{f-zone}
\end{figure*}

\vspace{1em}
Theorem~\ref{t-h}  provides the explicit probability of verifier making an erroneous decision to grant access to a  prover that is located outside of the perimeter specified by the protocol distance bound, $R$.

In particular, for  $\frac{1}{2}<h<1$, \ie, when the prover is between ``half a tick''  and  ``a single tick'' further away form the specified perimeter,  the probability of  the erroneous  decision  decreases with $h$.
In the case that the distance of the prover exceeds the upper bound by a ``tick''  or more, the probability of verifier making the erroneous  decision is zero. This is not surprising, given the nature of the attack in-between-ticks.

However, notice that, {}{{contrary to our expectations}},
 the probability of the {\em erroneous decision\/}
 {}{turns out to be  ~50\%~  
for any ~\mbox{$0 < h \leq \frac{1}{2}$}.   
That is,  the probability of error is rather high when the prover is close to the bound,  under ``half a tick'' distance form the perimeter. This extra time distance would amount to up to 3~meters in our example given earlier in the paper, which is not negligible.

\vspace{0,5em}
The remainder of this section contains the proof of Theorem~\ref{t-h}. 
A reader that is not that interested in this more technical part of this section, can  jump to Section~\ref{sec:Maude}.

\subsection{Proof of ~Theorem~\ref{t-h}}

\vspace{0,5em}
In order to prove  Theorem~\ref{t-h} we introduce some auxiliary machinery.

\vspace{0,5em}
\begin{definition}\label{d-F-ell}
 To investigate \mbox{$p_{error}(R,h)$},
 we introduce the following
 distribution function \mbox{$F_{\ell}(x)$}
\begin{equation}%
 F_{\ell}(x) = Prob\,\{\ t_1-t_0\leq x\ /\ s_1-s_0 =\ell\ \}
      \label{eq-F-ell}
\end{equation}%
 defined as the conditional probability of the event
      \ $ t_1-t_0 \leq x,$ \ 
 given  the actual time interval  \ \     $ \mbox{$s_1-s_0 = \ell$} .$%
\end{definition}

\vspace{0,5em}
 In~Figures~\ref{f-prob-1}~and\/~\ref{f-prob-2} we illustrate the two cases of
 the graph of the conditional probability density,
 the derivative\ \mbox{$F_{\ell}'(x)$},\ 
for the distribution function \ \mbox{$F_{\ell}(x)$}.

 Notice that, with\  \ \mbox{$\ell = R+h$},\ \ we have:\ \ \mbox{%
 $p_{error}(R,h) = F_{\ell}(R) = \int_{-\infty}^{R}F_{\ell}'(x)\,dx$}.  

 \vspace{0,5em}

\begin{figure*}
{
\begin{center}%
 \probOne{48}
 {\raisebox{1.1ex}{$2-4\widetilde{\ell}$}}
 {\raisebox{1.1ex}{$2$}}
 {$\lfloor\ell\rfloor-\frac{1}{2}\ \ \ $} {$\lfloor\ell\rfloor$}
 {\hspace*{5ex}$\lfloor\ell\rfloor\!+\!\frac{1}{2}$}
 {\hspace*{3ex}$\lfloor\ell\rfloor+1$}
 {$\lfloor\ell\rfloor+\frac{3}{2}$}
\end{center}
}
\caption{The graph of the probability
 density, {$F_{\ell}'(x){}$},
 for the distribution\ 
 \mbox{$F_{\ell}(x){}
$}
-
The single-humped (``Dromedary camel'') case:
\
\mbox{$\widetilde{\ell} = \ell-\lfloor\ell\rfloor<\frac{1}{2}$}.
}
\label{f-prob-1}

{
\begin{center}%
 \prob{48}
 {\raisebox{1.1ex}{$4-4\widetilde{\ell}$}}
 {\raisebox{1.1ex}{$4\widetilde{\ell}-2$}}
 {$\lfloor\ell\rfloor-\frac{1}{2}\ \ \ $}
 {$\lfloor\ell\rfloor$}
 {$\lfloor\ell\rfloor+\frac{1}{2}$}
 {\hspace*{3ex}$\lfloor\ell\rfloor+1$}
 {$\lfloor\ell\rfloor+\frac{3}{2}$}
\end{center}
}
\caption{
 The graph of the  probability density,
 \mbox{$F_{\ell}'(x)$},
 for the distribution function \ \mbox{$F_{\ell}(x)$}\ - 
The 2-humped (``Bactrian camel'')
 case of bimodal distribution:
\quad
\mbox{$\widetilde{\ell} = \ell-\lfloor\ell\rfloor>\frac{1}{2}$}.
}
\vspace{1em}
\label{f-prob-2}
\end{figure*}

\vspace{1em}
The following  lemmas provide 
 an explicit expression for
 the distribution function \mbox{$F_{\ell}(x)$}
 and its density \mbox{$F_{\ell}'(x)$}.
Let  here, and henceforth, \ \/\ $\widetilde{\ell}$\ \
 denote the decimal 
part of\/~$\ell$:\quad
 \mbox{$\widetilde{\ell}= \ell-\lfloor\ell\rfloor$}.

\vspace{1em}
\begin{lemma}\label{l-t1-t0}   
 In the model\/~(\ref{eq-main})
 we are dealing with the observable period of time
 \ \mbox{$t_1-t_0$}\ \ that is calculated as: 
\begin{equation}%
   t_1-t_0
  = \lfloor X+\widetilde{\ell}\rfloor+\lfloor\ell\rfloor+Z-Y
  =
\left\{\begin{array}{ll}{}%
   \lfloor\ell\rfloor+Z-Y, &
 \mbox{\ \ if\/\ \ $\widetilde{\ell}<\frac{1}{2}$},
\\[2ex]
   \lfloor\ell\rfloor+Z-Y, &
 \mbox{\ \ if\/\ \ $\widetilde{\ell}\geq\frac{1}{2}$\ \
      but\ \ $X+\widetilde{\ell} < 1$},
\\[2ex]
   1+\lfloor\ell\rfloor+Z-Y, &
 \mbox{\ \ if\/\ \ $\widetilde{\ell}\geq\frac{1}{2}$\ \
      and\ \ $X+\widetilde{\ell} \geq 1$}.
\end{array}\right.
         \label{eq-t1-t0}
\end{equation}
\end{lemma}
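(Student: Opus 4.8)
The plan is to substitute the definitions from the system~(\ref{eq-main}) directly into the difference $t_1-t_0$, simplify using elementary properties of the floor function with respect to integer shifts, and then close with a short case analysis on the size of $X+\widetilde{\ell}$.

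First I would compute $t_1-t_0 = (\lfloor s_1\rfloor + 1 + Z) - (2+Y) = \lfloor s_1\rfloor - 1 + Z - Y$. Since $s_1 = s_0 + \ell = 1 + X + \ell$ and $1$ is an integer, $\lfloor s_1\rfloor = 1 + \lfloor X+\ell\rfloor$. Writing $\ell = \lfloor\ell\rfloor + \widetilde{\ell}$ with $\widetilde{\ell} = \ell - \lfloor\ell\rfloor \in [0,1)$ and again pulling the integer $\lfloor\ell\rfloor$ out of the floor, $\lfloor X+\ell\rfloor = \lfloor\ell\rfloor + \lfloor X+\widetilde{\ell}\rfloor$. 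Substituting back gives $t_1-t_0 = \lfloor X+\widetilde{\ell}\rfloor + \lfloor\ell\rfloor + Z - Y$, which is the first equality in~(\ref{eq-t1-t0}).

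It then remains to evaluate $\lfloor X+\widetilde{\ell}\rfloor$. Since $X\in[0,\tfrac12]$ and $\widetilde{\ell}\in[0,1)$, the sum $X+\widetilde{\ell}$ lies in $[0,\tfrac32)$, so $\lfloor X+\widetilde{\ell}\rfloor$ is either $0$ or $1$. If $\widetilde{\ell}<\tfrac12$, then $X+\widetilde{\ell} < 1$ and the floor is $0$; if $\widetilde{\ell}\geq\tfrac12$, the floor is $0$ exactly when $X+\widetilde{\ell}<1$ and is $1$ exactly when $X+\widetilde{\ell}\geq1$. Plugging these three cases into $t_1-t_0 = \lfloor X+\widetilde{\ell}\rfloor + \lfloor\ell\rfloor + Z - Y$ yields precisely the three branches of~(\ref{eq-t1-t0}).

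I do not expect a genuine obstacle here: the only point requiring a little care is noting that $X+\widetilde{\ell}$ never reaches $2$ (because $X\leq\tfrac12$ and $\widetilde{\ell}<1$), so no integer larger than $1$ can ever come out of the floor; everything else is bookkeeping with the identity $\lfloor n + u\rfloor = n + \lfloor u\rfloor$ for integer $n$. The role of this lemma is really just to set up the density computation for $F_\ell(x)$ used in Figures~\ref{f-prob-1} and~\ref{f-prob-2} and in the subsequent lemmas leading to Theorem~\ref{t-h}.
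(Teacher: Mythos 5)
Your proof is correct and follows essentially the same route as the paper: the paper's proof is the single computation $t_1-t_0=\lfloor 1+X+\ell\rfloor+1+Z-(2+Y)=\lfloor X+\widetilde{\ell}\rfloor+\lfloor\ell\rfloor+Z-Y$, which is exactly your substitution-and-floor-shift argument. You additionally spell out the case analysis on $\lfloor X+\widetilde{\ell}\rfloor\in\{0,1\}$ that the paper leaves implicit, which is fine and entirely routine.
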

\begin{proof}
 By simple calculation,
$$ t_1-t_0
  = \lfloor 1+ X+\ell\rfloor+1+Z-(2+Y)
  = \lfloor X+\widetilde{\ell}\rfloor+\lfloor\ell\rfloor+Z-Y 
$$
\end{proof}

\vspace{0,5em}

\begin{lemma}\label{l-F-ell}\ 
 Let\/ $X$, $Y$ and $Z$ be independent random variables  uniformly distributed on\/~\mbox{$[0,\frac{1}{2}]$}. Then for the distribution function $F_{\ell}(x)$ given by
~$F_{\ell}(x) = Prob\,\{\ t_1-t_0\leq x\ /\ s_1-s_0=\ell\ \}  $, \ the following holds:

\begin{enumerate}
\item[(i)]
 In the case of\/\ \
 \mbox{$\widetilde{\ell}=\ell-\lfloor\ell\rfloor<\frac{1}{2}$},\ \
\begin{equation}%
 F_{\ell}(x) 
  = Prob\,\{\ Z-Y \leq x - \lfloor\ell\rfloor\ \}
                   \label{eq-F-ell-1}
\end{equation}%

\item[(ii)]
 In the case of\/ \
 \mbox{$\widetilde{\ell}=\ell-\lfloor\ell\rfloor\geq\frac{1}{2}$},
\begin{equation}%
  F_{\ell}(x) = (2-2\widetilde{\ell})
   \cdot Prob\,\{\ Z-Y \leq x-\lfloor\ell\rfloor\ \}\ +\
              (2\widetilde{\ell}-1)
   \cdot Prob\,\{\ Z-Y \leq x-\lfloor\ell\rfloor-1\ \}
                   \label{eq-F-ell-2-X}
\end{equation}%
\end{enumerate}
\end{lemma}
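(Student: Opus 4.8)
The plan is to obtain both formulas directly from the closed form for $t_1 - t_0$ proved in Lemma~\ref{l-t1-t0}, by conditioning on the value of the integer $\lfloor X + \widetilde{\ell}\rfloor$ and using that $X$ is independent of $Y$ and $Z$.

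First I would recall from Lemma~\ref{l-t1-t0} that $t_1 - t_0 = \lfloor X + \widetilde{\ell}\rfloor + \lfloor\ell\rfloor + Z - Y$, and observe that since $X \in [0,\frac{1}{2}]$ and $\widetilde{\ell} = \ell - \lfloor\ell\rfloor \in [0,1)$, the sum $X + \widetilde{\ell}$ lies in $[0,\frac{3}{2})$; hence $\lfloor X + \widetilde{\ell}\rfloor \in \{0,1\}$, and it equals $1$ precisely when $X \geq 1 - \widetilde{\ell}$.

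For part (i), when $\widetilde{\ell} < \frac{1}{2}$ we have $X + \widetilde{\ell} < 1$ for every admissible value of $X$, so $\lfloor X + \widetilde{\ell}\rfloor = 0$ with certainty and $t_1 - t_0 = \lfloor\ell\rfloor + Z - Y$; taking the probability of the event $t_1 - t_0 \leq x$ then gives (\ref{eq-F-ell-1}) at once. For part (ii), when $\widetilde{\ell} \geq \frac{1}{2}$ the threshold $1 - \widetilde{\ell}$ lies in $(0,\frac{1}{2}]$, so since $X$ is uniform on $[0,\frac{1}{2}]$ we get $Prob\,\{\,X < 1 - \widetilde{\ell}\,\} = 2 - 2\widetilde{\ell}$ and $Prob\,\{\,X \geq 1 - \widetilde{\ell}\,\} = 2\widetilde{\ell} - 1$. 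I would then split the event $\{t_1 - t_0 \leq x\}$ along these two disjoint alternatives --- on the first $t_1 - t_0 = \lfloor\ell\rfloor + Z - Y$, on the second $t_1 - t_0 = 1 + \lfloor\ell\rfloor + Z - Y$ --- and, invoking the independence of $X$ from $(Y,Z)$ together with the law of total probability, conclude (\ref{eq-F-ell-2-X}).

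The argument is essentially bookkeeping, and I do not anticipate any analytic obstacle. The only points that require a little care are verifying that $\lfloor X + \widetilde{\ell}\rfloor$ cannot exceed $1$ and that in part (ii) the threshold $1 - \widetilde{\ell}$ genuinely falls inside the support $[0,\frac{1}{2}]$ of $X$ --- this is exactly where the standing hypothesis $\widetilde{\ell} \geq \frac{1}{2}$ enters, and it is what makes the mixing weights come out as the stated affine functions of $\widetilde{\ell}$. The remaining substantive computation, namely the law of the difference $Z - Y$, is not needed here and is handled separately in the next lemma.
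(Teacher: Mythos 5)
Your proposal is correct and follows essentially the same route as the paper: both arguments reduce to the closed form of Lemma~\ref{l-t1-t0}, note that $\lfloor X+\widetilde{\ell}\rfloor$ vanishes identically when $\widetilde{\ell}<\frac{1}{2}$, and otherwise condition on the event $X<1-\widetilde{\ell}$ with probability $2(1-\widetilde{\ell})$ to obtain the mixture. Your explicit appeal to the independence of $X$ from $(Y,Z)$ in justifying the factorization is a point the paper leaves implicit, but the substance is identical.
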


\begin{proof}
 Given the condition \ \mbox{$s_1-s_0 = \ell $},\
 we deal the following two cases.
\begin{enumerate}
\item[(i)]
 In the case of\/\ \
 \mbox{$\widetilde{\ell}=\ell-\lfloor\ell\rfloor<\frac{1}{2}$},\ \
 by~Lemma~\ref{l-t1-t0},
     \ $t_1-t_0 = \lfloor\ell\rfloor+Z-Y,$ \ 
 and, respectively,
$$
 F_{\ell}(x) = Prob\,\{\ \lfloor\ell\rfloor+Z-Y \leq x\ \}
  = Prob\,\{\ Z-Y \leq x - \lfloor\ell\rfloor\ \}
$$ 

 In~Figure~\ref{f-prob-1} we draw
 the graph of the conditional probability density,
\ \mbox{$F_{\ell}'(x)$},\ 
for the distribution function \ \mbox{$F_{\ell}(x)$} \
 in the case of the uniformly distributed $Z$ and\/~$Y$.
 The height of the triangle there is~$2$.

\vspace{0,5em}

\item[(ii)]
 In the case of\/
 \mbox{$\widetilde{\ell}=\ell-\lfloor\ell\rfloor\geq\frac{1}{2}$},
 by~Lemma~\ref{l-t1-t0} we have:
$$
 F_{\ell}(x) = Prob\{X+\widetilde{\ell}<1\}
   \cdot Prob\,\{\ \lfloor\ell\rfloor+Z-Y \leq x\ \}\ +\
              Prob\{X+\widetilde{\ell}\geq 1\}
   \cdot Prob\,\{\ 1+\lfloor\ell\rfloor+Z-Y \leq x\ \}.
                   \label{eq-F-ell-2-0}
$$
 Notice that for the $X$
 uniformly distributed on\/~\mbox{$[0,\frac{1}{2}]$}:
$$
    Prob\{X+\widetilde{\ell}<1\}=
      Prob\{X<1-\widetilde{\ell}\}=2(1-\widetilde{\ell}),%
                    \label{eq-X+ell}
$$
  resulting in
$$
  F_{\ell}(x) = (2-2\widetilde{\ell})
   \cdot Prob\,\{\ Z-Y \leq x-\lfloor\ell\rfloor\ \}\ +\
              (2\widetilde{\ell}-1)
   \cdot Prob\,\{\ Z-Y \leq x-\lfloor\ell\rfloor-1\ \}
$$

 In~Figure~\ref{f-prob-2} we draw
 the graph of the conditional probability density,
 the derivative\ \mbox{$F_{\ell}'(x)$},\
for the distribution function \ \mbox{$F_{\ell}(x)$} \
 in the case of the uniformly distributed $X$, $Z$ and\/~$Y$.
 The height of the left triangle in~Figure~\ref{f-prob-2}
  is ~\mbox{$4-4\widetilde{\ell}$}, and the height of
  the right triangle is ~\mbox{$4\widetilde{\ell}-2$}.
\end{enumerate}
\end{proof}    

\begin{figure*}
{
\begin{center}%
 \probOneh{48}
 {\raisebox{1.1ex}{$2-4\widetilde{\ell}$}}
 {\raisebox{1.1ex}{$2$}}
 {$\lfloor\ell\rfloor-\frac{1}{2}\ \ \ $} {$\lfloor\ell\rfloor$}
 {\hspace*{5ex}$\lfloor\ell\rfloor\!+\!\frac{1}{2}$}
 {\hspace*{3ex}$\lfloor\ell\rfloor+1$}
 {$\lfloor\ell\rfloor+\frac{3}{2}$}
\end{center}
}
\caption{
Probability of the erroneous decision  to grant the access, $p_{error}(R,h) $ 
- The single-humped case.}
\label{f-prob-1-h}
\vspace{0,5em}
{
\begin{center}%
 \probh{48}
 {\raisebox{1.1ex}{$4-4\widetilde{\ell}$}}
 {\raisebox{1.1ex}{$4\widetilde{\ell}-2$}}
 {$\lfloor\ell\rfloor-\frac{1}{2}\ \ \ $}
 {$\lfloor\ell\rfloor$}
 {$\lfloor\ell\rfloor+\frac{1}{2}$}
 {\hspace*{3ex}$\lfloor\ell\rfloor+1$}
 {$\lfloor\ell\rfloor+\frac{3}{2}$}
\end{center}
}
\caption{
Probability of the erroneous decision  to grant the access, $p_{error}(R,h) $ 
-  The 2-humped  (``Bactrian camel'') case.
}
\vspace{1,5em}
\label{f-prob-2-h}
\end{figure*}

\vspace{1em}
We are now ready to prove the main theorem.

\begin{proof}[{\bf  Proof  of  Theorem~\ref{t-h}}]
\ \\
Recall that $X$, $Y$ and $Z$ are independent random variables
 uniformly distributed on\/~\mbox{$[0,\frac{1}{2}]$}.
\ \\
 Given an integer\/~$R$,
 here  \ \mbox{$\ell = R+h$}
\ and\/\ \
 \mbox{$h = \widetilde{\ell} = \ell-\lfloor\ell\rfloor$}.
\ 
\begin{enumerate}
\item[(i)]
 In the case of\/\ \ \mbox{$0 < h \leq \frac{1}{2}$},
 we have \ \ \mbox{$\lfloor\ell\rfloor = R$},\ \ 
 $$h=\ell-R=\ell-\lfloor\ell\rfloor=\widetilde{\ell}\leq\frac{1}{2},$$%
 and, by Lemma~\ref{l-F-ell} (see~Figure~\ref{f-prob-1-h})
  $$ p_{error}(R,h)=\int_{-\infty}^{\lfloor\ell\rfloor}F_{\ell}'(x)\,dx
   = \frac{1}{2} $$%
\item[(ii)]
 In the case of\/\ \ \mbox{$\frac{1}{2}<h<1$},
 we have \ \ \mbox{$\lfloor\ell\rfloor = R$},\ \ 
 $$h=\ell-R=\ell-\lfloor\ell\rfloor=\widetilde{\ell}>\frac{1}{2},$$%
 and, by Lemma~\ref{l-F-ell} (see~Figure~\ref{f-prob-2-h})
  $$ p_{error}(R,h)=\int_{-\infty}^{\lfloor\ell\rfloor}F_{\ell}'(x)\,dx
 = \frac{1}{2}\cdot Prob\{X+\widetilde{\ell}<1\}
 = \frac{1}{2}\cdot 2(1-\widetilde{\ell}) = 1-h $$
\item[(iii)]
 Lastly,
 in the case of\/\ \ \mbox{$h>1$},
 we have \ \ \mbox{$R \leq \lfloor\ell\rfloor-1$},\ \
 and (see~Figures~\ref{f-prob-1-h}~and~\ref{f-prob-2-h})
 $$ p_{error}(R,h)\leq
         \int_{-\infty}^{\lfloor\ell\rfloor-1}F_{\ell}'(x)\,dx=0.$$%
\end{enumerate}
 which completes the proof of  Theorem~\ref{t-h}.
\end{proof}

\vspace{2mm}
\section{Implementation in Maude}
\label{sec:Maude}

\begin{figure}
\begin{small}
	\begin{verbatim}
crl [Network]: 
  { S (Time @ T) (Ns(PV,M) @ T1) } => 
  { S (Time @ T2) (Nr(VP,M) @ T2) }
 if (T2 >= T1 + dvp and (T2 >= T)) = true /\ VP := recSend(PV) [nonexec] .	

crl [Tick]: 
  { S (Time @ T) (vTime @ T1) } => 
  { S (Time @ T2) (vTime @ T1) }
 if (T2 > T and (T2 < T1 + 1/1))  = true [nonexec] .

crl [Tick-Vclock]: 
  { S (Time @ T) (vTime @ T1) } => 
  { S (Time @ T1 + 1/1) (vTime @ T1 + 1/1) }
 if (T1 + 1/1 >= T)  = true [nonexec] .

crl [Real-V-Send]: 
  { S (Time @ T) (V0(P,NP,NV) @ T1)  } => 
  { S (Time @ T) (V1(pending,P,NP,NV) @ T)
    (Start(P,NP,NV) @ T) (Ns(p,NP) @ T ) }
 if (T1 + 1/1 > T and T >= T1 )  = true [nonexec] . 

crl [Real-P-Rcv]: 
  { S (Time @ T) (P0(V,NV,NP) @ T1) (Nr(p,NP) @ T2) } => 
  { S (Time @ T3) (P1(V,NV,NP) @ T3) (Ns(p,NV) @ T3)}
 if ((T2 >= T) and (T3 >= T2)) = true [nonexec] . 

crl [Real-V-Rec]: 
{ S (Time @ T) (V1(start,P,NP,NV) @ T1) (Nr(v,NV) @ T2 ) } => 
{ S (Time @ T2) (V2(pending,P,NP,NV) @ T2) (Stop(P,NP,NV) @ T2)  }
if (T2 >= T ) = true [nonexec] .

crl [Dis-V-Rec]: 
  { S (Time @ T) (vTime @ T1) (V2(pending,P,NP,NV) @ T2) } => 
  {S (Time @ toReal(toInteger(T2)) + 1/1) (vTime @ toReal(toInteger(T2)) + 1/1) 
   (V2(stop,P,NP,NV) @ toReal(toInteger(T2)) + 1/1) 
   (StopV(P,NP,NV) @ toReal(toInteger(T2)) + 1/1) }
 if ( (T2 >= T1)) = true [nonexec] .

crl [ok]: 
  { S (Time @ T) (StartV(P,NP,NV) @ T1) (StopV(P,NP,NV) @ T2) (V2(stop,P,NP,NV) @ T3)} => 
  {S (Time @ T) (Ok(P) @ T) }
if (T2 - T1 <= 2/1 * 3/1)  = true [nonexec] .
\end{verbatim}
\end{small}
\caption{Rewrite Rules Specification of a Distance Bounding Protocol in Maude}
\label{fig:maude-rules}
\end{figure}
We have formalized the scenario of the attack in-between-ticks in an extension with SMT-solver of the rewriting logic tool Maude. The tool was able to automatically find this attack. We considered a scenario with two players, a verifier (\texttt{v}) and a prover (\texttt{p}) such that messages take \texttt{dvp} time units to navigate from one another. The function \texttt{recSend(PV)} returns \texttt{v} if \texttt{PV} is \texttt{p} and vice-versa.

Although the specification detailed in Section~\ref{sec: formal} could be specified in Maude, it would be impractical to use it to search for an attack as the state space is infinite. Moreover, using the machinery of circle-configurations described in Section~\ref{sec:circle}, although resulting in a finite search space, is still intractable. 

Instead, we show in this section that by using constrained variables and relying on SMT solvers allows us to verify distance bounding protocols in practice. During verification (search), the variables in the time constraints are not instantiated, but are accumulated. The SMT solver is, then, used to determine whether a set of accumulated constraints (in a branch of search) is consistent. If it is consistent, then search may proceed; otherwise, Maude backtracks and continues with the verification.

\vspace{2mm}
The modifications to the theory in Section~\ref{sec: formal} are minor simply those involving the constrained variables.
The Maude rewrite rules are depicted in Figure~\ref{fig:maude-rules}. To illustrate the use of SMT, consider the \texttt{Network} rule. It specifies that a message sent $\texttt{Ns(P,M)}$ at time \texttt{T1} can be received at any time \texttt{T2} such that \texttt{T2 >= T1 + dvp} and \texttt{T2 >= T}. Notice that Maude does not instantiate these time variables with concrete values, but only with time symbols accumulating time constraints. Maude backtracks whenever the collection of accumulated constraints is unsatisfiable.

Our formalization encodes the Tick rule in the MSR theory which advances global time by any real number using two rewrite rules \texttt{Tick} and \texttt{Tick-Vclock}. The former rewrite rule advances time to any value \texttt{T2} within the verifier's current clock cycle. This is specified by the constraint \texttt{T2 < T1 + 1/1}. The second rewrite advances time to the beginning of the next verifier's clock cycle (\texttt{T1 + 1/1}). Intuitively, we consider a verifier's clock cycle as an event thus controlling how time advances and avoiding state space explosion.

A second difference to the the theory in Section~\ref{sec: formal} is that we assume a powerful verifier which measures the time of sending and receiving a message exactly at the beginning of clock cycle following the time when message is actually sent and received. This is specified, for example, by the rule \texttt{Dis-V-Rec}, in particular, by using the timestamp \texttt{toReal(toInteger(T2)) + 1/1} where \texttt{1/1} represents the real number one. If there is an attack with this more powerful verifier, then there is an attack in a less powerful verifier, \ie, it is sound. 

\vspace{2mm}
Finally, the third difference is on the way time advances. 
While it is convenient to separate the tick rules from the instantaneous rules in our theoretical framework as described in Section~\ref{sec:msr}, this distinction increases considerably search space. Instead, we advance time according to the events processed. This is similar to the behavior of Real-Time Maude~\cite{lisp/OlveczkyM07}, but here we use time symbols. Thus, whenever a message is received or sent or whenever the verifier's clock ticks, global time advances to the time of the corresponding event. For example, in rule \texttt{Real-P-Rcv}, the global time advances to a time \texttt{T3} greater than the time, \texttt{T2}, when a message is received. This time sampling is sound as no events are processed before they should, \eg, a message is not received before the corresponding time to travel elapses. 

\vspace{2mm}
For completeness, it seems possible to apply results from the literature, \eg, the completeness results of the time sampling used by Real-Time Maude~\cite{olveczky-meseguer-completeness}.  More recently, Nigam~\etal~\cite{nigam16esorics} have proved the completeness of time intruders using symbolic time constraints.

We used the \texttt{smt-search} to find a symbolic representation of a family of attacks. Using Maude SMT the potentially infinite search space becomes finite, by treating the distance between the verifier and the prover as a constrained variable. 

As an example, we considered the distance bound to be \texttt{3} and set \texttt{(dva > 3/1) = true} which means that the prover should not succeed the distance bound challenge as shown below:

\vspace{1mm}
{\small
\begin{verbatim}
Maude> smt-search [1] 
{ (Time @ 0/1) (vTime @ 0/1) (V0(p,n(0),n(10)) @ 0/1) (P0(v,n(10),n(0)) @ 0/1) (dist(dvp)) } 
 =>+ 
{ (Ok(p) @ T:Real) S:Soup } 
such that (dva > 3/1) = true .
\end{verbatim}
}

\vspace{1mm}
\noindent The following (simplified) solution is obtained after few seconds (ca 10 seconds) of computation:
 
\vspace{1mm}
{\small
\begin{verbatim}
S --> ... 
(Time @ toReal(toInteger(#3-T2:Real)) + 1/1)
(Start(p, n(0), n(10)) @ 0/1) (Stop(p, n(0), n(10)) @ #3-T2:Real) 
V2(stop,p,n(0),n(10)) @ toReal(toInteger(#3-T2:Real)) + 1/1
where dvp > 3/1 
...
and (#1-T2:Real >= 0/1 + dvp) and (#1-T2:Real >= #1-T2:Real and #2-T3:Real >= #1-T2:Real) 
and (#3-T2:Real >= #2-T3:Real + dvp) and #3-T2:Real >= #3-T2:Real
and toReal(toInteger(#3-T2:Real)) + 1/1 - (0/1 + 1/1) <= 2/1 * 3/1 
\end{verbatim}
}

It states that the verifier grants the resource to the prover, as the message \texttt{Ok} has been sent, although the prover is further away than the distance bound of \texttt{3}.

\section{Circle-Configurations }
\label{sec:circle}

This Section introduces the machinery, called Circle-Configurations, that can symbolically represent configurations and plans that mention dense time. 
Dealing with dense time leads to some difficulties, which have puzzled us for some time now, in particular, means to handle Zeno paradoxes. When we use discrete domains to represent time, such as the natural numbers, time  always advances by one, specified by the rule:
$$
 Time@T \lra Time@(T+1)
$$
\noindent
There is no other choice.\footnote{However, as time can always advance, a plan may use an unbounded number of natural numbers.  This source of unboundedness was handled in our previous work~\cite{kanovich15mscs}. This solution, however, does not scale
to dense time.} On the other hand, when considering systems with dense time, the problem is much more involved, as the non-determinism is much harder to deal with: the value that the time advances, the $\varepsilon$ in 
$$Time@T \lra Time@(T + \varepsilon)$$ 
can be instantiated by any positive real number.

Our claim is that we can symbolically represent any plan involving dense time by using a canonical form called 
circle-configurations. We show that circle-configurations provide a sound and complete representation of plans with dense time  
(Theorem~\ref{th:circle}). 

Recall Definition \ref{def:equivalence} and Theorem \ref{th:equiv_reach} where we show that the introduced equivalence between configurations corresponds to the reachability problem in the sense that a solution of a reachability problem is independent of the choice of  equivalent (initial) configurations.
Indeed, we will show that using circle-configurations we can symbolically represent the entire class of equivalent configurations
and, moreover, we can consider reachability problem over circle-configurations.

\medskip
A circle-configuration consists of two components: a \mbox{\emph{$\delta$-Configuration}}, $\Delta$, and a \emph{Unit Circle}, $\Uscr$, written $\tup{\Delta, \Uscr}$. Intuitively, the former
accounts for the integer part of the timestamps of facts in the configuration, while the latter deals with the decimal part of the timestamps.  

In order to define these components, however, we need some additional machinery. For a  real-number, $r$,
$\Int{r}$ denotes the integer part of $r$ and $\dec{r}$ its decimal part. For example, $\Int{2.12}$ is 2
and $\dec{2.12}$ is $0.12$. Given a natural number $\Dmax$, the \emph{truncated time difference (w.r.t. $\Dmax$)} between two facts 
$P@t_P$ and $Q@t_Q$ such that $t_Q \geq t_P$ is defined as follows
$$
 \delta_{P,Q} = 
\left\{\begin{array}{l}
 \Int{t_Q} - \Int{t_P}, \textrm{ if } \Int{t_Q} - \Int{t_P} \leq \Dmax\\
 \infty, \textrm{ otherwise }
\end{array}\right.
$$
For example, if $\Dmax = 3$ for the facts $F@3.12, ~G@1.01, ~H@5.05$ we have  $\delta_{F,H} = 2$ and $\delta_{G, H} = \infty$.
Notice that whenever we have $\delta_{P,Q} = \infty$ for two timestamped facts, $P@t_P$ and $Q@t_Q$, 
we can infer that \ $t_Q > t_P + D$ \ for any natural number $D$ in the theory. Thus, we can truncate time difference without sacrificing soundness and completeness. This was pretty much the idea used in~\cite{kanovich15mscs} to handle systems with discrete-time.

\paragraph{$\delta$-Configuration}
\label{subsec:delta}
\\
We now explain the first component, $\Delta$, of circle-configurations, $\tup{\Delta,\Uscr}$, namely the $\delta$-configuration, to only later enter into the details of the second component.

Given a configuration 
\mbox{$\Sscr = \{F_1@t_1, \ldots, F_n@t_n, Time@t\}$}, we construct its
\mbox{$\delta$-configuration} as follows: We first sort the facts using the integer part of their timestamps, 
obtaining the sequence of timestamped facts $Q_1@t_1', \ldots, Q_{n+1}@t_{n+1}'$, where $t_{i}' \leq t_{i+1}'$ 
for $1 \leq i \leq n+1$ and $\{Q_1, \ldots, Q_{n+1}\} = \{F_1, \ldots, F_n, Time\}$. We then aggregate in classes facts with the same  integer part of the timestamps 
obtaining a sequence of classes 
$$\{Q_{1}^1, \ldots, Q_{m_1}^1\}, \{Q_{1}^2, \ldots, Q_{m_2}^2\}, \ldots, 
\{Q_{1}^j, \ldots, Q_{m_j}^j\} \ ,$$
 where \ $\delta_{Q_i^k,Q_j^k} = 0$  \ for any $1 \leq i\leq m_k$ and $1 \leq k \leq j$, and \ $\{  Q_{1}^1, \ldots,  Q_{m_j}^j \}  = \{ Q_1, \dots, Q_{n+1} \}$.

\vspace{1mm}
The $\delta$-configuration of $\Sscr$ is then:
\[
 \Delta = \left\langle
 \begin{array}{l}
 ~\{Q_{1}^1, \ldots, Q_{m_1}^1\},~\delta_{1, 2}, ~\{Q_{1}^2, \ldots, Q_{m_2}^2\}, \ldots 
  ,~\{Q_{1}^{j-1}, \ldots, ~Q_{m_{j-1}}^{j-1}\}, ~\delta_{j-1, j}, \{Q_{1}^j, \ldots, ~Q_{m_j}^j\}~
 \end{array}\right\rangle
\]
where ~$\delta_{i,i+1} = \delta_{Q_{1}^i, Q_{1}^{i+1}}$~ is the truncated time difference between the facts in class $i$ and class $i+1$.
\\
For such a $\delta$-configuration, $\Delta$, we define 
\[
\Delta({Q_i^l, Q_j^h}) = \left \{ \begin{array}{ll}
                             \ \  \sum\limits_{k = l}^{k = h - 1} \delta_{k, k+1} & \ \ \textrm{if ~$h \geq l$} \\
                               - \sum\limits_{k = h}^{k = l - 1} \delta_{k, k+1} &  \ \ \textrm{otherwise}
                              \end{array}\right. 
\]
which is the truncated time difference between any two facts of $\Delta$, $Q_i^l$ and $Q_j^h$ from the classes $l$ and $h$, respectively.
Here we assume $\infty$ is the addition absorbing element, \ie, \mbox{$\infty + D = \infty$} for any natural number $D$ and
$\infty + \infty = \infty$.

\vspace{2mm}
Notice that, for a given upper bound $\Dmax$, different configurations may have the same $\delta$-configuration. 
For example, with $\Dmax =4$, configurations
\begin{equation}\label{eq: S1}
\begin{array}{c}
 \Sscr_1 = \{~M@3.01, ~R@3.11, ~P@4.12, ~Time@11.12, ~Q@12.58, ~S@14~\}\quad
 \textrm{and}\\
 \Sscr_1' = \{~M@0.2, ~R@0.5, ~P@1.6, ~Time@6.57, ~Q@7.12, ~S@9.01~\}\\
\end{array}
\end{equation}
have both the following $\delta$-configuration: 
$$
 \Delta_{\Sscr_1} = \tup{~\{M,R\}, 1, \{P\}, \infty, \{Time\}, 1, \{Q\}, 2, \{S\}~} .
$$
This $\delta$-configuration specifies the truncated time differences between the facts from $\Sscr_1'$. For example, ~\mbox{$\Delta_{\Sscr_1}(R, P) = 1$}, that is, 
the integer part of the timestamp of the fact $P$ is ahead one unit 
with respect to the integer part of the timestamp of the fact $R$. Moreover, the timestamp of the fact $Time$ is 
more than $\Dmax$ units ahead with respect to the timestamp of $P$. This is indeed true
for both configurations $\Sscr_1$ and $\Sscr_1'$ given above. 

\bigskip
\paragraph{Unit Circle}
\\
In order to handle the decimal part of the timestamps, we use intervals instead of concrete values. 
These intervals are represented by a circle, called Unit Circle, which together with a $\delta$-configuration
composes a circle-configuration. The unit circle of a configuration \mbox{$\Sscr = \{F_1@t_1, \ldots, F_n@t_n, Time@t\}$} is constructed
by first ordering the facts from $\Sscr$  according to the \emph{decimal part} of their timestamps in the increasing order. In such a way we  obtain the sequence of facts \ $Q_1, \ldots, Q_{n+1}$, where \ $\{Q_1, \ldots, Q_{n+1}\} = \{F_1, \ldots, F_n, Time\}$. 
 Then the unit circle of the given configuration $\Sscr$ is obtained by aggregating facts 
that have the same \emph{decimal part} obtaining a sequence of classes:
$$
\Uscr = [~\{Q_{1}^0, \ldots, Q_{m_0}^0\}_\Zscr, ~\{Q_{1}^1, \ldots, Q_{m_1}^1\}, \ldots, ~\{Q_{1}^j, \ldots, Q_{m_j}^j\}~]
$$

\vspace{1mm}
\noindent
where   \ $\{  Q_{1}^1, \ldots,  Q_{m_j}^j \}  = \{ Q_1, \dots, Q_{n+1} \}$,
the facts in the same class have the same decimal part, \ie,
$\dec{Q_k^i}= \dec{Q_l^i}$, for all $1\leq k \leq m_i$, $1 \leq l \leq m_i$ and $1\leq i\leq j$, 
classes are ordered in the increasing order, \ie, $\dec{Q_k^i}<\dec{Q_l^{i'}}$ ~for all ~$i\neq i'$, where 
 $1\leq k \leq m_i$, $1 \leq l \leq m_{i'}$, $0\leq i\leq j$, $1 \leq i'\leq j$,
 and the first class \ $\{Q_{1}^0, \ldots, Q_{m_1}^0\}_\Zscr$, marked with the subscript $\Zscr$ contains all facts
whose timestamp's decimal part is zero, \ie, \ $\dec{Q_{i}^0} = 0$, for $1 \leq i \leq m_0$. 

We call the class \ $\{Q_{1}^0, \ldots, Q_{m_1}^0\}_\Zscr$  \ the \emph{Zero Point}. Notice that the zero point may be 
empty.

For a unit circle, $\Uscr$, we define:
 $\Uscr(Q_{j}^i) = i$
to denote the class in which the fact $Q_{j}^i$ appears in $\Uscr$. 

\vspace{2mm}
For example, the unit circle of configuration $\Sscr_1$ given in (\ref{eq: S1}) is the sequence: 
$$\Uscr_{\Sscr_1} = [ \, \{S\}_\Zscr, ~\{M\}, ~\{R\}, ~\{P, Time\},~\{Q\} \, ] \ .$$
Notice that $P$ and $Time$ are in the same class as the decimal parts of their timestamps are the same, 
namely $0.12$. Moreover, we have that ~$\Uscr_{\Sscr_1}(S) = 0 < 2 = \Uscr_{\Sscr_1}(R)$, specifying
that the decimal part of the timestamp of the 
fact $R$ is greater than the decimal part of the timestamp of the fact $S$.

\vspace{2mm}
We will graphically represent a unit circle as shown in Figure~\ref{fig:unit-graph}.
The (green) ellipse at the top of the circle marks the {zero point},
while the remaining classes are placed on the circle in 
the (red) squares ordered clockwise starting from the zero point. Thus, 
from the above graphical representation it can be seen that the decimal part of the timestamp of the fact $Q_{1}^1$ 
is smaller than the decimal of the timestamp of the fact $Q_{1}^2$, while the decimal part of the 
timestamps of the facts $Q_{1}^i$ and $Q_{2}^i$ are equal. The exact points where the squares are 
placed are not important, only their relative positions matter, \eg, the square for 
the class containing the fact $Q_{1}^1$ should be placed on the circle somewhere in between the zero point and the square for the 
class containing the fact $Q_{1}^2$, clockwise. 

\begin{figure}[t]
 \vspace{2mm}
\includegraphics[width=4.3cm]{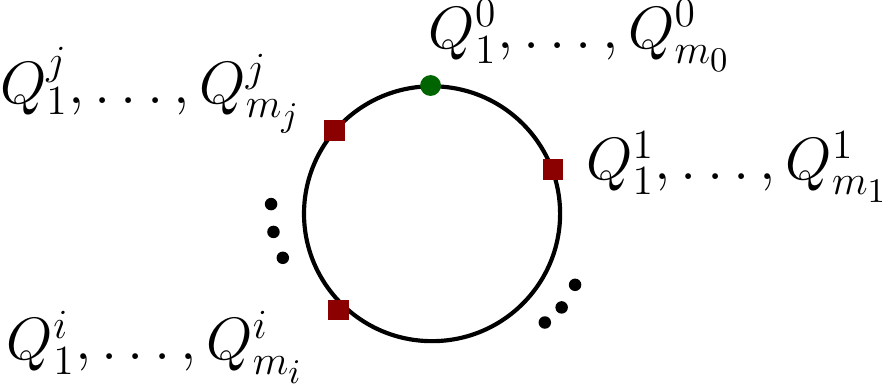}
\vspace{1mm}
\caption{Unit Circle}
\label{fig:unit-graph}
\vspace{1em}
\end{figure}

\begin{definition}
\label{def:circle-conf}
Let $\Tscr$ be a reachability problem,  $\Dmax$ an upper bound on the numeric values  appearing in $\Tscr$ and \ 
$\Sscr = \{~ F_1@t_1, ~F_2@t_2, \ldots, ~F_n@t_n, ~Time@t ~\}  . $
Let
\[
 \Delta_{\Sscr} = \left\langle
 \begin{array}{l}
 \{P_{1}^1, \ldots, P_{m_1}^1\}, \delta_{1, 2}, \{P_{1}^2, \ldots, P_{m_2}^2\}, \ldots 
  \{P_{1}^{j-1}, \ldots, P_{m_{j-1}}^{j-1}\}, \delta_{j-1, j}, \{P_{1}^j, \ldots, P_{m_j}^j\}
 \end{array}\right\rangle
\]
 where \
 $\{P_1^1, \ldots, P_{m_1}^1, P_1^2, \ldots, P_{m_j}^j\} = \{F_1, \ldots, F_n, Time\}$,
timestamps of facts \ $P_1^i, \ldots, P_{m_i}^i$ \ have the same integer part, $t^{\, i}$,  $\forall i = 1, \dots, j$ ,
and 
\ $\delta_{i,i+1}$ \
 is the truncated time difference (w.r.t. $\Dmax$) between a fact in class $i$ and a fact in class $i+1$.
Let
$$
\Uscr_{\Sscr} = [ \ \{Q_{1}^0, \ldots, Q_{m_0}^0\}_\Zscr, \{Q_{1}^1, \ldots, Q_{m_1}^1\}, \ldots, \{Q_{1}^k, \ldots, Q_{m_k}^k\} \ ]
$$
 where \
$\{Q_1^0, \ldots, Q_{m_0}^0, Q_1^1, \dots, Q_{m_k}^k\} = \{F_1, \ldots, F_n, Time\}$,  \
timestamps of facts \ $Q_1^i, \ldots, Q_{m_i}^i$ \ have the same decimal part,  $\forall i = 0, \dots, k$ ,
and timestamps of facts \ $Q_1^0, \ldots, Q_{m_0}^0$ \ are integers.
We say that $\Delta_{\Sscr}$  is the \emph{$\delta$-configuration} of $\Sscr$,
 \ $\Uscr_{\Sscr}$ is the \emph{unit-circle} of $\Sscr$, 
and \ $\Ascr_{\Sscr}=\tup{\Delta_{\Sscr}, \Uscr_{\Sscr}}$  \ is the  \emph{circle-configuration} of the configuration $\Sscr$ (or the circle-configuration corresponding to $\Sscr$). 
\end{definition}

\vspace{2mm}
For example, with $\Dmax =4$, configuration
\begin{equation}\label{eq: S1a}
\begin{array}{c}
 \Sscr_1 = \{ \ M@3.01, R@3.11, P@4.12, Time@11.12, Q@12.58, S@14\ \}
\end{array}
\end{equation}
has the  circle-configuration $\tup{\Delta_{\Sscr_1}, \Uscr_{\Sscr_1}}$, where
$$
\begin{array}{rl}
\Delta_{\Sscr_1} = &
\left\langle 
\  \{ M, R\} 1, \{ P\}, \infty, \{Time\}, 1, \{Q\}, 2, \{ S\}  \
\right\rangle
\\
 \Uscr_{\Sscr_1} = & [\ \{S\}_\Zscr, \{M\}, \{R\}, \{P, Time\},\{Q\} \ ] \ ,
\end{array} 
$$
as depicted in Figure~\ref{fig:circle-graph} .

For simplicity, we sometimes write $\Ascr$ and $\tup{\Delta, \Uscr}$ instead of  $\Ascr_{\Sscr}$ and $\tup{\Delta_{\Sscr}, \Uscr_{\Sscr}}$, when the corresponding configuration is clear from the context.

\vspace{2mm}

Notice that $\delta_{i,i+1}$ is well defined, since all the facts in the same class  \ $P_1^k, \ldots, P_{m_k}^k$ \ have the same integer part, $t^{\, k}$,  $\forall k = 1, \dots, j$. Namely,
$$ 
\delta_{i,i+1} = \delta_{Q_{1}^i, Q_{1}^{i+1}}= 
\left\{\begin{array}{cl}
{t^{\, i+1}} - t^{\, i}, & \ \textrm{ if }\ \ {t^{\, i+1}} - t^{\, i} \leq \Dmax\\
 \infty, & \ \textrm{ otherwise }
\end{array}\right.
\ , \qquad  i = 1, \dots, j-1 \ .
$$

\subsection{Constraint Satisfaction}
\label{subsec:constraints}

\begin{figure}[t]
\vspace{1mm}
\includegraphics[width=6cm]{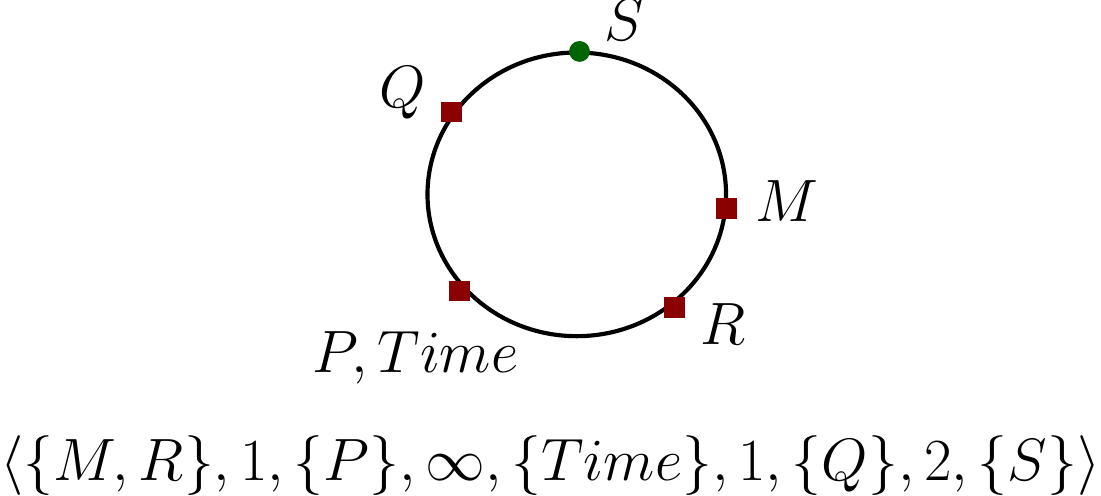}
\caption{Circle-Configuration}
 \label{fig:circle-graph}
\vspace{1em}
\end{figure} 
A circle-configuration $\tup{\Delta, \Uscr}$ contains all the information needed in order to determine whether 
a constraint of the form used in our model, \ie,~given in (\ref{eq:constraints}), is satisfied or not. 

Consider the circle-configuration in Figure~\ref{fig:circle-graph} which corresponds to configuration $\Sscr_1$ given in (\ref{eq: S1a}).
To determine, for instance, whether $t_Q > t_{Time} + 1$, we compute the integer difference between $t_Q$ and $t_{Time}$ from the $\delta$-configuration. This turns out to be 1 and means that we need to look at the decimal part of these timestamps to determine whether the constraint is satisfied or not. 
Since the decimal part of $t_Q$ is greater than the
decimal part of $t_{Time}$, as can be observed in the unit circle,
we can conclude that the constraint is satisfied. Similarly, one can 
also conclude that the constraint $t_Q > t_{Time} + 2$ is not satisfied as  
$\Int{t_Q} = \Int{t_{Time}} + 1$. 
The following results  formalize this intuition.

\vspace{2mm}
\begin{lemma}
\label{th:circle_constraints}
 Let \ $\tup{\Delta, \Uscr}$ \ be a circle-configuration of the configuration $\Sscr$. 
Then for  two arbitrary facts $P@t_p$ and $Q@t_Q$ in $\Sscr$
 and a natural number \mbox{$D<\Dmax$} the following holds:
 \begin{itemize}
  \item $t_P > t_Q + D$ \ \ \ iff \ \ \ $\Delta({Q,P}) > D$ \ or\   \ $ ( \ \Delta({Q,P}) = D$ and $\Uscr(P) > \Uscr(Q) \ )$  ;
  \item $t_P > t_Q - D$ \ \ \ iff \ \ \ $\Delta({P,Q}) < D$ \ or \  $( \ \Delta({P,Q}) = D$ and $\Uscr(P) > \Uscr(Q) \ ) $ ;
  \item $t_P = t_Q + D$ \ \ \ iff \ \ \  $\Delta({Q,P}) = D$ \ and \ $\Uscr(Q) = \Uscr(P)$;
  \item $t_P = t_Q - D$ \ \  \ iff \ \ \ $\Delta({P,Q}) = D$ \ and \ $\Uscr(Q) = \Uscr(P)$;
 \end{itemize}
\end{lemma}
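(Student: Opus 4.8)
The plan is to reduce each of the four constraints in~(\ref{eq:constraints}) to the conjunction of a statement about the \emph{integer parts} of the two timestamps, which the $\delta$-configuration $\Delta$ records, and a statement about their \emph{decimal parts}, which the unit circle $\Uscr$ records. Writing $t_P = \Int{t_P} + \dec{t_P}$ and $t_Q = \Int{t_Q} + \dec{t_Q}$, we have $t_P - t_Q = (\Int{t_P} - \Int{t_Q}) + (\dec{t_P} - \dec{t_Q})$, where the first summand is an integer and the second lies in the open interval $(-1,1)$. A short case analysis on this decomposition gives, for every natural number $D$: $t_P > t_Q + D$ iff $\Int{t_P}-\Int{t_Q} > D$, or $\Int{t_P}-\Int{t_Q} = D$ and $\dec{t_P} > \dec{t_Q}$; $t_P > t_Q - D$ iff $\Int{t_Q}-\Int{t_P} < D$, or $\Int{t_Q}-\Int{t_P} = D$ and $\dec{t_P} > \dec{t_Q}$; and, because $D$ is an integer, $t_P = t_Q + D$ iff $\Int{t_P}-\Int{t_Q} = D$ and $\dec{t_P}=\dec{t_Q}$, while $t_P = t_Q - D$ iff $\Int{t_Q}-\Int{t_P} = D$ and $\dec{t_P}=\dec{t_Q}$. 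It then remains to translate integer-part comparisons into comparisons of $\Delta$ with $D$, and decimal-part comparisons into comparisons of $\Uscr$-indices.

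For the integer part, I would first prove the auxiliary claim that for any two facts $P@t_P$, $Q@t_Q$ of $\Sscr$, either $\Delta(Q,P)$ equals the exact difference $\Int{t_P}-\Int{t_Q}$, or $\Delta(Q,P) = \pm\infty$ and the exact difference exceeds $\Dmax$ in absolute value with the corresponding sign. This is obtained by unfolding the definition of $\Delta(Q,P)$ as the telescoping signed sum of the truncated differences $\delta_{k,k+1}$ along the classes of the $\delta$-configuration lying between $Q$ and $P$: since those classes are sorted by the integer part of the timestamps, each $\delta_{k,k+1}$ is non-negative and equals the true consecutive difference whenever it is finite; hence if no $\delta_{k,k+1}$ on the path is $\infty$ the sum is exactly $\Int{t_P}-\Int{t_Q}$, whereas if some $\delta_{k,k+1}$ on the path is $\infty$ then $\infty$ absorbs the sum and, all other summands being $\ge 0$, the true difference already exceeds $\Dmax$. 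Because the lemma assumes $D < \Dmax$, in both situations the three comparisons of $\Delta(Q,P)$ with $D$ agree with the corresponding comparisons of $\Int{t_P}-\Int{t_Q}$ with $D$ (with the conventions $\infty > D$ and $-\infty < D$).

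For the decimal part, the claim is just that $\Uscr(P) > \Uscr(Q)$ iff $\dec{t_P} > \dec{t_Q}$, and $\Uscr(P) = \Uscr(Q)$ iff $\dec{t_P} = \dec{t_Q}$; this is immediate from the construction of the unit circle, where facts are grouped into classes by equal decimal part and the classes are indexed in strictly increasing order of that decimal part, with the zero point (index $0$) collecting exactly the facts whose decimal part is $0$ --- consistent, since $0$ is the least possible decimal part. Combining this with the integer-part claim and the case analysis of the first paragraph yields the four stated equivalences (for the constraints of the form $t_P > t_Q - D$ and $t_P = t_Q - D$ one uses the reversed pair, so the integer part is governed by $\Delta(P,Q)$ rather than $\Delta(Q,P)$).

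The routine arithmetic on $t = \Int{t} + \dec{t}$ and the reading-off of $\Uscr$ are straightforward; the step I expect to require the most care is the truncation bookkeeping in the integer-part auxiliary claim --- in particular verifying that a finite but possibly large value $\Delta(Q,P)$ is nonetheless exact, and that an infinite value of $\Delta(Q,P)$ genuinely forces the true integer gap to exceed $\Dmax$ (and hence $D$), so that no comparison with a $D<\Dmax$ is ever decided incorrectly.
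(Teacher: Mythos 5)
Your proof is correct and follows essentially the same route as the paper's: decompose $t_P - t_Q$ into the difference of integer parts (recovered from $\Delta$ by telescoping the $\delta_{k,k+1}$) and the comparison of decimal parts (read off from the ordering of classes in $\Uscr$). In fact you are somewhat more careful than the paper on the one delicate point, namely that a truncated value $\Delta(Q,P)=\infty$ still decides every comparison against a $D<\Dmax$ correctly because the true integer gap then exceeds $\Dmax$.
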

\begin{proof} 
 Let $t_P \geq t_Q$. 
Recall that, as per definition of circle-configurations, there are $i$ and $j$ such that $Q= Q_i$ and $P= Q_j$, and that
$$
\begin{array}{rl}
\Delta({Q,P}) & =  \Delta({Q_i,Q_j})  =  \sum \delta_{k, k+1} \\
&  
=\Int{Q_j} - \Int{Q_{j-1}} +  \Int{Q_{j-1}}  - \Int{Q_{j-2}} + \dots - \Int{Q_{i}}\\
&  = \Int{Q_j} - \Int{Q_i} = \Int{P} - \Int{Q}
\end{array}
$$
Then, for a natural number ~$D \leq \Dmax$ \  it holds that
$$
\begin{array}{rl}
t_P = t_Q + D  \ \ \ \text{iff} \ \ \ & \Int{t_P-t_Q} = D \ \ \text{and} \ \  \dec{t_P-t_Q} = 0 \\
 \ \ \ \text{iff} \ \ \  &    \Delta({Q,P}) = D \ \ \text{and} \ \ \Uscr(Q) = \Uscr(P)
\end{array}
$$
because   \ $\dec{t_P} = \dec{t_Q} $ \ implies \ $\Int{t_P-t_Q} = D$ \ iff \  $\Int{t_P}-\Int{t_Q} = D$.
Similarly,
$$
\begin{array}{rcl}
t_P > t_Q + D  \ \ \text{iff} \ \ \ & \Int{t_P-t_Q} > D  \ \ & \text{or} \ \ \  \big( \  \Int{t_P-t_Q} = D  \ \ \text{and}  \ \ \dec{t_P- t_Q } > 0 \  \big)\\
 \ \ \ \text{iff} \ \ \  &  \Delta({Q,P}) > D  \ \ & \text{or} \ \ \ \big( \  \Delta({Q,P})= D  \ \ \text{and} \ \ \Uscr(P) > \Uscr(Q) \ \, \big)
\end{array}
$$
The remaining cases are proven similarly.
\end{proof}

Following the above result we say  that a circle-configuration  $\tup{\Delta, \Uscr}$ corresponding to a configuration $\Sscr$ \emph{satisfies the constraint} $c$ if the configuration $\Sscr$ satisfies the constraint $c$. We also say that  a  circle-configuration  is a \emph{goal circle-configuration} if it corresponds to a goal configuration. Furthermore, we also say that an \emph{action is applicable to a  circle-configuration} if that action is applicable to the corresponding configuration.

\subsection{Rewrite Rules and Plans with Circle-Configurations}

\vspace{-1mm} \noindent
This section shows that given a reachability problem with a set of rules, $\Rscr$, involving dense time, and an upper bound on the numbers appearing in the problem, $\Dmax$, we can compile a set of rewrite rules, $\Ascr$, over circle-configurations. Moreover, we 
show that any plan generated using the rules from $\Rscr$ can be soundly and faithfully 
represented by a plan using the set of rules $\Ascr$.
We first explain how we apply instantaneous rules to circle-configurations and then 
we explain how to handle the time advancement rule.

\paragraph{Instantaneous Actions}

\vspace{2mm}
Let $\Dmax$ be an upper bound on the numeric values in the given problem and let the following rule be an 
instantaneous rule (see Section~\ref{subsec:actions}) from  the set of actions $\Rscr$:
\[
\begin{array}{l}
 Time@T, ~W_1@T_1, \ldots, ~W_k@T_k, ~F_1@T_1', \ldots, ~F_n@T_n' \ \mid \ \Cscr ~\lra~ \\
 \quad \exists~\vec{X}.~[~Time@T, ~W_1@T_1, \ldots, ~W_k@T_k, ~Q_1@(T + D_1), \ldots, ~Q_m@(T + D_m)~]
\end{array}
\]
The above rule is compiled into a sequence of operations that may rewrite 
a given circle-configuration $\tup{\Delta, \Uscr}$ into another circle-configuration 
$\tup{\Delta_1, \Uscr_1}$ as follows:
\vspace{1mm}
\begin{enumerate}
 \item Check whether there are occurrences of $W_1, \ldots, W_k$ and $F_1, \ldots, F_n$ in $\tup{\Delta, \Uscr}$ 
 such that the guard $\Cscr$ is satisfied by $\tup{\Delta, \Uscr}$.  If it is
 the case, then continue to the next step; otherwise the rule is not applicable;
 
 \item We obtain the circle-configuration $\tup{\Delta', \Uscr'}$ by removing 
 a single occurrence of each of the facts $F_1, \ldots, F_n$ \ in \ $\tup{\Delta, \Uscr}$ used in step 1, and recomputing 
 the truncated time differences so that 
 for all the remaining facts $P$ and  $R$ in $\Delta$, we have \ $\Delta'(P,R) = \Delta(P, R)$, 
 \ie, the truncated time difference between $P$ and  $R$ is preserved;
 
 \item Create fresh values, $\vec{e}$, for the existentially quantified variables $\vec{X}$; 
 
 \item We obtain the circle-configuration $\tup{\Delta_1, \Uscr_1}$ by adding the 
 facts \ $Q_1[\vec{e}/\vec{X}], \ldots, Q_m[\vec{e}/\vec{X}]$ \ to $\Delta'$ so that 
 $\Delta_1(Time, Q_i) = D_i$, for $1 \leq i \leq m$, and that \ $\Delta_1(P, R) = \Delta'(P, R)$ \
 for all the remaining facts $P$ and  $R$ in $\Delta'$.
 We then obtain $\Uscr_1$ by adding the facts \ $Q_1, \ldots, Q_m$ \ to the class of the fact $Time$ in the unit circle $\Uscr'$;
 
 \item Return the circle-configuration $\tup{\Delta_1, \Uscr_1}$.
\end{enumerate}

\vspace{3mm}
The sequence of operations described above has the effect one would expect: replace the facts $F_1, \ldots, F_n$ in 
the pre-condition of the action with facts $Q_1, \ldots, Q_m$ appearing in the post-condition 
of the action but taking care to update the truncated time differences in the $\delta$-configuration. 
Moreover, all steps can be computed in polynomial time.

\vspace{2mm}
For example, consider the configuration $\Sscr_1$ given in (\ref{eq: S1}) and the rule:
$$
 Time@T, ~R@T_1, ~P@T_2  ~\lra~ 
Time@T, ~P@T_2, ~N@(T + 2)
$$
\noindent
If we apply this rule to $\Sscr_1$, we obtain the configuration 
$$
\Sscr_2 = \{~M@3.01, ~P@4.12, ~Time@11.12, ~Q@12.58, ~N@13.12, ~S@14~\}.
$$
\noindent

\begin{wrapfigure}{r}{0.4\textwidth}
\vspace{-3mm}
\quad
\includegraphics[width=6cm]{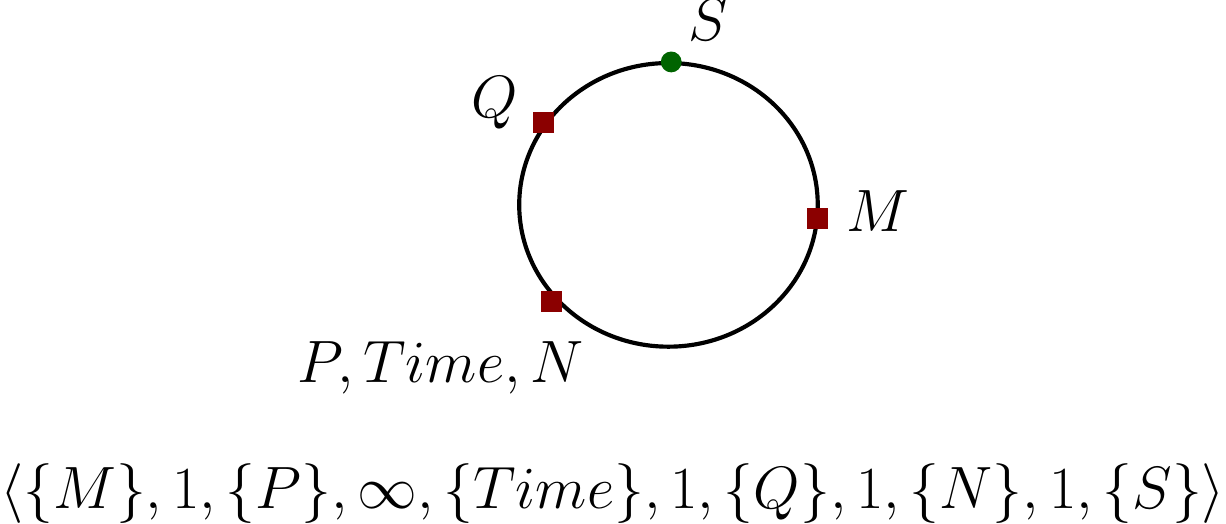}
\vspace{4mm}
\end{wrapfigure}
On the other hand, if we apply the above steps to the circle-configuration of $\Sscr_1$, shown 
in Figure~\ref{fig:circle-graph},
we obtain the circle-configuration shown
 to the right.
It is easy to check that this is indeed the circle-configuration of $\Sscr_2$.
The truncated time differences are updated and the fact $N$ 
is added to the class of $Time$ in the unit circle.

\newcommand\figspace{\vspace{4pt}}

\begin{figure}[t]
\textbullet\quad\textbf{\small Time in the zero point and not in the last class in the unit circle, where $n \geq 0$:}
\figspace

\includegraphics[width=8cm]{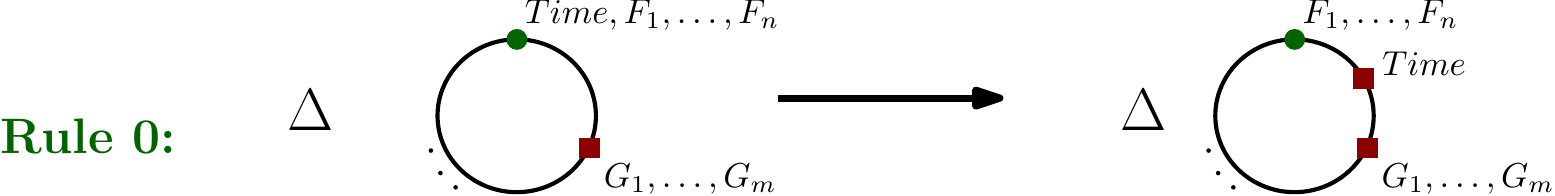}
\figspace

\textbullet\quad\textbf{\small Time alone and not in the zero point nor in the last class in the unit circle:}
\figspace

\includegraphics[width=8cm]{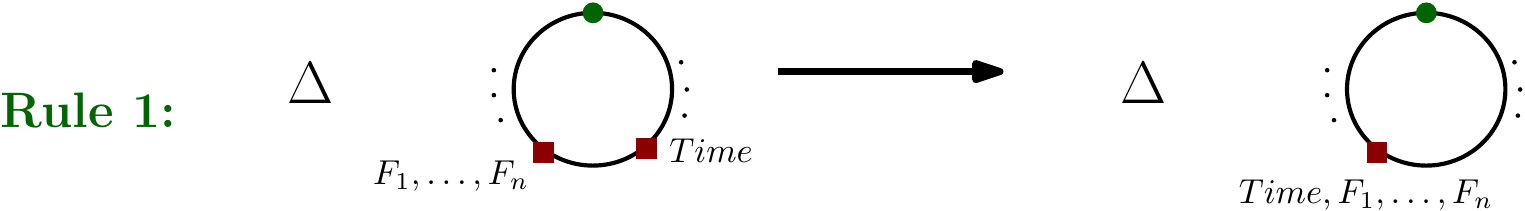}
\figspace

\textbullet\quad\textbf{\small Time not alone and not in the zero point nor in the last class in the unit circle:}
\figspace

\includegraphics[width=10cm]{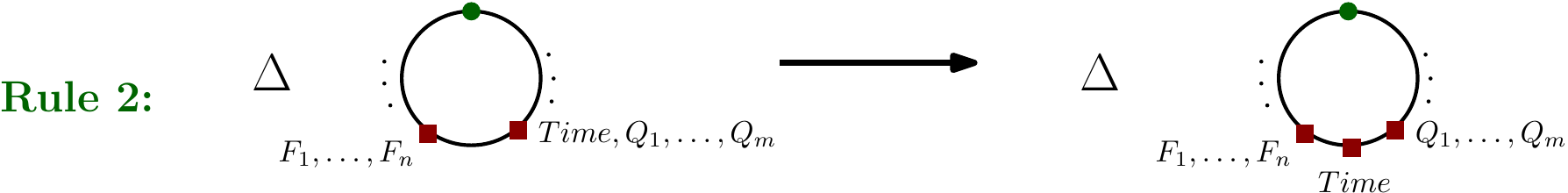}
\figspace

\textbullet\quad\textbf{\small Time not alone and in the last class in the unit circle which may be at the zero point:}
\figspace

\includegraphics[width=9cm]{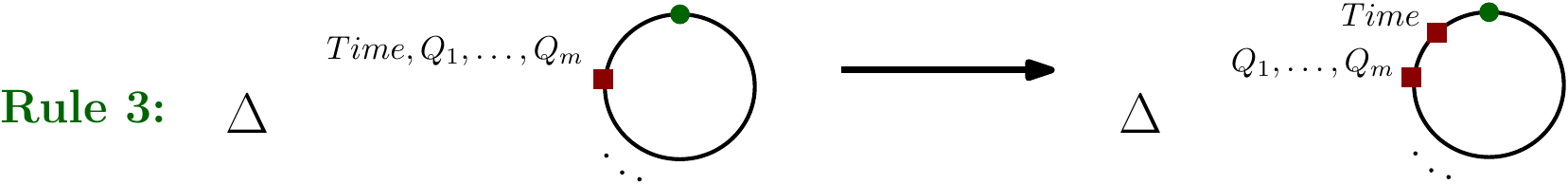}
\figspace

 \caption{Rewrite Rules for Time Advancement using Circle-Configurations.
 }
 \label{fig:time-advance}
 \vspace{1em}
\end{figure}

\begin{figure}[t]
\textbullet\quad\textbf{\small Time alone and in the last class in unit circle - Case 1: $m > 0, k \geq 0, n\geq 0$ and $\delta_1 > 1$:}
\figspace

\includegraphics[width=12cm]{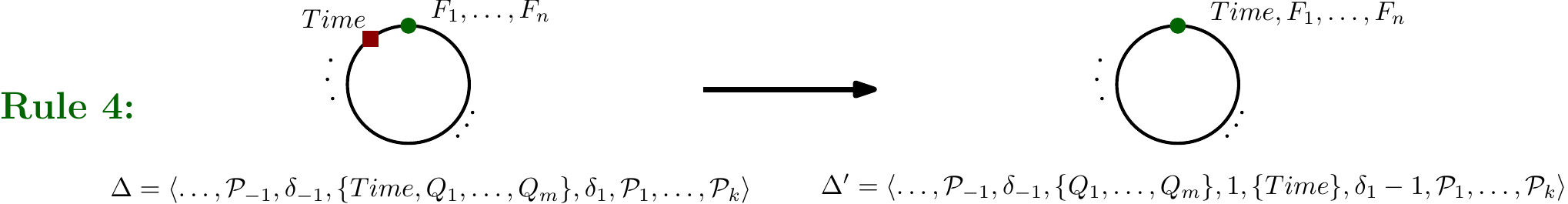}
\figspace

\textbullet\quad\textbf{\small Time alone and in the last class in unit circle - Case 2: $m > 0, k \geq 1$ and $n\geq 0$:}
\figspace

\includegraphics[width=11cm]{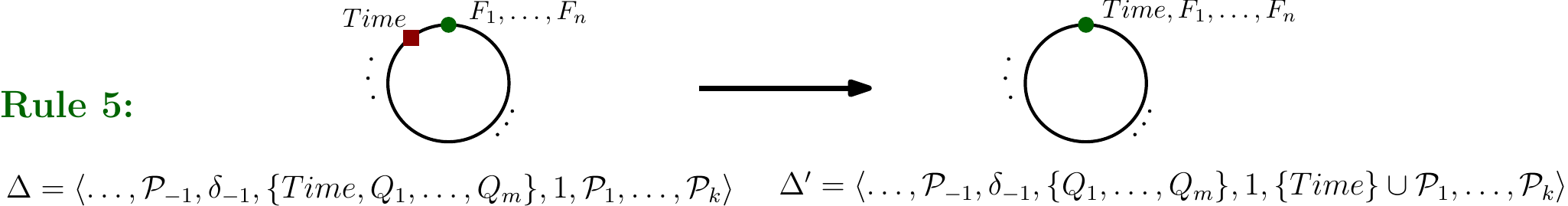}
\figspace

\textbullet\quad\textbf{\small Time alone and in the last class in unit circle - Case 3: $k \geq 0$ such that $\delta_{1} > 1$ when $k > 0$ and $\gamma_{-1}$
is the truncated time of $\delta_{-1} + 1$:}
\figspace

\includegraphics[width=11cm]{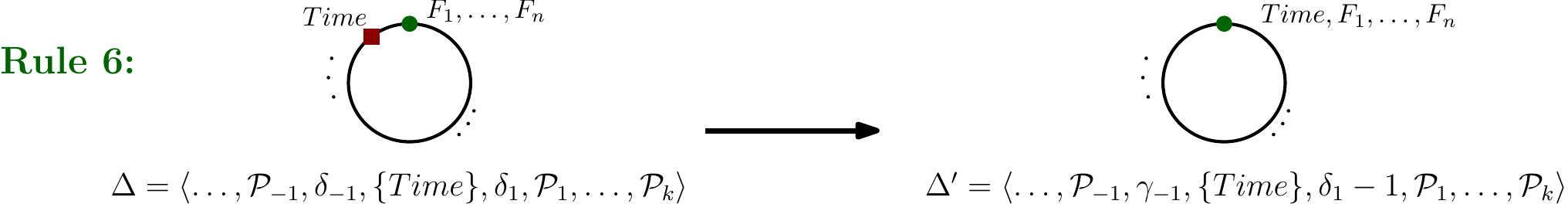}
\figspace

\textbullet\quad\textbf{\small Time alone and in the last class in unit circle - Case 4: $k \geq 1$ and $\gamma_{-1}$
is the truncated time of $\delta_{-1} + 1$:}
\figspace

\includegraphics[width=11cm]{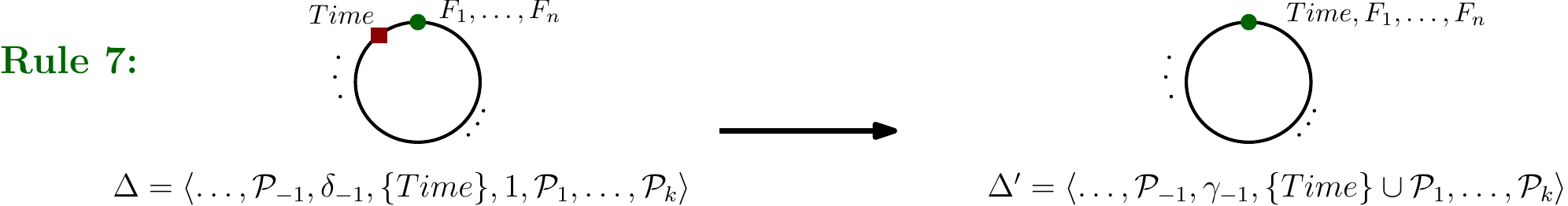}

\textbullet\quad\textbf{\small Time alone and in the last class in unit circle and in $\Delta$ - Case 5:  $\gamma_{-1}$
is the truncated time of $\delta_{-1} + 1$:}
\figspace

\vspace{-1mm}
 \caption{(Cont.) Rewrite Rules for Time Advancement using Circle-Configurations.
 }
 \label{fig:time-advance-cont}
\vspace{1em}
\end{figure}

\vspace{2mm}
\paragraph{Time Advancement Rule}

Specifying the time advancement rule 
$$Time@T \lra Time@(T + \varepsilon)$$
 over circle-configurations is more interesting. This action is translated into the rules depicted in Figures~\ref{fig:time-advance} and \ref{fig:time-advance-cont}. 
There are eight rules that rewrite a circle-configuration, $\tup{\Delta, \Uscr}$, depending on the position of 
the fact $Time$ in the unit circle $\Uscr$.  

Rule 0 specifies the case when the fact $Time$ appears in the {zero point} of $\Uscr$. Then $\Uscr$ is re-written so 
that a new class is created immediately after the zero point clockwise before any other class on the unit circle, and $Time$ is moved to that new class. This denotes that the decimal 
part of $Time$ is greater than zero and less than the decimal part of the facts in the following class ~$G_1, \ldots, G_n$.

Rule 1 specifies the case when  the fact 
$Time$ appears 
alone in a class on the unit circle and not in the last class. This means that there are some facts, $F_1, \ldots, F_n$, that 
appear in a class immediately after $Time$, \ie, $\Uscr(F_i) > \Uscr(Time)$ and for any other fact $G$, $G \notin\{F_1, \dots, F_n\}$, such that 
$\Uscr(G) > \Uscr(Time)$,  $\Uscr(G) > \Uscr(F_i)$ holds.  In this case, then time can advance so that 
it ends up in the same class as $F_i$, \ie,  time has advanced so much that its decimal part is the same as the decimal part
of the timestamps of $F_1, \ldots, F_n$. Therefore a constraint of the form $T_{F_i} > T_{Time} + D$ that was 
satisfied by $\tup{\Delta, \Uscr}$  might no longer be satisfied
by the resulting circle-configuration, depending on $D$ and the $\delta$-configuration~$\Delta$.  

Rule 2 is similar, but is only applicable when $Time$ is not alone in the 
unit circle class, \ie, there is at least one fact $Q_i$ such that $\Uscr(Time) = \Uscr(Q_i)$ and this class is not 
the last one, as in Rule 1.
Then, Rule~2 advances time enough so that its decimal part is greater
than the decimal part of the timestamps of $Q_i$, but not greater than the decimal part of the timestamps of the facts
in the class that immediately follows on the circle.

\begin{wrapfigure}{r}{0.45\textwidth}
\vspace{-2mm}
 \hspace{3mm}
\includegraphics[width=5cm]{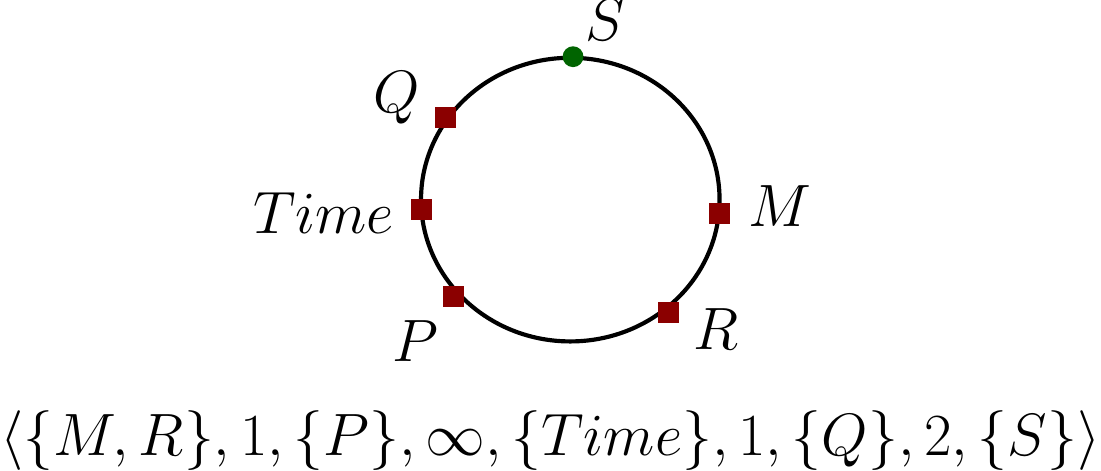}
 \caption{circle-configuration $\Ascr_{\Sscr_3}$  }
 \label{fig:circle-graph-3}
\vspace{-1mm}
\end{wrapfigure}
For example, Rule 2 could be applied to the circle-configuration $\Ascr_{\Sscr_1}$  shown in Figure~\ref{fig:circle-graph}.
We obtain the following circle-configuration, where the $\delta$-configuration does 
not change, but the fact $Time$  is moved to a new class on the  
unit circle, obtaining the  circle-configuration $\Ascr_{\Sscr_3}$ 
 shown to the right, Figure~\ref{fig:circle-graph-3}.

Rule 3 is similar to Rule 2, but it is applicable when $Time$ is in the last equivalence class, 
in which case a new class is created and placed clockwise immediately before the zero point of the circle. 

Notice that the $\delta$-configuration is not changed by \mbox{Rules 0-3}. The only rules that change the $\delta$-configuration are
the Rules 4, 5, 6 and 7, as in these cases $Time$ advances enough to complete the unit circle, \ie, reach the zero point.

Rules 4 and 5 handle the case when $Time$  initially has the same integer part as timestamps of other facts $Q_1, \ldots, Q_m$, in which case it might create a new class in the $\delta$-configuration (Rule 4) or merge with the following class $\Pscr_1$ (Rule 5). Rules 6 and 7 handle the case when $Time$ does not have the same integer part as the timestamp of any other fact, \ie, it appears alone in 
$\Delta$, in which case it might still remain alone in the same class (Rule 6) or merge with the following class $\Pscr_1$ (Rule 7).
Notice that the time difference, $\delta_{-1}$, to the class, $\Pscr_{-1}$, immediately before the class of $Time$
is incremented by one and truncated by the value of $\Dmax$ if necessary.

\begin{wrapfigure}{r}{0.4\textwidth}
\vspace{-1mm}
\includegraphics[width=4.5cm]{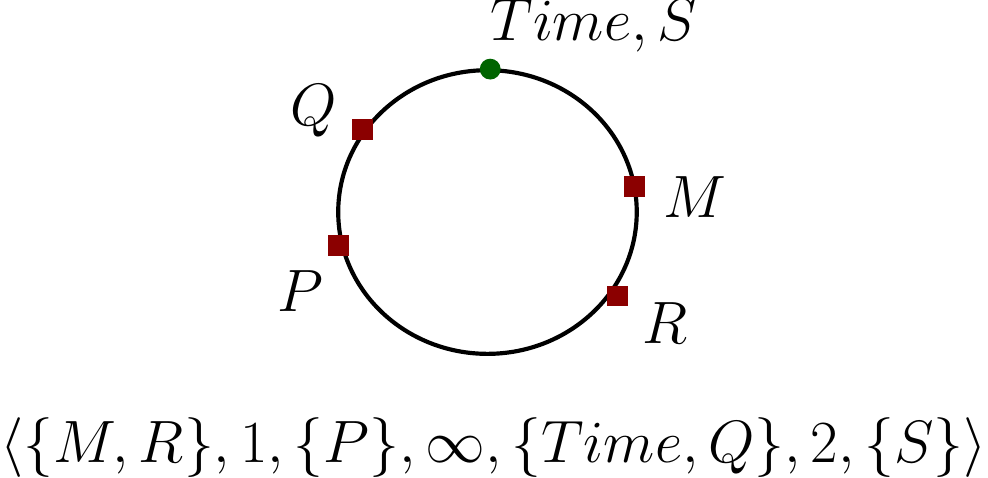} \quad
\vspace{-4mm}
\end{wrapfigure}

\vspace{2mm}
For example, consider the circle-configuration \ $\Ascr_{\Sscr_3}$  
 illustrated in Figure~\ref{fig:circle-graph-3}.
It is easy to check that by applying Rule 1, followed by Rule 3  to 
 $\Ascr_{\Sscr_3}$  
 we obtain a circle-configuration for which the Rule 7 is applicable.
 After applying Rule 7 we obtain the configuration $\Ascr_{\Sscr_4}$ shown to the right.

\medskip
Given a reachability problem $\Tscr$ and an upper bound $\Dmax$ on the numeric values of $\Tscr$ with the set of rules $\Rscr$ containing
an instantaneous rule $r$, we write 
$[r]$ for the corresponding rewrite rule of $r$ over circle-configurations as described above. Moreover, let $Next$ be the set of 8 time advancing rules shown in Figures~\ref{fig:time-advance} and  \ref{fig:time-advance-cont}. 
Notice that for a given circle-configuration only one of these rules is applicable. 

Therefore, we write
$$\widetilde{\Rscr} = \{ \ [r] : r \in \Rscr \ \} \cup Next \ $$
for the set of rules over circle-configurations corresponding to the set of rules $\Rscr$ over configurations.

We use  \ $\Ascr \lra_{a} \Ascr_1$ \ for the one-step reachability relation between circle-configurations using the rewrite rule $a$, \ie, 
the circle-configuration $\Ascr$ may be rewritten to the circle-configuration $\Ascr_1$ 
using the rewrite rule $a$. Finally, ~$\Ascr \lra^* \Ascr_1$ (respectively, ~$\Ascr \lra_{\Rscr'}^* \Ascr_1$)
denotes the reflexive transitive closure 
relation of the one-step relation (respectively, using only rules from the set \ $\Rscr' \subseteq \widetilde{\Rscr}$ ).

\medskip
\begin{lemma}
\label{th:circle_action}
 Let $\Tscr$ be a reachability problem and $\Dmax$ be an upper bound on the numeric values in $\Tscr$.
 Let   $\Ascr_1$ be the circle-configuration of the configuration $\Sscr_1$,
 and $r$ be an instantaneous action in $\Tscr$. 
Then \ $\Sscr_1 \lra_r \Sscr_2$ \  if and only  if \ \mbox{$\Ascr_1 \lra_{[r]} \Ascr_2$}, where $\Ascr_2$ is the circle-configuration of $\Sscr_2$. Moreover,  \  $\Sscr_1 \lra_{Tick} \Sscr_2$ \  if and only if \ $\Ascr_1 \lra_{Next}^* \Ascr_2$, where $\Ascr_2$ is the circle-configuration of $\Sscr_2$.
\end{lemma}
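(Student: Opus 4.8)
The plan is to prove the two biconditionals in turn, reducing both to Lemma~\ref{th:circle_constraints} (faithfulness of constraint satisfaction) together with a direct comparison of the compilation of $[r]$, respectively of the rules in $Next$, against Definition~\ref{def:circle-conf}. For the instantaneous action $r$, first suppose $\Sscr_1 \lra_r \Sscr_2$. Then the facts $W_1@T_1,\dots,F_n@T_n'$ matched by $r$ occur in $\Sscr_1$ and the guard $\Cscr$ holds in $\Sscr_1$; since $\Ascr_1$ is the circle-configuration of $\Sscr_1$, Lemma~\ref{th:circle_constraints} gives that the same facts occur in $\Ascr_1$ and $\Cscr$ is satisfied there, so step~1 of the compilation of $[r]$ succeeds. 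It then remains to check that steps~2--5 output exactly the circle-configuration of $\Sscr_2$. This is a direct verification: writing $t$ for the global time, $\Sscr_2 = (\Sscr_1\setminus\{F_1@T_1',\dots,F_n@T_n'\})\cup\{Q_1@(t+D_1),\dots,Q_m@(t+D_m)\}$, and since each $D_i$ is a natural number with $D_i<\Dmax$ we have $\dec{t+D_i}=\dec{t}$ and $\Int{t+D_i}-\Int{t}=D_i$; hence in the circle-configuration of $\Sscr_2$ every new fact $Q_i$ lies in the unit-circle class of $Time$ and has truncated time difference exactly $D_i$ from $Time$ — which is precisely what step~4 installs — while the truncated differences among surviving facts are unchanged, as in step~2. (The fresh values produced in step~3 are identified with those in $\Sscr_2$ up to the nonce renaming under which circle-configurations are taken.) For the converse, if $\Ascr_1 \lra_{[r]} \Ascr_2$, step~1 of $[r]$ exhibits occurrences of the $W$'s and $F$'s with $\Cscr$ satisfied in $\Ascr_1$; by Lemma~\ref{th:circle_constraints} these occurrences and the guard transfer back to $\Sscr_1$, so $r$ applies to $\Sscr_1$, yielding some $\Sscr_2$, and the forward direction together with the fact that steps~2--5 are deterministic (given the chosen occurrences, up to renaming of fresh values) forces $\Ascr_2$ to be the circle-configuration of that $\Sscr_2$.

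For the $Tick$ action the strategy is to split a single advance $Time@t\lra Time@(t+\varepsilon)$ into a finite chain of \emph{elementary advances}. Call $v\in(t,t+\varepsilon]$ a \emph{landmark} if either $\dec v=\dec{t_F}$ for some fact $F\neq Time$ of $\Sscr_1$, or $v$ is an integer; there are only finitely many landmarks in $(t,t+\varepsilon]$. Moving $Time$ successively from $t$ to the first landmark, then just past it, then to the next landmark, and so on up to $t+\varepsilon$, gives a chain of configurations in which each consecutive pair either has the same circle-configuration (when no landmark is crossed, so the decimal order and all integer parts are unchanged) or differs by the passage of exactly one landmark. I would then show, by case analysis on the position of $Time$ in $\tup{\Delta,\Uscr}$, that the passage of one landmark is realised by exactly one application of one rule of $Next$: Rules~0--3 when $Time$ only changes unit-circle class and no integer part changes (according to whether $Time$ is at the zero point, alone, together with other facts, or in the last class), and Rules~4--7 together with Case~5 when $Time$ crosses the zero point, in which case the truncated difference $\delta_{-1}$ to the class immediately preceding $Time$ in $\Delta$ is incremented by one and re-truncated w.r.t.\ $\Dmax$ — exactly the $\gamma_{-1}$ bookkeeping of those rules — with the alternatives of creating a new $\delta$-class, merging with the next class, or $Time$ remaining alone matching the remaining rules. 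Since exactly one $Next$ rule is applicable to any circle-configuration, the chain of elementary advances compiles to a unique derivation $\Ascr_1 \lra_{Next}^* \Ascr_2$ with $\Ascr_2$ the circle-configuration of $\Sscr_2$. Conversely, by induction on the length of a derivation $\Ascr_1 \lra_{Next}^* \Ascr_2$, each single $Next$ step applied to the circle-configuration of some $\Sscr$ with $Time@s$ is shown to be the circle-configuration of $\Sscr$ with $Time$ advanced to $s+\varepsilon'$ for a suitable $\varepsilon'>0$ — the converse reading of the same per-rule analysis, picking $\varepsilon'$ so that $Time$ lands in the intended decimal class (there is always a real strictly between two distinct decimal parts, and above all of them, and one may always raise the integer part by one to justify the ``$+1$'' in the $\delta$-update) — and since every $Next$ step strictly increases the global time, composing gives $\Sscr_1\lra_{Tick}\Sscr_2$ with the stated circle-configuration.

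The main obstacle is the $Tick$ direction: not any single case, but the bookkeeping needed to confirm that the eight rules of $Next$ jointly and exactly model ``advance $Time$ past one landmark''. The delicate points are the zero-point crossings (Rules~4--7 and Case~5), where one must match the $\delta_{-1}+1$ increment and its $\Dmax$-truncation with the definition of the $\delta$-configuration of the advanced configuration, and handle the split-versus-merge alternatives; and one must check that no landmark is skipped or counted twice, so that the finite chain of elementary advances composes to precisely the original $\varepsilon$. The instantaneous case and the appeals to Lemma~\ref{th:circle_constraints} are routine by comparison.
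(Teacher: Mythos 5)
Your proposal is correct and follows essentially the same route as the paper: the instantaneous case is discharged by Lemma~\ref{th:circle_constraints} plus the observation that steps 2--5 of the compilation reproduce the circle-configuration of $\Sscr_2$ by construction, and the $Tick$ case is handled by decomposing the advance $\varepsilon$ into elementary sub-advances (your ``landmarks'' are exactly the paper's choice of $\varepsilon_i$ via the smallest positive decimal gap), each matched to a single $Next$ rule, with the converse given by the per-rule choice of a suitable $\varepsilon'>0$. The only cosmetic difference is that in the $Next\Rightarrow Tick$ direction the paper first extracts a canonical concrete representative of $\Ascr_1$ (with decimal parts $i/(k+1)$) and picks explicit $\varepsilon$ values, whereas you argue directly on the given $\Sscr_1$; both amount to the same bisimulation argument.
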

\begin{proof}
Following  Lemma~\ref{th:circle_constraints} we only need to prove the claim regarding $Tick$ and $Next$ rules as  $\Sscr_1$ satisfies the time constraints of rule $r$ if and only if its circle-configuration $\Ascr_1$ also satisfies the constraints in $[r]$. Moreover $\Ascr_2$ is the circle-configuration of $\Sscr_2$ by construction.

In order to prove that, given a transition $Next$ over circle-configurations,  there is the matching pair of configurations and the appropriate instance of $Tick$ rule, we first show how to extract a corresponding concrete configuration $\Sscr$ from a given circle-configuration ~$\Ascr= \tup{\Delta, \Uscr}$, where
\[
\begin{array}{rl}
 \Delta = & \left\langle
 \begin{array}{l}
~ \{P_{1}^1, \ldots, P_{m_1}^1\}, \delta_{1, 2}, \{P_{1}^2, \ldots, P_{m_2}^2\}, \ldots 
  \{P_{1}^{j-1}, \ldots, P_{m_{j-1}}^{j-1}\}, \delta_{j-1, j}, \{P_{1}^j, \ldots, P_{m_j}^j\} ~
 \end{array}\right\rangle \ ,
\\
\Uscr =&  [ \ \{Q_{1}^0, \ldots, Q_{m_0}^0\}_\Zscr, ~\{Q_{1}^1, \ldots, Q_{m_1}^1\}, \ldots, ~\{Q_{1}^k, \ldots, Q_{m_k}^k\} \ ] \ .
\end{array}
\]
Recall that \  $\{P_1^1, \ldots, P_{m_1}^1, P_1^2, \ldots, P_{m_j}^j\} = \{F_1, \ldots, F_n, Time\}$ so we easily obtain (untimed) facts of the configuration $\Sscr$. We timestamp these facts as follows. 
From $\Delta$ we can extract  natural numbers to be assigned as integer values of timestamps of facts:
We assign $I_1 =0$ as the integer part of the timestamp of each of the facts \ $P_{1}^1, \ldots, P_{m_1}^1$.
As $I_{i+1}$, the integer part of the timestamp of each of the facts \ $P_{1}^{i+1}, \ldots, P_{m_{i+1}}^{i+1}$, we set
 $$ 
I_{i+1} = 
\left\{\begin{array}{cl}
I_i + \delta_{i,i+1}\ , \ & \textrm{ if } \ \delta_{i,i+1} \leq \Dmax\\
 I_i + \Dmax+1\ , \ & \textrm{ if } \ \delta_{i,i+1} = \infty
\end{array}\right.
\ , \quad  i = 1, \dots, j-1 \ .
$$
Then, from $\Uscr$ we can extract values to be assigned as decimal values of timestamps. We set 0 as the decimal part of facts $Q_{1}^0, \ldots, Q_{m_0}^0$, and $\frac{i}{k+1}$ as the decimal part of facts $Q_{1}^i, \ldots, Q_{m_i}^i$, $\forall i= 1, \dots , k$.
Obtained timestamped facts form configuration $\Sscr$ corresponding to circle-configuration $\Ascr$.

\medskip
Let \ $\Ascr_1  \to_{Next}  \Ascr_2$. 
From $\Ascr_1$ we extract a corresponding concrete configuration $\Sscr_1$ as described above.
 Recall that  $Next$ is a set of 8 rules representing time advancement over circle-configurations, as given in Figures  \ref{fig:time-advance} and \ref{fig:time-advance-cont}. 
Depending on the position of the fact $Time$ in $\Ascr_1$ only one of the 8 rules from $Next$ applies.
Assume that $\Ascr$ fits the first case, \ie,~$Rule \ 0$ in Figure \ref{fig:time-advance}. 
Then, any value $\varepsilon \in \langle 0, \frac{1}{k+1} \rangle$, \eg,~$\varepsilon = \frac{1}{2(k+1)}$,  \ is suitable for an instance of the $Tick$ rule, 
$Time@T \lra Time@(T + \varepsilon)$,
 in  $\Sscr_1 \lra_{Tick} \Sscr_2$ so that $\Ascr_2$ corresponds to configuration $\Sscr_2$.

In the same way we can take $\varepsilon \in \langle 0, \frac{1}{k+1} \rangle$  \ in cases corresponding to $Rule \ 2$ and $Rule \ 3$ from Figure \ref{fig:time-advance}. 
For the remaining rules, \ie,~$Rule \ 1$ and $Rule \ 4$ to $Rule \ 7$, \  taking  \ $\varepsilon = \frac{1}{k+1}$\ results in $\Ascr_2$ that corresponds to configuration $\Sscr_2$. That is because the timestamp of $Time$, say $\frac{j}{k+1}$,  is increased exactly so that it matches the decimal part of next group of facts, that is $\frac{j+1}{k+1}$ or $0$.

\medskip
For the opposite direction let  \ $\Sscr_1 \lra_{Tick} \Sscr_2$ \  by means of the instance of $Tick$ using the actual value $\varepsilon$, written \ $ \Sscr_1 \lra_{Tick_{\varepsilon}} \Sscr_2$.
Configurations  $\Sscr_1 $ and $ \Sscr_2$ differ only in the timestamp of the fact $Time$. 
It may well be that \ $\Sscr_1$ and $\Sscr_2$ correspond to the same circle-configuration \ $\Ascr_1 =\Ascr_2$. Then, to the given $Tick$ action, \ $\Sscr_1 \lra_{Tick} \Sscr_2$, corresponds the empty sequence of $Next$ actions, \ $\Ascr_1 \lra_{Next}^* \Ascr_2$. 
In the rest of the proof we assume that ~$\Sscr_1$ and $\Sscr_2$ do not correspond to the same circle configuration.

 The effect of 
the $Tick_{\varepsilon}$ rule is the same as the effect of consecutively applying a series of $Tick$ rules using $\varepsilon_i$ where $\varepsilon_1 + \dots + \varepsilon_n = \varepsilon$ . We can choose a sequence of values  $\varepsilon_i$ so that to the $Tick$ for $\varepsilon_i$, for each $i$, corresponds a single application of the $Next$ rule:
\[
 \begin{array}{ccccccccc}
\Sscr_1 & \to_{Tick_{\varepsilon_1}}&   \Sscr^1 & \to_{Tick_{\varepsilon_2 }}&   \Sscr^2 & \to_{Tick_{\varepsilon_3}}&  \dots & \to_{Tick_{\varepsilon_n}}&   \Sscr_2\\[5pt]
\arr   &     &  \arr   &     &  \arr &  &   \dots   & & \arr
 \\[5pt]
\Ascr_1 & \to_{Next}& \ \Ascr^1 & \to_{{Next }}&   \Ascr^2 & \to_{{Next}}&  \dots & \to_{{Next}}&   \Ascr_2
\end{array}
\]
Consecutive application of such $Tick_{\varepsilon _i}$  corresponds to a series of the $Next$ rules as given in Figures  \ref{fig:time-advance} and \ref{fig:time-advance-cont}.
Namely, each $\varepsilon _i$ corresponds to one of the 8 Next rules.

Concrete values of timestamps in $\Sscr$ reflect in the position of the fact $Time$ in $\Ascr_1$.
Let the fact $Q@{T_Q}$ be one of the facts  in $\Sscr$  with the 
smallest  difference \ $\dec{T_Q} - \dec{ T}> 0$. 
If the global time $T$ has the decimal part equal to some other fact in the configuration, taking $0< \varepsilon < \dec{T_Q} - \dec{ T}$ corresponds to the application of either $Rule \ 0$, $Rule \ 2$ or $Rule \ 3$. Otherwise, if the decimal part of $T$ is different than the decimal part of all other facts in $\Sscr_1$, taking  \ $\varepsilon_i= \dec{T_Q} - \dec{ T}$ \ results in the new global time that matches the  timestamp $ {T_Q}$ in its decimal part.
\end{proof}

\vspace{2mm}
\begin{theorem}
\label{th:circle}
 Let $\Tscr$ be a reachability problem, $\Dmax$ be an upper bound on the numeric values in $\Tscr$. 
 Then \ $\Sscr_I \lra^* \Sscr_G$ \ for some initial and goal configurations $\Sscr_I$ and $\Sscr_G$
 in $\Tscr$ if and only if  \ $\Ascr_I \lra^* \Ascr_G$ \ where $\Ascr_I$ and $\Ascr_G$ are the circle-configurations
 of $\Sscr_I$ and $\Sscr_G$, respectively.
\end{theorem}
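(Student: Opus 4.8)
The plan is to lift the one-step correspondence of Lemma~\ref{th:circle_action} to whole plans by induction on plan length, and to use Lemma~\ref{th:circle_constraints} together with Definition~\ref{def:equivalence} to carry the ``goal configuration'' property across the abstraction. The preliminary fact to record is that the circle-configuration $\Ascr_{\Sscr}$ is canonically determined by $\Sscr$, and that, conversely, any two configurations with the same circle-configuration $\tup{\Delta,\Uscr}$ are $\sim_{\Dmax}$-equivalent: by Lemma~\ref{th:circle_constraints} they satisfy exactly the same constraints of the form~(\ref{eq:constraints}) with parameter at most $\Dmax$, and by the construction of $\tup{\Delta,\Uscr}$ they carry the same multiset of untimed facts up to renaming of nonces, which is precisely conditions (i)--(ii) of Definition~\ref{def:equivalence}. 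Hence being a goal configuration is a property constant on each class of configurations sharing a circle-configuration.

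For the ``only if'' direction, fix a plan $\Sscr_I = \Sscr_0 \lra_{r_1} \Sscr_1 \lra_{r_2} \cdots \lra_{r_m} \Sscr_m = \Sscr_G$ and let $\Ascr_i$ be the circle-configuration of $\Sscr_i$, so $\Ascr_0 = \Ascr_I$ and $\Ascr_m = \Ascr_G$. Applying Lemma~\ref{th:circle_action} to each step gives $\Ascr_{i-1} \lra_{[r_i]} \Ascr_i$ when $r_i$ is instantaneous and $\Ascr_{i-1} \lra_{Next}^* \Ascr_i$ when $r_i$ is the $Tick$ rule; concatenating these (possibly empty) segments yields a plan $\Ascr_I \lra^* \Ascr_G$ over $\widetilde{\Rscr}$, and $\Ascr_G$ is a goal circle-configuration because it corresponds to the goal configuration $\Sscr_G$.

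For the ``if'' direction, fix a plan $\Ascr_I = \Ascr_0 \lra_{a_1} \Ascr_1 \lra_{a_2} \cdots \lra_{a_m} \Ascr_m = \Ascr_G$ in which each $a_i$ is either $[r]$ for some instantaneous $r \in \Rscr$ or one of the eight rules of $Next$. We build, by induction on $i$, a concrete plan $\Sscr_I \lra^* \Sscr_i$ with the circle-configuration of $\Sscr_i$ equal to $\Ascr_i$: set $\Sscr_0 = \Sscr_I$; in the inductive step $\Sscr_{i-1}$ has circle-configuration $\Ascr_{i-1}$, so Lemma~\ref{th:circle_action} applies --- if $a_i = [r]$ it produces $\Sscr_i$ with $\Sscr_{i-1} \lra_r \Sscr_i$, and if $a_i \in Next$ then $\Ascr_{i-1} \lra_{Next}^* \Ascr_i$ holds with a one-step witness, so it produces $\Sscr_i$ with $\Sscr_{i-1} \lra_{Tick} \Sscr_i$; in both cases the circle-configuration of $\Sscr_i$ is $\Ascr_i$. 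At $i = m$ we have $\Sscr_I \lra^* \Sscr_m$ with $\Sscr_m$ having circle-configuration $\Ascr_G$, the same as $\Sscr_G$; by the preliminary fact $\Sscr_m \sim_{\Dmax} \Sscr_G$, so $\Sscr_m$ is itself a goal configuration, which is the reachability asserted.

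The theorem is thus essentially a bookkeeping induction --- the substantive work lies in Lemma~\ref{th:circle_action}, where the eight time-advancement rules are matched to $Tick$ steps with a suitably chosen $\varepsilon$. The one point that truly needs care is the endpoint of the ``if'' direction: the reconstructed plan need not land on $\Sscr_G$ exactly, only on a configuration with the same circle-configuration, so one must invoke the invariance of goal-hood under $\sim_{\Dmax}$ --- which is why Lemma~\ref{th:circle_constraints} and Definition~\ref{def:equivalence}, not merely Lemma~\ref{th:circle_action}, are needed. A secondary, minor point is that the ``if''-direction induction need only propagate \emph{some} concrete representative of each $\Ascr_i$, which is exactly what Lemma~\ref{th:circle_action} hands us at every step.
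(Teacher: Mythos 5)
Your proof is correct and follows essentially the same route as the paper's: an induction on the length of the plan that lifts each step via Lemma~\ref{th:circle_action}, producing the same bisimulation between the concrete plan and the plan over circle-configurations (with $Tick$ steps matched to single $Next$ steps or empty segments). The only difference is that you make explicit the endpoint argument in the ``if'' direction --- that the reconstructed plan may land on a configuration merely $\sim_{\Dmax}$-equivalent to $\Sscr_G$, whose goal-hood is then transferred via Lemma~\ref{th:circle_constraints} and Definition~\ref{def:equivalence} --- a point the paper leaves implicit in its definition of goal circle-configurations.
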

\begin{proof}

The proof is by induction on the length of the given plan.
Additionally, as in the proof of Lemma~\ref{th:circle_action}, we can assume that each instance of the time advancement rule in the plan over configurations uses a concrete value $\varepsilon$ that 
corresponds to a single application of one of the $Next$ rules in the matching plan over circle-configurations, 
or to no action at all (\ie,~to an empty sequence of $Next$ actions). 
We get the following bisimulation:
\[
\begin{array}{cccccccc}
\Sscr_I & \to_{r_1} \dots \to_{r_{i-1}} &  \Sscr_{i-1} & \to_{r_i} \  & \ \Sscr_{i}  & \to_{r_{i+1}}  \dots \to_{r_{n}} & \Sscr_G
\\
\arr & &   \arr &     &  \arr &     &  \arr
\\
\Ascr_I &\to_{r_1'} \dots \to_{r_{i-1}'} & \Ascr_{i-1} & \to_{r_i'}&   \Ascr_i & \to_{r_{i+1}'} \dots \to_{r_{n}'} & \Ascr_G
\end{array}
\]
where $r_i'$ is either the instantaneous action $[r_i]$  over circle-configurations, one of the $Next$ rule as given in Figures  \ref{fig:time-advance} and \ref{fig:time-advance-cont}, or an empty action (in which case $ \Ascr_{i-1} = \Ascr_i $ ).
\end{proof}

This theorem establishes that the set of plans over circle-configurations is a sound and complete representation 
of the set of plans with dense time. This means that we can search for solutions of problems symbolically, that is, 
without writing down the explicit values of the timestamps, \ie, the real numbers, in a plan.

\section{Complexity Results}
\label{sec:complex}

\vspace{-1.5mm}

This section details some of the complexity results for the reachability problem.

Reachability problem is a rather general problem that can have various applications. For example, the secrecy problem from the field of protocol security can be considered as an instance of the reachability problem.
Namely,  by interacting with the protocol run, the goal of the intruder is to learn a secret that is initially only known to another participant of the protocol.

In our previous work we have already considered such an application of our formal models in protocol security. In~\cite{kanovich13ic,kanovich14comlan} we consider bounded memory protocol theories and intruder theories and the corresponding secrecy problem. These formalizations bound the number of concurrent protocol sessions. However, the total number of protocol sessions and the total number of fresh values created is unbounded.

The model introduced in this paper can similarly  be applied in the protocol security analysis. As in~\cite{kanovich13ic,kanovich14comlan}, when using only balanced actions, one can necessarily consider only a bounded number of \emph{concurrent} sessions. Nevertheless, the analysis would consider an unbounded number of protocol sessions in \emph{total}.

In addition, by containing the dimension of time, our model presented in this paper can be used for the analysis of time-sensitive protocols, such as cyber-physical security protocols,  and their properties.

\medskip
\paragraph{Conditions for Decidability}

From the literature, we can infer some conditions for decidability of the reachability problem in general:

\begin{enumerate}
 \item \emph{Upper Bound on the Size of Facts}: In general, if we do not assume an upper bound 
 on the size of facts appearing in a plan, where the size of facts is the total number of predicate, function, constant and variable 
 symbols it contains (\eg,~the size of $P(f(a), x,a)$ is 5), then it is easy to encode the Post-Correspondence
 problem which is undecidable, see \cite{cervesato99csfw,durgin04jcs}.\footnote{We leave for Future Work the investigation 
 of specific cases, \eg, protocol
 with tagging mechanisms, where this upper bound may be lifted~\cite{ramanujam03fsttcs}.} Thus we will assume an upper bound on the 
 size of facts, denoted by the symbol $\uSize$.
 
 \item \emph{Balanced Actions}: An action is balanced if its pre-condition has the same number of facts as its post-condition~\cite{kanovich11jar}. 
 The reachability problem  is undecidable for (un-timed) systems with possibly unbalanced actions 
 even if the size of facts is bounded~\cite{cervesato99csfw,durgin04jcs}.
 In a balanced system, on the other hand, the number of facts in any configuration in a plan is 
 the same as the number of facts of the initial configuration, allowing one to recover decidability under some additional conditions.
 We denote the number of facts in the configuration by the symbol $\iSize$.

Both  undecidability results related to (un)balanced actions as well as the upper bound on the size of facts are time irrelevant, they carry over to systems with dense time.
 
\item\emph{Conditions for Timestamps and Time Constraints}:
Recall the form of instantaneous rules in our systems (Equation (\ref{eq:instantaneous}) in Section~\ref{sec:msr}) and the conditions the form of the rules imposes on timestamps and time constraints.

 In particular, the form of instantaneous rules imposes the restriction on using only natural numbers (for $D$s and $D_i$s) in time constraints and timestamps of created facts. This is not as restrictive as it may appear. Namely, we are able to capture systems that would allow rational numbers as well. These numbers are constants in any given concrete model, and they are used to specify concrete time requirements. 
By allowing  rational numbers we would not add to expressivity of the system, since these constants can be multiplied with a  common multiple of their denominators, to obtain natural numbers. The only difference between the original model (with rationals) and the obtained model (with only natural numbers) is the intended denotation of the time units used in the model. For example, one could use minutes instead of days or hours. The intended semantics of the system represented would not change.
In fact, to maintain all our results valid, it suffices to assume that all these numerical constants 
 mentioned within the above constraints and timestamps are commensurable, that is, these constants 
can  be given in terms of a common unit.

Furthermore, the form of  instantaneous rules restricts the form of timestamps of created facts and the form of time constraints.
Timestamps are necessarily of the form $T +D$, where $T$ is the current time of the enabling configuration and $D$ a natural number, and
 time constraints are relative, \ie, involve exactly two time variables from the pre-condition of the rule.
In~\cite{kanovich15mscs} we have shown that  relaxing either of the form of timestamps or the form of time constraints leads to the undecidability of the reachability
problem for multiset rewriting models with discrete time. The same would thus lead  to the undecidability of the reachability problems considered in this paper.
In particular, we  get undecidability  for systems with non-relative time constraints that involve three or more time variables. 
Similarly, we fall into undecidabilty if we allow timestamps of the created facts to be (already linear) polynomials of time variables from the pre-condition of the rule. See ~\cite{kanovich15mscs} for more details.

\end{enumerate}

\medskip
\begin{corollary}
 The reachability problem for our model is undecidable in general.
\end{corollary}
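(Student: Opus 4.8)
The plan is to derive the corollary by a faithful, time-agnostic embedding of untimed multiset rewriting into our model, so that the classical undecidability results recalled just before this corollary transfer verbatim. Recall from the discussion preceding Theorem~\ref{th:equiv_reach} and from the list of conditions in Section~\ref{sec:complex} (citing~\cite{kanovich09csf,cervesato99csfw,durgin04jcs}) that the reachability problem for \emph{untimed} MSR is already undecidable in two independent regimes: when actions are allowed to be unbalanced, and when actions are balanced but the size of facts is unbounded. Either regime suffices for the conclusion, and in both cases the offending theories are time-irrelevant.

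First I would describe the embedding. Given an untimed MSR rule $W \lra \exists\vec{X}.\,W'$, replace it by the instantaneous rule
\[
 Time@T,\ \widehat{W}\ \mid\ \emptyset\ \lra\ \exists\vec{X}.\,[\,Time@T,\ \widehat{W'}\,]\ ,
\]
where $\widehat{(\cdot)}$ timestamps every fact with the current global time $T$ (equivalently, takes all $D_i = 0$) and the guard is empty; an untimed configuration $\Sscr$ is mapped to $\{\,Time@0\,\}\cup\{\,F@0 : F\in\Sscr\,\}$, and an untimed goal is translated the same way with no constraints attached. Since no guard of a translated rule mentions timestamps, applicability is independent of the actual timestamp values, so this encoding inherits whatever undecidability the source theory carries while staying inside our rule format of Equation~(\ref{eq:instantaneous}).

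Next I would verify faithfulness in both directions. Any untimed plan lifts to a plan over the translated theory by firing the translated rules in the same order and inserting no Tick steps; the resulting timed configuration, after forgetting timestamps and the $Time$ fact, projects back onto the untimed one. Conversely, in any plan over the translated theory the Tick steps are inert on the non-$Time$ facts and on every guard, hence can be deleted, leaving a plan over the translated instantaneous rules whose projection is a valid untimed plan reaching the projected goal. Therefore the timed reachability instance is solvable iff the original untimed one is, and undecidability of the latter yields undecidability of the former, which is the corollary.

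The only point requiring care — and it is a very mild one — is the bookkeeping around the distinguished $Time$ fact and the Tick action: one must observe that Tick can neither enable nor disable any translated rule, because all guards are empty and no translated rule consumes or inspects $Time$ beyond passing it through unchanged. With this remark interleaving Tick steps neither helps nor hurts reachability, and the reduction is immediate; so I do not expect a genuine obstacle here, the corollary being essentially a restatement of the classical results modulo the trivial encoding already foreshadowed in the text.
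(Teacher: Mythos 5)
Your proposal is correct and follows essentially the same route as the paper: both reduce from the undecidability of untimed (unbalanced) multiset rewriting by translating each untimed rule into a guard-free instantaneous rule that carries $Time@T$ through and stamps all created facts with the current time, so that rule applicability never depends on timestamps and the timed instance is solvable iff the untimed one is. Your explicit check that Tick steps are inert and can be inserted or deleted freely is the same observation the paper makes (more tersely) when it notes that "the application of rules in the timed system is independent from the time aspect."
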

\begin{proof} 
To an instance of a reachability problem for  (un-timed) systems with possibly unbalanced actions, we associate a timed version of the same problem as follows. 

We attach arbitrary timestamps to facts in the initial configuration. Similarly, to each (untimed) fact in the goal configuration we attach a different time variable.  We put no constraints to the initial and goal configurations.
This way, timestamps of facts in the initial and the goal configuration play no particular role in the reachability problem.

Finally, we consider the following set of instantaneous rules: For every rule 
$$
\begin{array}{l}
 W_1 \ldots, ~W_k,~F_1, \ldots, ~F_n ~\lra~ 
  \exists~\vec{X}.~[~W_1, \ldots, ~W_k, ~Q_1, \ldots, ~Q_m~]
\end{array}
$$
in the un-timed system we take the corresponding rule
$$
\begin{array}{l}
 Time@T, ~W_1@T_1, \ldots, ~W_k@T_k, ~F_1@T_1', \ldots, ~F_n@T_n'  ~\lra~ \\
 \quad \exists~\vec{X}.~[~Time@T, ~W_1@T_1, \ldots, ~W_k@T_k, ~Q_1@T, \ldots, ~Q_m@T~]
\end{array}
$$
Notice that above rules have no guards attached and since facts have arbitrary timestamps,  the application of rules in the timed system is independent from the time aspect.

Clearly, the original (untimed) reachability problem has a solution if and only if its associated timed version (as described above) has a solution.
Since the reachability problem for  (untimed) systems with possibly unbalanced actions is undecidable, the reachability problem for timed systems with possibly unbalanced actions is undecidable as well.
\end{proof}

\paragraph{PSPACE-Completeness} We show that the reachability problem for our model with dense time and balanced actions is PSPACE-complete. Interestingly, the same problem is also PSPACE-complete when using models with discrete time~\cite{kanovich12rta}.

 Given the machinery in Section~\ref{sec:circle}, we can re-use many results in the literature to show 
 that the reachability problem is also PSPACE-complete for balanced systems with dense time  
 that can create fresh values, as given in Section~\ref{sec:msr}, assuming an upper bound on the 
 size of facts. For instance, we use the 
 machinery detailed in~\cite{kanovich13ic} to handle the fact that 
 a plan may contain an unbounded number of fresh values.
  
 The PSPACE lower bound can be inferred from~\cite{kanovich13ic}. The interesting bit is to 
 show PSPACE membership of the reachability problem. The following lemma establishes an upper bound
 on the number of different circle-configurations:

 \begin{lemma}\label{Lemma:number of states}
Given a reachability problem $\Tscr$ under a finite alphabet $\Sigma$, 
an upper bound on the size of facts, $\uSize$, and
an upper bound, $\Dmax$, on the numeric values appearing in $\Tscr$, then the
number of different circle-configurations, denoted
by $L_\Tscr(\iSize,\uSize,\Dmax)$, with $\iSize$ facts (counting 
repetitions) is
$$
 L_\Tscr(\iSize,\uSize,\Dmax) \leq J^\iSize  (D + 2 \iSize \uSize)^{\iSize
\uSize} \iSize^\iSize (\Dmax+2)^{(\iSize-1)} ,
$$
where $J$ and $D$ are, respectively, the number of predicate  
and the number of constant and function symbols 
in $\Sigma$.
\end{lemma}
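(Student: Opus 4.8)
The plan is to decompose a circle-configuration into three essentially independent pieces and count each crudely. Recall that circle-configurations are identified up to renaming of nonces, and that a circle-configuration $\tup{\Delta,\Uscr}$ with $\iSize$ facts is completely determined by: (a) the list of the $\iSize$ underlying untimed facts (an over-count of the multiset); (b) the combinatorial skeleton of the $\delta$-configuration $\Delta$, namely the ordered partition of the $\iSize$ positions into integer-part classes together with the at most $\iSize-1$ truncated differences $\delta_{i,i+1}$; and (c) the combinatorial skeleton of the unit circle $\Uscr$, namely the ordered partition of the $\iSize$ positions into decimal-part classes together with the marking of the (possibly empty) zero point. Hence $L_\Tscr(\iSize,\uSize,\Dmax)$ is bounded by the product of the number of choices for (a), (b) and (c), and it suffices to bound each of these three factors.

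First I would bound the number of choices for the facts. A configuration with $\iSize$ facts, each of size at most $\uSize$, contains at most $\iSize\uSize$ symbol occurrences and hence at most $\iSize\uSize$ distinct nonce names; since circle-configurations are counted modulo renaming, I may assume all nonces come from a fixed canonical pool of at most $2\iSize\uSize$ names. A single ground fact of size at most $\uSize$ is then determined by its prefix-notation string: the first symbol is one of the $J$ predicate symbols, and each of the remaining at most $\uSize-1$ symbols is either one of the $D$ constant/function symbols or one of the at most $2\iSize\uSize$ nonce names; the tree shape is recovered for free because the arities are fixed by $\Sigma$, so no Catalan-type factor is incurred, and shorter facts are padded. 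This gives at most $J(D+2\iSize\uSize)^{\uSize-1}$ facts of size at most $\uSize$, hence at most $J^{\iSize}(D+2\iSize\uSize)^{\iSize(\uSize-1)}$ choices for the list of $\iSize$ facts. Note that this leaves a factor $(D+2\iSize\uSize)^{\iSize}$ of slack relative to the claimed $(D+2\iSize\uSize)^{\iSize\uSize}$, which I will spend on the unit-circle count.

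Next I would bound the two skeletons. The $\delta$-configuration partitions the $\iSize$ positions into at most $\iSize$ integer-part classes; encoding this by the function that sends each position to the index in $\{1,\dots,\iSize\}$ of its class gives at most $\iSize^{\iSize}$ possibilities, while the at most $\iSize-1$ truncated differences each range over $\{0,1,\dots,\Dmax\}\cup\{\infty\}$, i.e.\ over at most $\Dmax+2$ values, contributing at most $(\Dmax+2)^{\iSize-1}$. For the unit circle, assigning each position the index in $\{0,1,\dots,\iSize\}$ of its decimal-part class (with $0$ marking the zero point) gives at most $(\iSize+1)^{\iSize}$ possibilities, and since $\uSize\ge 1$ and $D\ge 0$ we have $(\iSize+1)^{\iSize}\le(2\iSize\uSize)^{\iSize}\le(D+2\iSize\uSize)^{\iSize}$, so this factor is absorbed by the slack identified above. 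Multiplying the three bounds then yields
\[
 L_\Tscr(\iSize,\uSize,\Dmax)\ \le\ J^{\iSize}\,(D+2\iSize\uSize)^{\iSize(\uSize-1)}\,(\iSize+1)^{\iSize}\,\iSize^{\iSize}\,(\Dmax+2)^{\iSize-1}\ \le\ J^{\iSize}\,(D+2\iSize\uSize)^{\iSize\uSize}\,\iSize^{\iSize}\,(\Dmax+2)^{\iSize-1},
\]
which is exactly the claimed bound.

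The only step requiring genuine care is the per-fact count: one must use that the arities are fixed by $\Sigma$ to avoid an extra tree-shape factor, and one must keep the exponent at $\uSize-1$ rather than the lazier $\uSize$, since otherwise the $(D+2\iSize\uSize)^{\iSize}$ of slack needed to swallow the unit-circle layouts would not be available. The remaining bookkeeping — the nonce-pool bound, the function encodings of the two ordered partitions over the $\iSize$ positions, and the final multiplication — is routine, and I would present it only in as much detail as is needed to make the constants in the stated bound match.
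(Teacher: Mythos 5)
Your proposal is correct and follows essentially the same route as the paper's own proof: a direct product count over the three components of a circle-configuration (the underlying facts with a canonical pool of at most $2\iSize\uSize$ nonce names, the $\delta$-structure with its at most $\iSize-1$ truncated differences ranging over $\Dmax+2$ values, and the unit circle). The only difference is bookkeeping — the paper spends the factor $\iSize^{\iSize}$ on the unit circle and the full $(D+2\iSize\uSize)^{\iSize\uSize}$ on the symbol slots, whereas you spend $\iSize^{\iSize}$ on the integer-part partition and absorb the unit-circle layouts into the slack left by the sharper exponent $\iSize(\uSize-1)$; both allocations yield the stated bound.
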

\begin{proof} 
A circle-configuration consists of 
a $\delta$-configuration $\Delta$:
$$
 \Delta = \left\langle \
\{Q_{1}^1, \ldots, Q_{m_1}^1\}, \delta_{1, 2}, ~\{Q_{1}^2, \ldots, Q_{m_2}^2\}, \ldots ,    ~\delta_{j-1, j}, ~\{Q_{1}^j, \ldots, Q_{m_j}^j\} \ 
 \right\rangle
$$
 and unit circle $\Uscr$:
$$
\Uscr = [~\{Q_{1}^0, \ldots, Q_{m_0}^0\}_\Zscr, ~\{Q_{1}^1, \ldots, Q_{m_1}^1\}, \ldots, ~\{Q_{1}^j, \ldots, Q_{m_j}^j\}~] \ .
$$

In each component, $\Delta$ and $\Uscr$,  there are $m$ facts, therefore there are  $\iSize$ slots for predicate names and at most $\iSize \uSize$ slots for  constants and function symbols. Constants can be either constants in the initial alphabet 
$\Sigma$ or names for fresh values (nonces). Following \cite{kanovich13ic} and Definition \ref{def:equivalence}, we need to consider only $2\iSize \uSize$ names for fresh values (nonces). Finally, only
time differences up to $\Dmax$ have to be considered together with the symbol $\infty$, and there
are at most $\iSize-1$ slots for time
differences $\delta_{i,j}$ in 
$\Delta$. 

Finally, for each $\delta$-configuration, there are at most $\iSize^\iSize$ unit circles as for each 
fact $F$ we can assign a class, $\Uscr(F)$, and there are at most $\iSize$ classes. 
\end{proof}

Intuitively, our upper bound algorithm keeps track of
the length of the plan it is constructing and if its length exceeds
$L_\Tscr(\iSize,\uSize,\Dmax)$, then
it knows that it has reached the same circle-configuration twice.
Hence, there is a loop in the plan which can be avoided. If there is a solution plan of length that exceeds 
$L_\Tscr(\iSize,\uSize,\Dmax)$,
there is also a shorter plan that is the solution to the same reachability problem. Therefore, we can nondeterministically search for plans of bounded length.
 This search
is possible in PSPACE since the above number,
when stored in binary, occupies only polynomial  space with respect
to its parameters.

For the given reachability problem  $\Tscr$, where $m$ is the number of facts in the initial configuration and $k$ an upper bound on the size of facts, assume functions  $\Gscr$ and $\Rscr$ that return the value 1 in 
Turing space bounded by a polynomial in ~$m, k$ and $log_2(\Dmax)$ when given as input, respectively,  a circle-configuration that is a goal circle-configuration, and a pair of a circle-configuration and a transition that is valid, \ie,~an instance of an action in $\Tscr$ that is applicable to the given circle-configuration, and return 0 otherwise.

\vspace{2mm}
\begin{theorem}
\label{th:PSPACE}
  Let $\Tscr$ be a reachability problem with balanced actions, 
  $\Sscr_0$ be an initial configuration with exactly $m$ facts, 
  $\Dmax$ be an upper bound on the numeric values appearing in $\Tscr$,
  $k$ an upper bound on the size of facts and let 
  $\Gscr$ and $\Rscr$ be functions as described above.
  Then there is an algorithm that given an initial configuration $\Sscr_0$ decides
  the problem $\Tscr$ and the algorithm runs in polynomial space with respect to 
  $m,k$ and $log_2(\Dmax)$.
 \end{theorem}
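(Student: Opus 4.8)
The plan is to exhibit a nondeterministic algorithm that searches for a plan over circle-configurations rather than over concrete configurations, and then invoke Savitch's theorem to obtain a deterministic PSPACE procedure. First I would recall that, by Theorem~\ref{th:circle}, the original reachability problem $\Tscr$ is solvable from $\Sscr_0$ if and only if the circle-configuration $\Ascr_0$ corresponding to $\Sscr_0$ reaches some goal circle-configuration $\Ascr_G$ using the compiled rule set $\widetilde{\Rscr} = \{[r] : r \in \Rscr\} \cup Next$. So it suffices to decide circle-configuration reachability in polynomial space. Since all actions are balanced, every circle-configuration along any plan has exactly $m$ facts, so by Lemma~\ref{Lemma:number of states} each such circle-configuration can be written down in space polynomial in $m$, $k$ and $\log_2(\Dmax)$ --- it consists of at most $m$ predicate slots, at most $mk$ term slots over an alphabet of at most $D + 2mk$ symbols, at most $m-1$ truncated time differences each in $\{0,1,\dots,\Dmax,\infty\}$ (storable in $\log_2(\Dmax+2)$ bits), and the unit-circle class assignment (one of at most $m$ values per fact).

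Next I would describe the nondeterministic search procedure. Starting from $\Ascr_0$, the algorithm repeatedly guesses a rule $a \in \widetilde{\Rscr}$ together with the needed matching data (which facts of the current circle-configuration are consumed, and the fresh names for existential variables, chosen from the bounded pool of $2mk$ nonce names as in \cite{kanovich13ic}), uses the oracle function $\Rscr$ to check in polynomial space that this transition is valid and applicable, and rewrites the current circle-configuration to its successor as prescribed in Section~\ref{sec:circle} (updating the $\delta$-configuration and the unit circle; note the successor again has $m$ facts and each such step is polynomial-time, hence polynomial-space). After each step the algorithm applies $\Gscr$ to test whether the current circle-configuration is a goal circle-configuration; if so it accepts. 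Crucially, the algorithm also maintains a binary step counter, incremented at each transition: if the counter exceeds $L_\Tscr(m,k,\Dmax)$ from Lemma~\ref{Lemma:number of states}, the algorithm rejects this branch. The counter is bounded by that quantity, which when stored in binary occupies space polynomial in $m$, $k$ and $\log_2(\Dmax)$, so the whole computation uses only polynomial space: the current circle-configuration, the guessed transition data, and the counter are all of polynomially bounded size, and nothing else needs to be retained.

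To argue correctness of the truncation, I would note that if there is any plan over circle-configurations from $\Ascr_0$ to a goal circle-configuration, then there is one of length at most $L_\Tscr(m,k,\Dmax)$: a longer plan would, by pigeonhole, revisit some circle-configuration, and the loop between the two visits can be excised to yield a strictly shorter plan reaching the same goal. Hence the bounded nondeterministic search is complete, and it is trivially sound. This shows the problem is in $\mathsf{NPSPACE}$, and by Savitch's theorem $\mathsf{NPSPACE} = \mathsf{PSPACE}$, giving a deterministic algorithm running in polynomial space with respect to $m$, $k$ and $\log_2(\Dmax)$, as claimed.

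The main obstacle I anticipate is the bookkeeping around fresh values: one must be careful that restricting the nonce pool to a polynomial number of names (rather than genuinely fresh ones) does not lose solutions. This is exactly the point handled in \cite{kanovich13ic} together with the renaming-invariance of goals and Definition~\ref{def:equivalence}, so I would cite that machinery rather than reprove it; the remaining verification --- that each single rewrite step over circle-configurations is polynomial-space computable and preserves the fact count --- is routine given the explicit constructions in Section~\ref{sec:circle}.
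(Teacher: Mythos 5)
Your proposal is correct and follows essentially the same route as the paper's own proof: a nondeterministic search over circle-configurations with a binary step counter bounded by $L_\Tscr(\iSize,\uSize,\Dmax)$ from Lemma~\ref{Lemma:number of states}, polynomial-space storage of each circle-configuration justified by balancedness and the bounds on fact size and truncated time differences, the loop-excision argument for completeness of the truncated search, handling of fresh values via the bounded nonce pool of \cite{kanovich13ic}, and finally Savitch's theorem. No gaps to report.
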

\begin{proof}
Let $m$  be the number of facts in the initial configuration $\Sscr_I$,
$k$  an upper bound on the size of facts,  $\Dmax$ a natural number that is an upper bound on the numeric values appearing in the
reachability problem $\Tscr$, that is in the timestamps of the initial configuration, or the $D$s and $D_i$s in constraints or actions of the given reachability problem.

We modify the algorithms given in \cite{kanovich13ic} and in \cite{kanovich15mscs}  in order to  accommodate explicit dense time. 
The algorithm 
must accept whenever there is a sequence of actions from $\Tscr$ leading  from the
initial configuration $\Sscr_I$ to a goal configuration.
In order to do so, we construct an algorithm that searches non-deterministically whether such a plan exists. 
Then we apply Savitch's Theorem to determinize this algorithm. 

A crucial point here is that instead of searching for a plan using
concrete values, we rely on the equivalence among configurations given in Definition \ref{def:equivalence}
and use circle-configurations only. Theorem~\ref{th:circle}
guarantees that this abstraction is sound and faithful. 

Let $i$ be a natural number such that \ $0 \leq i \leq L_T(m,k ,\Dmax)$.

The algorithm begins with $\Ascr_0$ set to be the circle-configuration of $\Sscr_I$ and iterates the following sequence of operations:

\begin{enumerate}
 \item If $\Ascr_i$ is representing a goal configuration, \ie,
if $\Gscr(\Ascr_i) = 1$, then return ACCEPT; otherwise continue;
  \item If $i > L_T(m,k, \Dmax)$, then FAIL;
else continue;
 \item Guess non-deterministically an action, $r$, from $\Tscr$ applicable to $\Ascr_i$,
\ie, such an action $r$ that  $\Rscr(\Ascr_i, r) = 1$. If no such action exists, then
return FAIL. Otherwise 
replace $\Ascr_i$ with the  circle-configuration $\Ascr_{i+1}$
resulting from applying the action $r$ to the circle-configuration
$\Ascr_i$. 
  \item Set i = i + 1.
\end{enumerate}

We now show that this algorithm runs in polynomial space.
We start with the step-counter $i$:
The greatest number reached by this counter is $L_T(m,k,\Dmax)$. 
When stored in binary encoding, this number only takes 
space that is polynomial in the given inputs:
\[
\begin{array}[]{lcl}
 \log(L_T(m,k,\Dmax)) \leq 
m\log(J) +  mk\log(D + 2 mk) + m \log m +
(m-1)\log(\Dmax + 2) .
\end{array}
\]
Therefore, one only needs polynomial space to store the values of the 
step-counter.

We must also be careful to check 
that any circle-configuration, $W_i = \tup{\Delta, \Uscr}$, 
 where
$$
\begin{array}{l}
\, \Delta = \left\langle \
 \{Q_{1}^1, \ldots, Q_{m_1}^1\}, ~\delta_{1, 2}, ~\{Q_{1}^2, \ldots, Q_{m_2}^2\}, \ldots ,    ~\delta_{k-1, k}, ~\{Q_{1}^k, \ldots, Q_{m_k}^k\} \
 \right\rangle \ ,
\\
\Uscr = [ \ \{Q_{1}^0, \ldots, Q_{m_0}^0\}_\Zscr, ~\{Q_{1}^1, \ldots, Q_{m_1}^1\}, \ldots, ~\{Q_{1}^j, \ldots, Q_{m_j}^j\} \ ] 
\end{array}
$$
can be stored in space that is polynomial to the given inputs. 
Since our system is balanced, the size of facts is bounded, and the
values of the truncated time differences, $\delta_{i,j}$, are bounded, 
it follows that the size of any circle-configuration $\tup{\Delta, \Uscr}$
 in a plan is polynomially bounded with respect to \ $m,k$ and $\log(\Dmax + 2) .$

Finally, the algorithm needs to keep track of the action $r$ guessed when 
moving from one circle-configuration to another, and for the scheduling of a plan.
It has to store the action that has been used at the $i^{th}$ step. Since 
any action can be stored by remembering two circle-configurations, 
one can also store these actions in space polynomial to the inputs.
\end{proof}

\section{Related and Future Work}
\label{sec:related}

The formalization of timed models and their use in the analysis of cyber-physical security protocols has already been investigated.  
We review this literature.

Meadows \etal~\cite{meadows07booktitle} and Pavlovic and Meadows in \cite{pavlovic09spw} propose and use a logic called Protocol Derivation Logic (PDL) to formalize and prove the safety of a number of cyber-physical protocols. In particular, they specify the assumptions and protocol executions in the form of axioms, specifying the allowed order of events that can happen, and show that safety properties are implied by the axiomatization used. They do not formalize an intruder model. Another difference from our work is that their PDL specification is not an executable specification, while we have implemented our specification in  Maude~\cite{clavel-etal-07maudebook}. Finally, they do not investigate the complexity of protocol analysis nor investigate the expressiveness of formalizations using discrete and continuous time.

Another approach similar to \cite{meadows07booktitle} in the sense that it uses a theorem proving approach is given by Schaller \etal~\cite{basin11iss}. They formalize an intruder model and some cyber-physical security protocols  in Isabelle. They then prove the correctness of these protocols under some specific conditions and also identify attacks when some conditions are not satisfied.  Their work was a source of inspiration for our intruder model specified in~\cite{kanovich14fccfcs}, which uses the model described in Section~\ref{sec:msr}. Although their model includes time, their model is not refined enough to capture the attack in-between-ticks as they do not consider the discrete behavior of the verifier. 

Boureanu \etal~\cite{boureanu13iacr} proposed a 
discrete time model for formalizing distance bounding protocols and their security requirements. Thus they are more interested in the computational soundness of distance bounding protocols by considering 
an adversary model based on probabilistic Turing machines. They claim that their SKI protocol is secure against a number of attacks. However, their time model is discrete where all players are running at the same clock rate. Therefore, their model is not able to capture attacks that exploit the fact that players might run at different speeds.

Pavlovic and Meadows \cite{pavlovic10mfps} construct a probabilistic model for analysing distance bounding protocols against guessing attacks. The nature of the attack we consider and the attack considered in~\cite{pavlovic10mfps} are quite different leading to different probabilistic models with different goals.

The Timed Automata~\cite{alur04sfm} (TA) literature contains models for cyber-physical protocol analysis. Corin \etal~\cite{corin07jcs} formalize protocols and the standard Dolev-Yao intruder as timed automata and demonstrate that these can be used for the analysis. They are able to formalize the generation of nonces by using timed automata, but they need to assume that there is a bound on the number of nonces. This means that they assume a bound on the total number of protocol sessions. Our model based on rewrite theory, on the other hand, allows for an unbounded number of nonces, even in the case of balanced theories~\cite{kanovich13ic}.
Also they do not investigate the complexity  of the analysis problems nor the expressiveness difference between models with discrete 
and continuous time. Lanotte \etal~\cite{lanotte10actinf} specify cyber-physical protocols, but protocols where messages can be re-transmitted or alternatively a protocol session can be terminated, \ie, timeouts, in case a long time time elapses. They formalize the standard Dolev-Yao intruder. Finally, they also obtain a decidability result for their formalism and an EXPSPACE-hard lower bound for the reachability problem. It seems possible to specify features like timeouts and message re-transmission, in our rewriting
formalism. 

We also point out some important differences between our PSPACE-completeness proof and PSPACE-completeness proof for timed automata~\cite{alur04sfm}. A more detailed account can be found in the Related Work section of \cite{kanovich15mscs}. The first difference is that we do not impose any bounds on the number of nonces created, while the TA proof normally assumes a bound. The second difference is due to the first-order nature of rewrite rules. The encoding of a first-order system in TA leads to an exponential blow-up on the number of states of the automata as one needs take into account all instantiations of rules. Finally, the main abstractions that we use, namely circle-configurations, are one-dimensional, while regions used in the TA PSPACE proof are multidimensional.

Malladi \etal~\cite{malladi10corr} formalize distance bounding protocols in strand spaces. 
They then construct an automated tool for protocol analysis using a constraint solver. They did not take into account the fact that the verifier is running a clock in their analysis and therefore are not able to detect the attack in-between-ticks.

Cheval and Cortier~\cite{cheval15post} propose a way to prove the properties based on the observational equivalence of processes taking account the time processes execute with time by reducing this problem to the observational equivalence based on the length of inputs. They are able to automatically show that RFID protocols used by passports suffer a privacy attack. While the problems considered are different, it seems possible to use our machinery to prove complexity results on problems based on observational equivalence. We leave this as future work.

Nigam \etal~\cite{nigam16esorics} investigated further the properties of timed intruder models formalized in a language similar to our timed MSR framework. It is possible to infer an upper-bound on the number of time intruders to consider for cyber-physical security protocol verification. Based on this result, they implemented in Maude, also using SMT solvers, a prover that automates such a verification of protocol security including the attack in-between-ticks introduced here. 

Finally, \cite{cremers12oakland} introduces a taxonomy of attacks on distance bounding protocols, which include a new attack called Distance Hijacking Attack. This attack was caused by failures not in the time challenges phase of distance bounding protocols, but rather in the authentication phases. It would be interesting to understand how these attacks can be combined with the attack in-between-ticks to build more powerful attacks. 

As future work, we also intend to investigate the challenge-response approach in providing security of distance bounding protocols by expanding our probabilistic analysis to such a scenario. 
Repeating a number of challenge and response rounds in distance bounding protocols is generally  believed to mitigate the chances of an attack occurring. 
In particular, we are planning to analyze  whether the effects of the attack in-between-ticks can be reduced by repeated challenge-response rounds of protocols. Moreover, we are investigating improvements on our implementation in order to check for a wider number of attacks automatically.

\emph{Acknowledgments:}
Nigam is supported by the Brazilian
Research Agencies CNPq and Capes.
Talcott is partially supported by NSF grant CNS-1318848, by a grant from ONR and by a grant from Capes Science without Borders.
Scedrov is supported in part by the
AFOSR MURI ``Science of Cyber Security: Modeling, Composition,
and Measurement''.
Additional support for Scedrov from ONR. Part of
the work was done while Kanovich and Scedrov were
visiting the National Research University Higher School of
Economics, Moscow. They would like to thank Sergei O.
Kuznetsov for providing a very pleasant environment for work. 
This article was prepared within the framework of the Basic Research Program at the National Research University Higher School of Economics (HSE) and supported within the framework of a subsidy by the Russian Academic Excellence Project '5-100'.
We would like to thank Robin Pemantle for discussions, suggestions, and advice regarding the probabilistic analysis.

\bibliographystyle{abbrv}

\vspace{-4mm}

\end{document}